
\documentclass[twoside,11pt]{article}
\usepackage[hmargin=1in,vmargin=1in]{geometry}
\usepackage{jeffe,graphicx}
\usepackage[utf8x]{inputenc}
\usepackage{microtype}
\usepackage{cite}
\usepackage{enumerate}

\usepackage[charter]{mathdesign}
\usepackage[scaled=.96,osf]{XCharter}
\usepackage{sourcesanspro,inconsolata,eucal}
\SetMathAlphabet{\mathsf}{bold}{\encodingdefault}{\sfdefault}{b}{\updefault}
\SetMathAlphabet{\mathtt}{bold}{\encodingdefault}{\ttdefault}{b}{\updefault}
\SetMathAlphabet{\mathsf}{normal}{\encodingdefault}{\sfdefault}{\mddefault}{\updefault}
\SetMathAlphabet{\mathtt}{normal}{\encodingdefault}{\ttdefault}{\mddefault}{\updefault}
\usepackage{textcomp}

\newtheorem{lemma}{Lemma}[section]
\newtheorem{theorem}[lemma]{Theorem}
\newtheorem{corollary}[lemma]{Corollary}

\numberwithin{figure}{section}

\def\Sym#1{\texttt{#1}}						
\def\Inverse#1{\={#1}}						

\def\Head{\operatorname{\mathit{head}}}		
\def\Tail{\operatorname{\mathit{tail}}}		
\def\Left{\operatorname{\mathit{left}}}		
\def\Right{\operatorname{\mathit{right}}}	
\def\bdry{\partial\!}

\makeatletter
\newcommand{\smash@bar}[4]{%
  \smash{\rlap{\raisebox{-#3\fontdimen5#10}{$\m@th#2\mkern#4mu\mathchar'26$}}}%
}
\newcommand{\lambdabar}{{\mathchoice
  {\smash@bar\textfont\displaystyle{0.25}{2.75}\lambda}
  {\smash@bar\textfont\textstyle{0.25}{2.75}\lambda}
  {\smash@bar\scriptfont\scriptstyle{0.25}{2.75}\lambda}
  {\smash@bar\scriptscriptfont\scriptscriptstyle{0.25}{2.75}\lambda}
}}
\makeatother

\def\Walk{\omega}
\def\PerturbedWalk{\widetilde\omega}
\def\Radius{\rho}
\def\Relevant#1{\overline{#1}}
\def\Lift#1{\widehat{#1}}
\def\Tiling{\Lift{Q}}
\def\Grammar{\mathcal{G}}

\usepackage{arydshln}
\dashlinedash 0.75pt
\dashlinegap 1.5pt

\clubpenalty 8000
\widowpenalty 8000

\let\Newpage\relax

\begin{document}

\pagestyle{myheadings}
\markboth{Topologically Trivial Closed Walks in Directed Surface Graphs}
		{Jeff Erickson and Yipu Wang}

\begin{titlepage}

\title{Topologically Trivial Closed Walks in Directed Surface Graphs%
\thanks{Research on this paper was partially supported by NSF grant CCF-1408763.  An extended abstract of this paper will be presented at the 33rd International Symposium on Computational Geometry \cite{ccwx}.}}

\author{\href{http://jeffe.cs.illinois.edu}{Jeff Erickson} and
		\href{http://ywang298.web.engr.illinois.edu/}{Yipu Wang}\\[1ex]
University of Illinois at Urbana-Champaign}

\maketitle

\begin{bigabstract}
Let $G$ be a directed graph with $n$ vertices and $m$ edges, embedded on a surface $S$, possibly with boundary, with first Betti number $\beta$.  We consider the complexity of finding closed directed walks in $G$ that are either contractible (trivial in homotopy) or bounding (trivial in integer homology) in $S$.  Specifically, we describe algorithms to determine whether $G$ contains a simple contractible cycle in $O(n+m)$ time, or a contractible closed walk in $O(n+m)$ time, or a bounding closed walk in $O(\beta (n+m))$ time.  Our algorithms rely on subtle relationships between strong connectivity in $G$ and in the dual graph $G^\star$; our contractible-closed-walk algorithm also relies on a seminal topological result of Hass and Scott.  We also prove that detecting simple bounding cycles is NP-hard.

We also describe three polynomial-time algorithms to compute shortest contractible closed walks, depending on whether the fundamental group of the surface is free, abelian, or hyperbolic.  A key step in our algorithm for hyperbolic surfaces is the construction of a context-free grammar with $O(g^2L^2)$ non-terminals that generates all contractible closed walks of length at most $L$, and only contractible closed walks, in a system of quads of genus $g\ge 2$.  Finally, we show that computing shortest simple contractible cycles, shortest simple bounding cycles, and shortest bounding closed walks are all NP-hard. 
\end{bigabstract}

\bigskip
\begin{rightquote}{0.5}
So they did what any savvy business would do.\\
They hired a consultant.  They brought in a contractor.\\
I'm sorry, not a \textbf{con}tractor---a con\textbf{trac}tor.\\
A man who made words smaller by combining them\\ or apostrophizing them.
\quotee{\href{https://youtu.be/dLECCmKnrys}
	{Gary Gulman, \emph{Conan}, July 13, 2016}}
\end{rightquote}

%
%
%

\thispagestyle{empty}
\setcounter{page}{0}
\end{titlepage}

%


\section{Introduction}

%

A key step in several algorithms for surface graphs is finding a shortest closed walk and/or simple cycle in the input graph with some interesting topological property.  There is a large body of work on finding short interesting walks and cycles in undirected surface graphs, starting with Thomassen's seminal \emph{3-path condition} \cite{t-egnsn-90,mt-gs-01}.  For example, efficient algorithms are known for computing shortest non-contractible and non-separating cycles \cite{schema,k-csnco-06,cm-fsnsn-07,multishort,holiest}, shortest contractible closed walks \cite{ccl-fsncd-16}, simple cycles that are shortest in their own homotopy class \cite{tight}, and shortest closed walks in a \emph{given} homotopy  \cite{octagons} or homology class \cite{homcover}, and for detecting simple cycles that are either contractible, non-contractible, or non-separating \cite{ccl-fctpe-11}.  On the other hand, several related problems are known to be NP-hard, including computing shortest splitting closed walks \cite{splitting}, computing shortest separating cycles \cite{c-fscss-10}, computing shortest closed walks in a given homology class \cite{surfcut}, and deciding whether a surface graph contains a simple separating or splitting cycle \cite{ccl-fctpe-11}.

Directed surface graphs are much less understood, in part because they do not share  convenient properties of undirected graphs, such as Thomassen's 3-path condition \cite{t-egnsn-90,mt-gs-01}, or the assumption that if shortest paths are unique, then two shortest paths cross at most once \cite{wobble}.  The first progress in this direction was a pair of algorithms by Cabello, Colin de Verdière, and Lazarus \cite{ccl-fsncd-10}, which compute shortest non-contractible and non-separating cycles in directed surface graphs, with running times $O(n^2\log n)$ and $O(g^{1/2}n^{3/2}\log n)$.  Erickson and Nayyeri described an algorithm to compute the shortest non-separating cycles in $2^{O(g)}n\log n$ time~\cite{homcover}.  Later Fox \cite{f-sntcd-13} described algorithms to compute shortest non-contractible cycles in $O(\beta^3 n\log n)$ time and shortest non-separating cycles in $O(\beta^2 n\log n)$ time on surfaces with first Betti number $\beta$.  (For all these bounds, the input size $n$ is the total number of vertices, edges, and faces of the input graph.)

\medskip
This paper describes the first algorithms and hardness results for finding topologically \emph{trivial} closed walks in directed surface graphs.  Our results  extend similar results of Cabello, Colin de Verdière, and Lazarus \cite{c-fscss-10, ccl-fctpe-11} for undirected surface graphs; however, our algorithms require several new techniques, both in design and analysis.  (On the other hand, our NP-hardness proofs are actually simpler than the corresponding proofs for undirected graphs!)

We present results for eight different problems, determined by three independent design choices.   First, we consider two types of “trivial” closed walks: \emph{contractible} walks, which can be continuously deformed to a point, and \emph{bounding} walks, which are weighted sums of face boundaries.  Simple bounding cycles are also called \emph{separating} cycles.  (See Section~\ref{S:back} for more detailed definitions.)  Second, like Cabello~\etal~\cite{c-fscss-10, ccl-fsncd-10, ccl-fctpe-11}, we carefully distinguish between closed walks and simple cycles throughout the paper.  Finally, we consider two different goals: deciding whether a given directed graph contains a trivial cycle or closed walk, and finding the shortest trivial cycle or closed walk in a given directed graph (possibly with weighted edges).  Crucially, our algorithms do \emph{not} assume that the faces of the input embedding are open disks.  Our results are summarized in Table \ref{T:results}.

\begin{table}[ht]
\centering\footnotesize\sffamily
\def\arraystretch{1.25}
\begin{tabular}{c:c|c:c}
\textbf{Structure}		& \textbf{Surface} & \textbf{Any} & \textbf{Shortest} \\
\hline
Simple contractible cycle 
		& & $O(n)$	& NP-hard \\
%
\hdashline
Contractible closed walk
		& annulus		& $O(n)$	& $O(n^2\log\log n)$  \\ 
		& torus			& $O(n)$	& $O(n^3\log\log n)$  \\ 
		& with boundary	& $O(n)$	& $O(\beta^5 n^3)$ \\ 
		& other			& $O(n)$	& $O(\beta^6 n^9)$ \\ 
\hline
Simple bounding cycle 
		& & NP-hard & NP-hard  \\ 
%
\hdashline
Bounding closed walk 
		& & $O(\beta n)$	& NP-hard \\
\end{tabular}
\caption{Our results; $\beta$ is the first Betti number of the underlying surface.}
\label{T:results}
\end{table}

In Section \ref{S:ccw}, we describe linear-time algorithms to determine whether a directed surface graph contains a simple contractible cycle or a contractible closed walk, matching similar algorithms for undirected graphs by Cabello \etal~\cite{ccl-fctpe-11}.  Our algorithms are elementary: After removing some obviously useless edges, we report success if and only if some face of the embedding has a (simple) contractible boundary.  However, the proofs of correctness require careful analysis of the dual graphs, and the correctness proof for contractible closed walks relies on a subtle topological lemma of Hass and Scott \cite{hs-ics-85}.  We emphasize that these problems are nontrivial for directed graphs \emph{even if} every face of the input embedding is a disk; see Figure \ref{F:torus-grid}.

\begin{figure}[ht]
\centering
\includegraphics[scale=0.4]{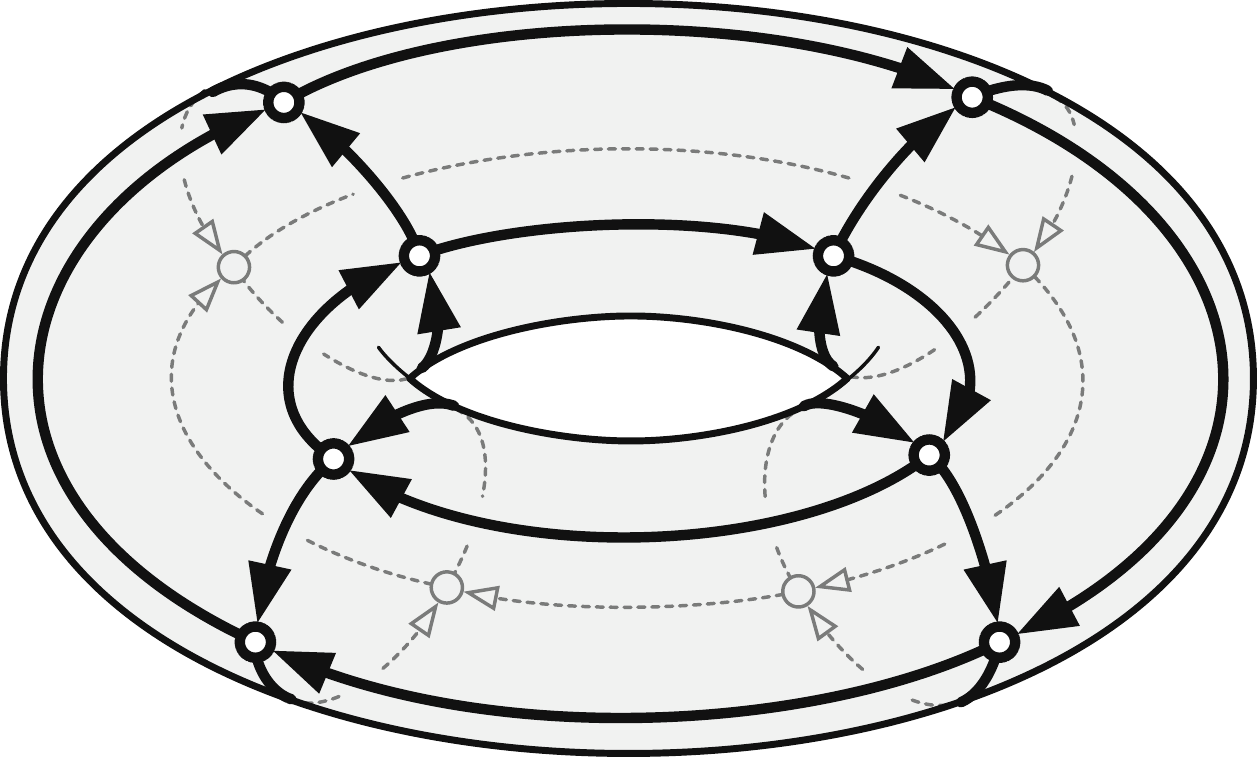}
\caption{A cellularly embedded directed graph with no contractible or bounding closed walks.}
\label{F:torus-grid}
\end{figure}

In Section \ref{S:bcw}, we describe an algorithm to determine whether a directed surface graph contains a \emph{bounding} closed walk in $O(\beta n)$ time.\footnote{This problem is straightforward for undirected surface graphs, even if we forbid bounding walks with spurs.  A connected undirected surface graph supports a spur-free bounding walk if and only if it has either (1) more than one face or (2) more edges than vertices.}  Our algorithm exploits a careful analysis of the interplay between strong connectivity in the input graph $G$ and its dual graph $G^\star$.  With some additional effort, our algorithm can return an explicit description of a bounding closed walk in $O(n^2)$ time if one exists; we prove that this quadratic upper bound is optimal.  This problem can also be reduced to finding zero cycles in periodic (or “dynamic”) graphs \cite{is-scpzs-90,ks-dcdgp-88,cm-spadc-93}.  A periodic graph is a graph whose edges are labeled with integer vectors; a zero cycle is a closed walk whose edge labels sum to the zero vector.  However, all algorithms known for finding zero cycles rely on linear programming, and thus are much more complex and much less efficient than the specialized algorithm we present.  

In Section \ref{S:NP-hard}, we prove that detecting simple bounding cycles, finding shortest simple contractible cycles, and finding shortest bounding closed walks are all NP-hard.  Cabello \cite{c-fscss-10} described an algorithm to compute the shortest simple contractible cycle in an \emph{undirected} surface graph in $O(n^2 \log n)$ time; thus, our reduction for that problem makes essential use of the fact that the input graph is directed.  Cabello and \cite{c-fscss-10} and Cabello \etal~\cite{ccl-fctpe-11} proved that the other two problems are NP-hard in undirected surface graphs.  Our NP-hardness proofs closely follow theirs but are slightly simpler.

Finally, in Sections \ref{S:short-con-easy} and \ref{S:short-con-hyperbolic}, we describe three polynomial-time algorithms to compute shortest contractible closed walks.  Each of our algorithms is designed for a different class of surfaces, depending whether the surface's fundamental group is abelian (the annulus and the torus), free (any surface with boundary), or hyperbolic (everything else).  Our algorithm for the annulus and torus uses a standard covering-space construction, together with a recent algorithm of Mozes \etal\ \cite{mnnw-mdpgo-18}.  For graphs on surfaces with boundary, we exploit the fact that the set of trivial words for any finitely-generated free group is a context-free language \cite{s-cbcgg-76,ms-gtecl-83}; this observation allows us to reduce to a small instance of the \emph{CFG shortest path} problem \cite{y-gmdt-90, bjm-flcpp-00}.  Our algorithm for hyperbolic surfaces also reduces to CFG shortest paths; the main technical hurdle is the construction of an appropriate context-free grammar.  Specifically, for any  integers $g\ge 2$ and $L \ge 1$, we construct a context-free grammar with $O(g^2L^2)$ nonterminals, in Chomsky normal form, that generates all contractible closed walks of length $L$, and only contractible closed walks, in a canonical genus-$g$ surface map called a \emph{system of quads} \cite{lr-hts-12, ew-tcsr-13}.  Our grammar construction exploits classical geometric properties of hyperbolic tilings: linear isoperimetry~\cite{d-tkzf-12,d-pgtt-87} and exponential growth~\cite{m-ncfg-68}.
%

\Newpage
\section{Background}\label{S:back}

\subsubsection*{Directed Graphs}

Let $G$ be an arbitrary directed graph, possibly with loops and parallel edges.  Each edge of $G$ is directed from one endpoint, called its \EMPH{tail}, to the other endpoint, called its \EMPH{head}.  An edge is a \EMPH{loop} if its head and tail coincide.  At the risk of confusing the reader, we sometimes write $\arc{u}{v}$ to denote an edge with tail $u$ and head $v$.

A \EMPH{walk} in $G$ is an alternating sequence of vertices and edges $v_0\arcto v_1 \arcto \cdots \arcto v_\ell$, where $v_i\arcto v_{i+1}$ is an edge in $G$ for each index $i$; this walk is \EMPH{closed} if $v_0 = v_\ell$.  A \EMPH{simple cycle} is a closed walk that visits each vertex at most once.  The \EMPH{concatenation} of two walks $\Walk = v_0 \arcto \cdots \arcto v_{k-1} \arcto v_k$ and $\Walk' = v_k \arcto v_{k+1} \cdots \arcto v_\ell$ is the walk \EMPH{$\Walk\cdot \Walk'$} $:= v_0 \arcto \cdots v_{k-1} \arcto v_k \arcto v_{k+1} \arcto \cdots \arcto v_\ell$.

An \EMPH{edge cut} in a directed graph $G$ is a nonempty subset $X$ of edges such that $G\setminus X$ has two components, one containing the tails of edges in $X$, and other containing the heads of edges in $X$.  A directed graph is \emph{strongly connected} if it contains a directed walk from any vertex to any other vertex, or equivalently, if it contains no edge cuts.

An (integer) \EMPH{circulation} in a directed graph $G$ is a function $\phi\colon E(G)\to\N$ that satisfies a balance constraint $\sum_{\arc{u}{v}} \phi(\arc{u}{v}) = \sum_{\arc{v}{w}} \phi(\arc{v}{w})$ for every vertex $v$.  The \EMPH{support} of a circulation $\phi$ is the subset of edges $e$ such that $\phi(e)>0$.  An \EMPH{Euler tour} of a circulation $\phi$ is a closed walk that traverses each edge $e$ exactly $\phi(e)$ times; such a walk exists if and only if the support of $\phi$ is connected.

\subsubsection*{Surfaces, Embeddings, and Duality}

A \EMPH{surface} is a 2-manifold, possibly with boundary.  A surface is \EMPH{orientable} if it does not contain a Möbius band; we explicitly consider only orientable surfaces in this paper.%
\footnote{As in previous papers \cite{dehn,octagons,lr-hts-12}, all of the problems we consider can be solved for graphs on nonorientable surfaces, with similar running times, by lifting to the oriented double cover.}
A \EMPH{closed curve} on a surface~$S$ is (the image of) a continuous map $\gamma\colon S^1\into S$; a closed curve is \EMPH{simple} if this map is injective.  The \EMPH{boundary $\bdry S$} of $S$ consists of disjoint simple closed curves;the \EMPH{interior} of $S$ is the complement $S^\circ = S\setminus \bdry S$.  The \EMPH{genus} of $S$ is the maximum number of disjoint simple closed curves in $S^\circ$ whose deletion leaves the surface connected.  Up to homeomorphism, there is exactly one orientable surface $S$ with genus $g$ and~$b$ boundary cycles, for any non-negative integers $g$ and $b$.  The \EMPH{first Betti number} of $S$ is either $2g$ if $b=0$, or $2g+b-1$ if $b>0$.

An \EMPH{embedding} of a graph $G$ on a surface $S$ is a continuous map that sends vertices of $G$ to distinct points in~$S^\circ$, and sends edges to interior-disjoint simple paths in $S^\circ$ from their tails to their heads.  In particular, if $G$ contains two anti-parallel edges $\arc{u}{v}$ and its reversal $\arc{v}{u}$, those edges are embedded as interior-disjoint paths.  The embedding of $G$ maps every (simple) closed walk in~$G$ to a (simple) closed curve in $S^\circ$; we usually do not distinguish between a closed walk in $G$ and its image in~$S$.

We explicitly consider graphs with loops and parallel edges; however, without loss of generality, we assume that no loop edge is the boundary of a disk, and no two parallel edges are the boundary of a disk.  (That is, no edge is contractible, and no two edges are homotopic.)  With this assumption, Euler's formula implies that a graph with $n$ vertices on a surface with first Betti number $\beta$ has at most $O(n+\beta)$ edges.

A \EMPH{face} of the embedding is a component of the complement of the image of the graph.  An embedding is \EMPH{cellular} if every face is homeomorphic to an open disk.  \emph{Unlike most previous papers, we explicitly consider non-cellular graph embeddings}; a single face may have disconnected boundary and/or positive genus.  Each directed edge $e$ in a surface graph lies on the boundary of two (possibly equal) faces, called the \EMPH{left shore} and \EMPH{right shore} of $e$.  We sometimes write \EMPH{$f\fenceup f'$} to denote a directed edge whose left shore is $f$ and whose right shore is $f'$.  A \EMPH{boundary face} of the embedding is any face that intersects the boundary of $S$.

A \EMPH{dual walk} is an alternating sequence of faces and edges $f_0\fenceup f_1 \fenceup \cdots \fenceup f_\ell$, where $f_i\fenceup f_{i+1}$ is an edge in~$G$ for each index $i$.  A dual walk is \EMPH{closed} if its initial face $f_0$ and final face $f_\ell$ coincide; a dual walk is \EMPH{simple} if the faces $f_i$ are distinct (except possibly $f_0=f_\ell$).  A simple closed dual walk is called a \EMPH{cocycle}.  Every minimal edge cut in a surface graph is the disjoint union of at most $g+1$ cocycles; in particular, every minimal edge cut in a \emph{planar} directed graph is a cocycle and vice versa.

Every surface embedding of a directed graph $G$ defines a directed \EMPH{dual graph $G^\star$}, with one vertex~$f^\star$ for each face $f$ of $G$, and one edge $e^\star$ for each edge $e$ of~$G$.  Specifically, for any edge $e$ in $G$, the corresponding dual edge satisfies $\Tail(e^\star) = \Left(e)^\star$ and $\Head(e^\star) = \Right(e)^\star$.  The dual graph $G^\star$ can be embedded on the same surface $S$; however, unless the embedding of $G$ is cellular, the embedding of~$G^\star$ is \emph{not} unique.  As a consequence, we will treat $G^\star$ exclusively as an abstract directed graph.  Every dual walk in $G$ corresponds to a walk in $G^\star$; in particular, every cocycle in $G$ corresponds to a cycle in $G^\star$. 

\begin{figure}[ht]
\centering
\includegraphics[scale=0.3]{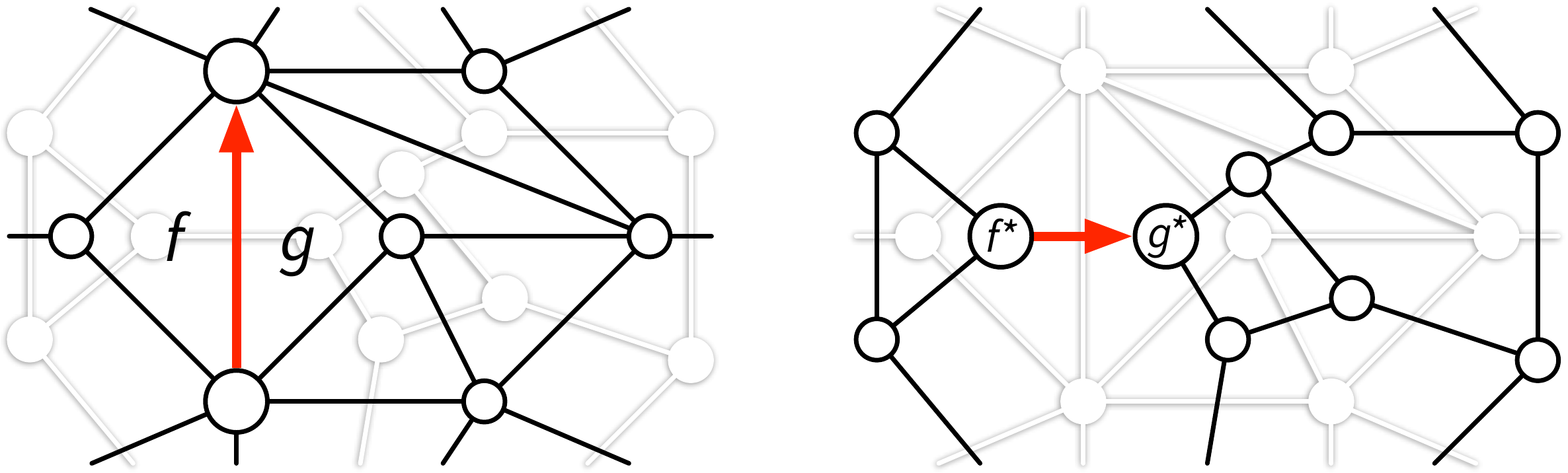}
\caption{Duality in directed surface graphs}
\end{figure}

\subsubsection*{Contractible and Bounding}

Let $\alpha\colon[0,1]\to S$ and $\beta \colon[0,1]\to S$ be two (not necessarily simple) paths in $S$ with the same endpoints.  A \EMPH{homotopy} between $\alpha$ and $\beta$ is a continuous function $h\colon[0,1]^2 \to S$ such that $h(s,0) = \alpha(0) = \beta(0)$ and $h(s,1) = \alpha(1) = \beta(1)$ for all $s$, and $h(0,t) = \alpha(t)$ and $h(1,t) = \beta(t)$ for all $t$.  Two paths are \EMPH{homotopic}, or in the same \EMPH{homotopy class}, if there is a homotopy between them.

A closed curve $\gamma$ in $S$ is \EMPH{contractible} in $S$ if it can be continuously deformed on $S$ to a single point, or more formally, if there is a homotopy from $\gamma$ to a constant function.  A closed walk in a surface graph is contractible if its image under the embedding is a contractible closed curve.  The concatenation of two contractible closed walks is contractible.  A simple closed curve (or a simple cycle in $G$) is contractible in~$S$ if and only if it is the boundary of a disk in $S$ \cite{e-c2mi-66}.

Mirroring classical terminology for curves in the plane~\cite{m-ubiep-1865,a-tikl-28}, an \EMPH{Alexander numbering} for a surface graph~$G$ is a function $\alpha\colon F(G) \to \Z$ that assigns an integer to each face of $G$, such that $\alpha(f) = 0$ for every boundary face $f$.  The \EMPH{boundary $\bdry\alpha$} of an Alexander numbering $\alpha$ is a circulation, defined by setting $\bdry \alpha(e) = \alpha(\Left(e)) - \alpha(\Right(e))$.  A closed walk is \EMPH{bounding} if and only if it is an Euler tour of some boundary circulation.  Equivalently, a closed walk (or its underlying circulation) is bounding if and only if its \emph{integer} homology class is trivial.  (We refer to reader to Hatcher \cite{h-at-02}, Giblin \cite{g-gsh-10}, or Edelsbrunner and Harer \cite{eh-cti-10} for a more through introduction to homology.)  On the sphere or the plane, every closed walk is bounding, and every circulation is a boundary circulation; however, these equivalences do not extend to other surfaces.  (See Figure \ref{F:torus-grid}.)  A simple cycle $\gamma$ in $G$ is bounding in $S$ if and only if $S\setminus\gamma$ is disconnected; simple bounding cycles are usually called \EMPH{separating}.  The concatenation of two bounding closed walks is bounding. 

For example, Figure \ref{F:short-bounding} shows an Alexander numbering for a directed graph on a surface of genus~$2$, which has three vertices, six edges, and three faces.  The boundary circulation of this Alexander numbering has value~$2$ on both loops and value~$1$ on the four edges that are not loops.  The closed walk $a\arcto a\arcto a\arcto b\arcto c\arcto c\arcto c \arcto b\arcto a$ is an Euler tour of this boundary circulation, and thus is a  bounding closed walk.  (This walk is not contractible, and it is actually the shortest bounding walk in this graph.)

\begin{figure}[ht]
\centering\includegraphics[scale=0.5]{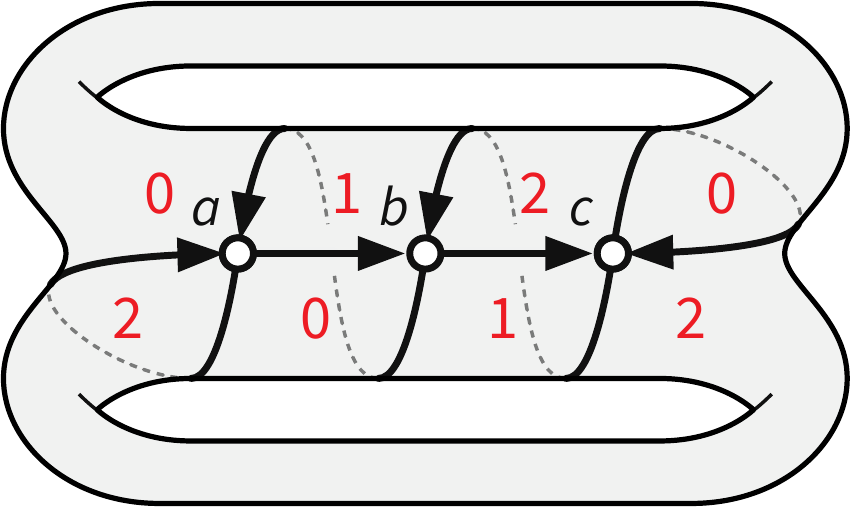}
\caption{An Alexander numbering for a genus-2 surface graph.}
\label{F:short-bounding}
\end{figure} 

Unlike many previous papers, we do not consider homology with coefficients in $\Z_2$, because the problems we consider are easy in that setting.  A directed surface graph contains a closed walk with trivial $\Z_2$-homology if and only if it contains a directed cycle.  Our NP-hardness proofs in Section \ref{S:NP-hard} imply that it is NP-hard to find a simple cycle, the shortest cycle, or the shortest closed walk with trivial $\Z_2$-homology (or with trivial homology over \emph{any} ring).

\Newpage
\section{Contractible Cycles and Walks}
\label{S:ccw}

%
%

\subsection{Simple Cycles}

\begin{lemma}
\label{L:no-cocycle}
Let $G$ be a directed graph embedded on an orientable surface, and let $e$ be a directed edge that lies in a directed cocycle of $G$.  No bounding closed walk (and in particular, no simple contractible cycle) in~$G$ traverses $e$.
\end{lemma}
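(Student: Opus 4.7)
The plan is to reduce the claim to a telescoping identity using the Alexander numbering characterization of bounding walks. Suppose $e$ lies in a directed cocycle, which by the definitions in Section~\ref{S:back} means we can write this cocycle as $f_0\fenceup f_1\fenceup\cdots\fenceup f_\ell = f_0$, where each $e_i := f_i\fenceup f_{i+1}$ is an edge of $G$ with $\Left(e_i)=f_i$ and $\Right(e_i)=f_{i+1}$, and $e = e_0$. (The word ``directed'' here is what ensures that left and right shores alternate consistently around the cycle, giving a directed cycle in $G^\star$.) Now let $\Walk$ be any bounding closed walk; we want to show $\Walk$ does not use $e$.

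By the definition of bounding, $\Walk$ is an Euler tour of $\bdry\alpha$ for some Alexander numbering $\alpha\colon F(G)\to\Z$. Let $\phi$ denote the circulation that counts how many times $\Walk$ traverses each edge; then $\phi = \bdry\alpha$, and in particular $\phi(e_i) = \alpha(f_i) - \alpha(f_{i+1})$ for every $i$.

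The key step is then the telescoping identity
\[
\sum_{i=0}^{\ell-1} \phi(e_i) \;=\; \sum_{i=0}^{\ell-1} \bigl(\alpha(f_i) - \alpha(f_{i+1})\bigr) \;=\; 0,
\]
using $f_0 = f_\ell$. Since $\phi$ is a circulation into $\N$, each term is non-negative, so every $\phi(e_i) = 0$; in particular $\phi(e)=0$, which is precisely the statement that $\Walk$ does not traverse $e$. The simple-contractible-cycle case follows because a simple contractible cycle bounds a disk and hence is bounding.

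There is no genuine obstacle here; the only subtlety worth double-checking is the direction convention for the cocycle. One must verify that ``directed cocycle'' means the shores line up so that $\Left(e_i)=f_i$ and $\Right(e_i)=f_{i+1}$ uniformly (otherwise the $\alpha(f_i)-\alpha(f_{i+1})$ differences would not telescope with a consistent sign). Once that convention is fixed, the proof is just the telescoping sum plus non-negativity of $\phi$.
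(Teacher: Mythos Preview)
Your proof is correct and is essentially identical to the paper's own argument: the paper also lets $\phi=\bdry\alpha$ be the boundary circulation carried by the bounding walk, sums $\phi$ over the edges of the cocycle, observes that the sum telescopes to zero, and concludes from non-negativity that each $\phi(e)=0$. Your explicit identification of $\Left(e_i)=f_i$, $\Right(e_i)=f_{i+1}$ just spells out the telescoping that the paper leaves implicit.
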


\begin{proof}
Let $\lambda$ be a directed cocycle in $G$, and let $\Walk$ be a bounding closed walk.  Let $\phi\colon E(G)\to \Z$ be the boundary circulation defined by setting $\phi(e)$ to the number of times that $\Walk$ traverses $e$, and let $\alpha\colon F(G)\to \N$ be an Alexander numbering of $\Walk$ (so that $\phi = \bdry\alpha$).  We immediately have
\[
	\sum_{e\in\lambda} \phi(e)
	~=~ \sum_{e\in\lambda} \big(\alpha(\Right(e)) - \alpha(\Left(e))\big)
	~=~ 0.
\]
Because $\phi(e)\ge 0$ for every edge $e$ in $G$, it follows that $\phi(e)=0$ for every edge $e\in\lambda$.
\end{proof}

In light of this lemma, we can assume without loss of generality that $G$ contains no directed cocycles; in particular, no edge of $G$ has the same face on both sides.

The \EMPH{boundary} of a face $f$, denoted \EMPH{$\bdry f$}, is the set of edges that have $f$ on one side.  The boundary of~$f$ is oriented \EMPH{clockwise} if $f$ is the right shore of any edge in $\bdry f$, and \EMPH{counterclockwise} if~$f$ is the left shore of any edge in $\bdry f$.   Each face in $G$ with counterclockwise boundary appears as a source in the dual graph $G^\star$; each face of $G$ with clockwise boundary appears as a sink in $G^\star$.  We call a face \EMPH{coherent} if its boundary is oriented either clockwise or counterclockwise, and \EMPH{incoherent} otherwise.  See Figure \ref{F:coherent}.
%
  
\begin{figure}[ht]
\centering
\includegraphics[scale=0.4]{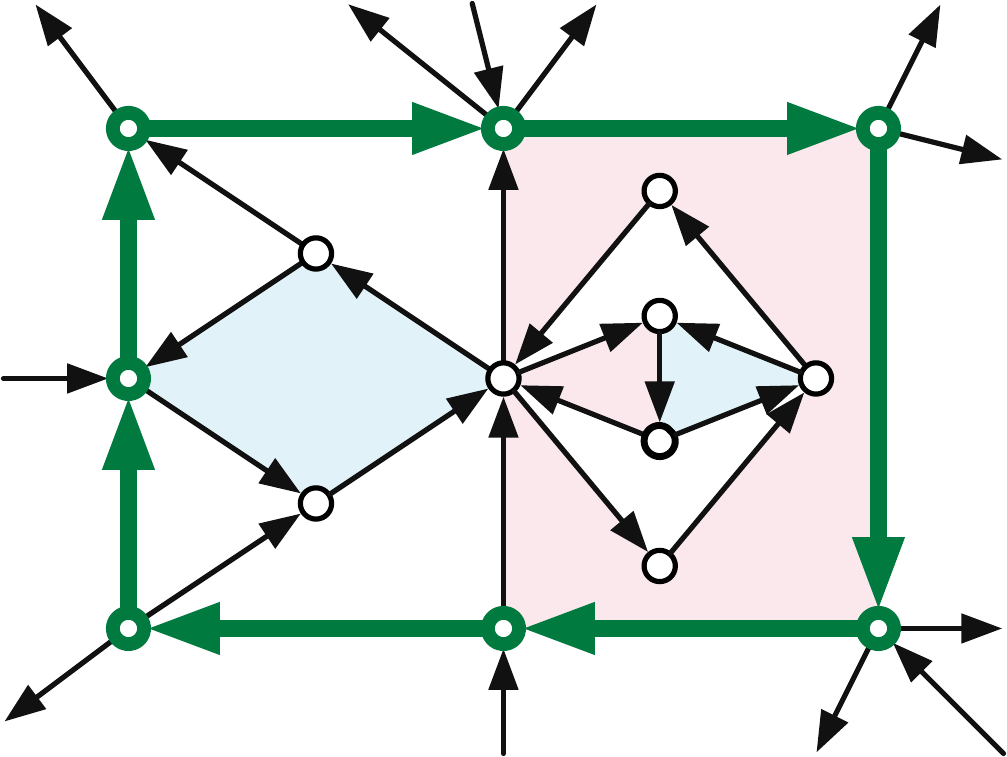}
\caption{A simple clockwise contractible cycle (bold green) in a directed surface graph, enclosing four coherent faces: two counterclockwise (shaded blue) and two clockwise (shaded pink).}
\label{F:coherent}
\end{figure}

\begin{lemma}
\label{L:oneface-cycle}
Let $G$ be a directed graph with no cocycles, embedded on an orientable surface~$S$.  If $G$ contains a simple contractible directed cycle, then $G$ has a face whose boundary is a simple contractible directed cycle.
\end{lemma}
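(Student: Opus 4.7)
My plan is an extremal argument. Pick $\gamma$ to be a simple contractible directed cycle in $G$ whose enclosed disk $D$ contains the minimum possible number of faces of $G$; let $F_D$ denote this set of faces. My goal is to show $|F_D| = 1$, in which case $\gamma$ is the boundary of a single face. The case $|F_D| = 1$ is immediate. When $|F_D| \ge 2$, I will either directly exhibit a face of $G$ with simple directed boundary cycle---which proves the lemma---or find a simple directed cycle $\gamma' \ne \gamma$ inside $\bar D$ bounding a strictly smaller disk, contradicting the minimality of $D$.

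To find such a face or smaller cycle, I exploit the hypothesis that $G$ has no directed cocycles. Let $H = G \cap \bar D$ and consider the dual graph $H^\star$ with vertex set $F_D \cup \{\infty^\star\}$, where $\infty^\star$ represents the exterior of $D$. Orient $\gamma$ so that $D$ lies on its right; then every edge of $\gamma$ dualizes to an arc out of $\infty^\star$. Any directed cycle in $H^\star$ using only vertices of $F_D$ corresponds to a cocycle of $G$, so the restriction of $H^\star$ to $F_D$ is acyclic. Since $|F_D| \ge 2$ forces at least one edge of $G$ into the interior of $D$ (otherwise $D^\circ$ would be a single face), this DAG is nonempty and has a sink $t \in F_D$. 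Every edge on $\bdry t$ then has $t$ on its right---interior edges by choice of sink, $\gamma$-edges because $D$ lies on the right of $\gamma$---so $t$ is clockwise-coherent, and $\bdry t$ decomposes into a collection of directed closed walks.

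The main obstacle is that $t$ need not be a disk, and even when it is, $\bdry t$ may traverse a vertex twice. I would handle the cases uniformly as follows. If $\bdry t$ consists of a single simple closed walk, then it is itself a simple directed cycle: it either coincides with $\gamma$, forcing $t$ to be the sole face in $F_D$ and contradicting $|F_D| \ge 2$, or else $t$ is a face of $G$ with simple directed boundary and the lemma is proved. Otherwise, I extract a simple directed cycle $\gamma' \ne \gamma$ from $\bdry t$: either as a boundary component of $t$ disjoint from $\gamma$ (which exists when $t$ is not simply connected, since the inner boundaries of $t$ lie in $D^\circ$), or by decomposing a non-simple $\bdry t$ at a repeated vertex and choosing a sub-cycle that contains an interior edge (such a sub-cycle exists because a non-simple $\bdry t$ must include an interior edge not in $\gamma$). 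Then $\gamma'$ is contractible (being inside the disk $D$) and bounds a unique sub-disk $D' \subsetneq D$; the positive-area region $D \setminus D'$ must contain a face of $G$ not in $F_{D'}$, so $|F_{D'}| < |F_D|$, contradicting the minimal choice of $\gamma$.
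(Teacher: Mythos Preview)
Your proposal is correct and follows essentially the same approach as the paper: both arguments find a sink face $t$ in the dual DAG restricted to the faces inside $D$ (using the no-cocycle hypothesis), observe that $t$ is clockwise-coherent, and then either finish because $\bdry t$ is already a simple contractible directed cycle or extract a strictly smaller simple contractible directed cycle from the edge-disjoint cycle decomposition of $\bdry t$. Your extremal framing (choose $\gamma$ minimizing $|F_D|$) is just the contrapositive of the paper's induction on the number of enclosed faces. One small stylistic remark: in the branch where $\bdry t$ is simple but $\bdry t \ne \gamma$, you break out and declare the lemma proved directly; it would be cleaner (and more consistent with your extremal setup) to observe that $\bdry t$ is then itself a simple contractible directed cycle enclosing fewer faces, contradicting minimality---this avoids the need for a separate ``direct proof'' branch and also makes the contractibility of $\bdry t$ (which you don't state explicitly) automatic.
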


\begin{proof}
Let $\gamma$ be a simple contractible cycle in $G$.  This cycle is the boundary of a closed disk~$D$ \cite{e-c2mi-66}.  Without loss of generality, assume $\gamma$ is oriented clockwise around $D$; that is, the right shore of every edge in $\gamma$ is a face in~$D$.

A dual walk that starts at a face inside $D$ cannot visit the same face more than once, because $G$ has no cocycles ($G^\star$ has no cycles), and cannot cross $\gamma$, because $\gamma$ is oriented clockwise (all its dual edges point into $D$).  Thus, $D$ must contain at least one face $f$ with clockwise boundary (a sink in~$G^\star$).  The closure of~$f$ is a subset of the closed disk~$D$, so it must be homeomorphic to a closed disk with zero or more open disks (“holes”) removed from its interior.

If $\bdry f$ is a simple directed cycle, we are done.  Otherwise, $\bdry f$ is the union of two or more edge-disjoint simple directed cycles; consider the non-simple shaded face in Figure \ref{F:coherent}.  Each of these cycles is contractible, because it lies in $D$.  Any simple cycle in $\bdry f$ that is not the original cycle $\gamma$ (for example, the boundary of any hole in the closure of $f$) encloses strictly fewer faces than $\gamma$.  The lemma now follows immediately by induction on the number of enclosed faces.
\end{proof}

\begin{theorem}
\label{Th:con-cycle}
Given an arbitrary directed graph $G$ embedded on an arbitrary orientable surface, we can determine in $O(n)$ time whether $G$ contains a simple contractible cycle.
\end{theorem}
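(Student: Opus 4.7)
The plan is to combine the two preceding lemmas into a single two-phase algorithm. Lemma~\ref{L:no-cocycle} tells us that no simple contractible cycle can traverse any edge lying on a directed cocycle, so we may safely delete all such edges; Lemma~\ref{L:oneface-cycle} then reduces the remaining search to an inspection of face boundaries in the reduced graph.

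For \emph{Phase~1} (removing cocycle edges), observe that an edge $e$ of $G$ lies on a directed cocycle if and only if the endpoints $\Left(e)^\star$ and $\Right(e)^\star$ of $e^\star$ lie in the same strongly connected component of $G^\star$. Thus I would first compute the SCCs of $G^\star$ in $O(n)$ time using Tarjan's algorithm, and delete every primal edge whose dual is intra-SCC, obtaining a subgraph~$G'$. Standard condensation arguments then give that $(G')^\star$ is exactly the (acyclic) SCC condensation of $G^\star$, so $G'$ itself contains no directed cocycles; by Lemma~\ref{L:no-cocycle}, $G$ and $G'$ share the same set of simple contractible cycles.

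For \emph{Phase~2} (inspecting faces), trace the faces of $G'$ from the rotation system inherited from $G$ in total time $O(n)$, and for each face $f$ check in time $O(|\partial f|)$ whether (i)~the boundary walk around $f$ is a single coherent simple directed cycle, and (ii)~$f$ is a topological disk. Report YES iff some face passes both tests. For correctness, any face passing both tests has as its boundary a simple directed cycle bounding the disk~$f$, hence a simple contractible cycle in $G$. Conversely, if $G'$ contains a simple contractible cycle $\gamma$ bounding a disk $D\subseteq S$, then the inductive construction in the proof of Lemma~\ref{L:oneface-cycle} produces a clockwise face $f\subseteq D$ whose closure is a disk with zero or more holes; recursing on any hole boundary (each enclosing strictly fewer faces than $\gamma$) terminates at a face whose closure is a disk with no holes, and this face passes both tests.

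The main obstacle is checking condition~(ii) in amortized constant time per edge, since we explicitly allow non-cellular embeddings. Assuming the input encoding records each original face's Euler characteristic (equivalently, its genus and boundary-component count), I would update $\chi$ of the merged faces incrementally during Phase~1: when a deleted edge~$e$ separates two distinct faces $f_1\ne f_2$, the merged face satisfies $\chi(f_1\cup e\cup f_2)=\chi(f_1)+\chi(f_2)-1$ with one fewer total boundary component, while a deletion with $\Left(e)^\star=\Right(e)^\star$ is handled by the analogous split/merge rule according to whether the edge is a bridge of that face. These updates fit inside the $O(n)$-time SCC pass, so the whole algorithm runs in $O(n)$ time.
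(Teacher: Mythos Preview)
Your proposal is correct and follows essentially the same approach as the paper's proof: compute the strong components of $G^\star$ to identify and delete all cocycle edges, then scan the faces of the remaining graph for one whose interior is a disk and whose boundary is a simple directed cycle, with correctness delegated to Lemmas~\ref{L:no-cocycle} and~\ref{L:oneface-cycle}. You supply more implementation detail than the paper does---in particular the incremental Euler-characteristic bookkeeping for the disk test, which the paper simply absorbs into ``brute force, in $O(n)$ time''---but the overall structure is identical.
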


\begin{proof}
The algorithm proceeds as follows.  First, we compute the strong components of the dual graph~$G^\star$ in $O(n)$ time.  An edge $e$ of $G$ lies in a cocycle if and only if both endpoints of the dual edge $e^\star$ lie in the same strong component of $G^\star$; we remove all such edges from $G$ in $O(n)$ time.  Finally, we examine each face of the remaining subgraph of $G$ by brute force, in $O(n)$ time.  If we discover a face whose interior is a disk and whose boundary is a simple directed cycle, we output \textsc{True}; otherwise, we output \textsc{False}.  Correctness follows directly from Lemmas \ref{L:no-cocycle} and~\ref{L:oneface-cycle}.
\end{proof}

\subsection{Non-simple Closed Walks}

It is easy to construct directed surface graphs that contain contractible closed walks but no simple contractible cycles.  For example, the one-vertex graph in  Figure~\ref{F:torus-weak} has two Eulerian circuits, both of which are contractible, but the only simple cycles consist of single edges, none of which are contractible.

\begin{figure}[ht]
\centering
\includegraphics[scale=0.4]{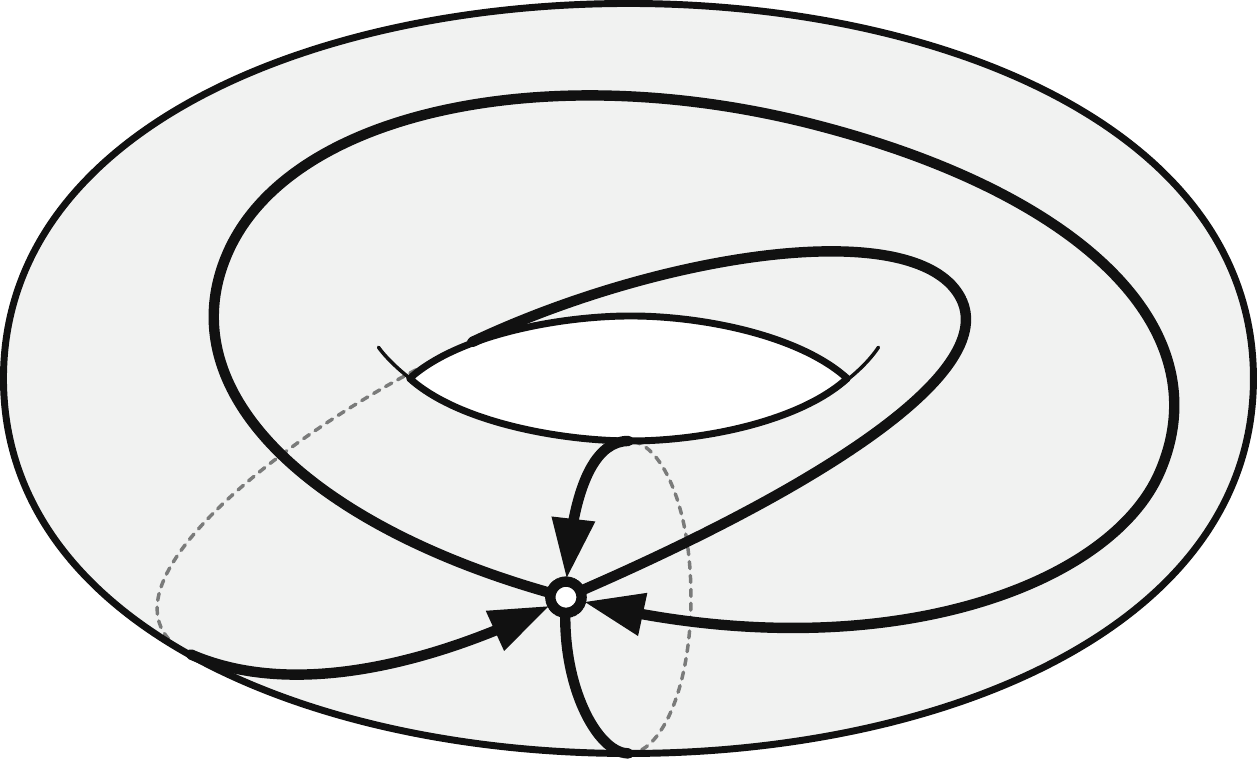}
\caption{A directed surface graph with contractible closed walks but no simple contractible cycles.}
\label{F:torus-weak}
\end{figure}

We detect contractible closed walks using essentially the same algorithm described in the previous section: After removing all cocycle edges, we look for a face whose boundary is coherent and whose interior is an open disk.  However, proving this algorithm correct requires more subtlety.

Fix a graph $G$ embedded on an orientable surface $S$.  Let $\set{U_v \mid v\in V(G)}$ be a collection of disjoint open disks in~$S$, each containing one of the vertices of $G$; we refer to each set $U_v$ as a \emph{vertex bubble}.  Let $\set{U_e \mid e\in E(G)}$ be another collection of disjoint open disks in $S$, each containing the portion of an edge of $G$ outside the vertex bubbles, such that $U_e \cap U_v \ne \varnothing$ if and only if $v$ is one of the endpoints of $e$; we refer to each disk $U_e$ as an \emph{edge bubble}.  Finally, let $U = \bigcup_v U_v \cup \bigcup_e U_e$.  

\begin{figure}[ht]
\centering
\includegraphics[scale=0.4]{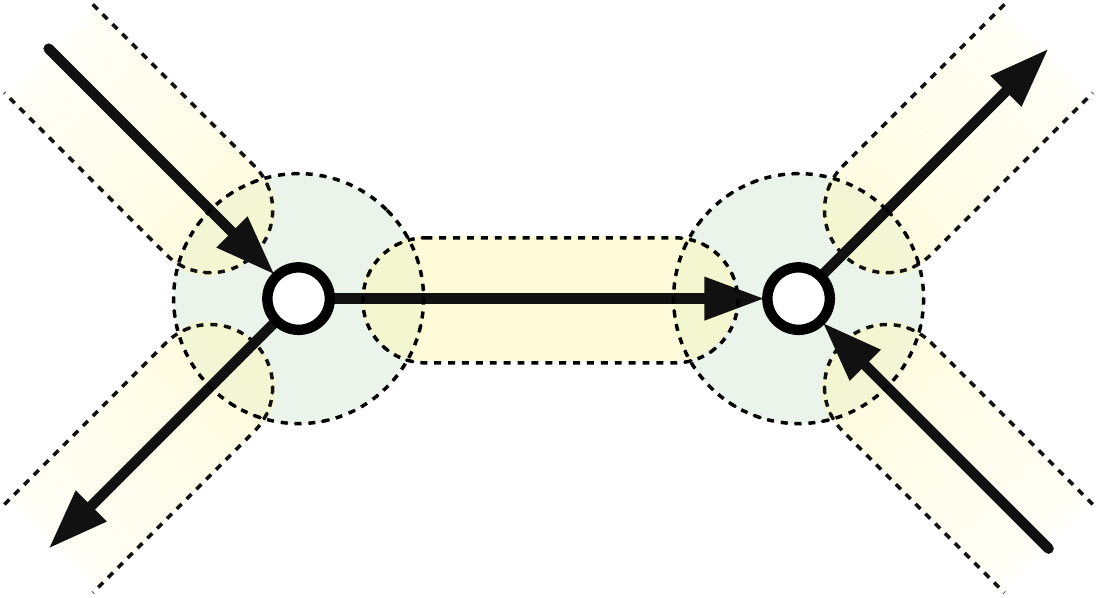}
\caption{Vertex bubbles (green) and edge bubbles (yellow).}
\end{figure}

We say that a simple closed curve $\gamma$ in $U$ \EMPH{follows} a closed walk $\Walk$ in $G$ if $\gamma$ visits the vertex bubbles of~$G$ in exactly the same sequence as the walk $\Walk$ visits the corresponding vertices of~$G$, and therefore $\gamma$ intersects each edge bubble $U_{\arc{u}{v}}$ in a disjoint set of simple paths from $U_u$ to~$U_v$, one for each occurrence of the edge $\arc{u}{v}$ in $\Walk$.  A closed walk~$\Walk$ in $G$ is \EMPH{weakly simple} if $U$ contains a simple closed curve that follows~$\Walk$, or equivalently (because walks in asymmetric directed graphs cannot contain spurs~\cite{weak}) if $\Walk$ is homotopic in~$U$ to a simple closed curve.

\begin{lemma}
\label{L:weak-ccw}
Let $G$ be a directed graph embedded on an orientable surface $S$.  If there is a contractible closed walk in $G$, then there is a weakly simple contractible closed walk in $G$.
\end{lemma}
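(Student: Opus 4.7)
The plan is to choose $\Walk$ to be a contractible closed walk in $G$ that minimizes the pair $(|\Walk|, i(\Walk))$ lexicographically, where $i(\Walk)$ denotes the minimum number of self-intersections over all closed curves in $U$ that follow $\Walk$. The goal is to show $i(\Walk)=0$, which by definition means $\Walk$ is weakly simple. Suppose for contradiction that $i(\Walk)\ge 1$, and fix a drawing $\gamma\subset U$ following $\Walk$ and realizing exactly $i(\Walk)$ self-intersections. Because the definition of ``following'' forces $\gamma$ to intersect each edge bubble in a disjoint family of simple paths, every self-intersection of $\gamma$ lies in some vertex bubble. Moreover, the image of $\gamma$ in $S$ is contractible but not simple, so $\gamma$ is not a minimum-intersection representative of its (trivial) free homotopy class. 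The classical theorem of Hass and Scott \cite{hs-ics-85} then guarantees that $\gamma$ contains either a \emph{singular monogon} --- an embedded disk $D\subset S$ whose boundary is a single subarc of $\gamma$ based at one self-intersection --- or a \emph{singular bigon} --- an embedded disk $D\subset S$ whose boundary is made up of two subarcs of $\gamma$ joining two self-intersections.

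In the monogon case, the self-intersection lies in some vertex bubble $U_v$, so $\Walk$ visits $v$ at two distinct positions $i<j$, and the boundary subarc of the disk corresponds to the sub-walk $\Walk_\alpha$ of $\Walk$ from position $i$ to position $j$, which is a closed walk at $v$. Because $\Walk_\alpha$ bounds the disk $D$, it is contractible. Writing $\Walk = A\cdot \Walk_\alpha\cdot B$, where $A$ is the prefix of $\Walk$ ending at position $i$ and $B$ is the suffix starting at position $j$, the excision $\Walk' := A\cdot B$ is a closed walk at the base vertex of $\Walk$, and since both $\Walk$ and $\Walk_\alpha$ are contractible, so is $\Walk'$. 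But $|\Walk_\alpha|\ge 1$ forces $|\Walk'|<|\Walk|$, contradicting the length-minimality of $\Walk$.

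In the bigon case, the two boundary subarcs $\alpha$ and $\beta$ of the bigon disk meet at self-intersections in vertex bubbles $U_u$ and $U_v$ (possibly $u=v$); each subarc corresponds to a sub-walk $\Walk_\alpha$ or $\Walk_\beta$ of $\Walk$ running (say) from $u$ to $v$, and $\Walk$ decomposes cyclically as $\Walk_\alpha\cdot\sigma\cdot\Walk_\beta\cdot\tau$, where $\sigma$ and $\tau$ are the complementary sub-walks from $v$ back to $u$. Since $\alpha\simeq\beta$ rel endpoints through the disk $D$, we have $\Walk_\alpha\simeq \Walk_\beta$ rel endpoints, so the ``swapped'' walk $\Walk' := \Walk_\beta\cdot\sigma\cdot\Walk_\alpha\cdot\tau$ is a legitimate closed walk in the directed graph $G$, has the same length as $\Walk$, and is homotopic to $\Walk$ and hence still contractible. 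The swap uncrosses the two threads of $\gamma$ at both bigon corners, producing a drawing of $\Walk'$ with at least two fewer self-intersections, so $i(\Walk')\le i(\Walk)-2<i(\Walk)$, contradicting the secondary minimality of $i(\Walk)$. The main technical obstacle is the bigon analysis: one must handle both possible orientations of $\alpha$ and $\beta$ along $\gamma$, interpret the bigon corners combinatorially as specific pairs of visits to vertices of $G$, and verify that the local swap really eliminates the two corner crossings of $\gamma$ without introducing new crossings elsewhere in $U$; the monogon case, by contrast, is a direct length-decrement argument once the sub-walk is identified.
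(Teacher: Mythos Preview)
Your overall strategy—choosing a contractible closed walk minimizing $(|\Walk|, i(\Walk))$ lexicographically and then invoking Hass–Scott—is sound and close in spirit to the paper's proof.  The monogon case and the \emph{incoherent} bigon case (where $\gamma$ traverses $\alpha$ and $\beta$ in the same direction, so that $\Walk_\alpha$ and $\Walk_\beta$ are both directed walks from $u$ to $v$) work essentially as you describe.

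The genuine gap is the \emph{coherent} bigon case, which you flag but do not actually treat, and for which your swap move fails outright.  When $\gamma$ traverses $\alpha$ from $u$ to $v$ but $\beta$ from $v$ to $u$, the cyclic decomposition becomes $\Walk = \Walk_\alpha \cdot \sigma \cdot \Walk_\beta \cdot \tau$ with $\sigma$ a \emph{closed} sub-walk at $v$ and $\tau$ a closed sub-walk at $u$ (not walks ``from $v$ back to $u$'' as you wrote); then $\Walk_\beta \cdot \sigma \cdot \Walk_\alpha \cdot \tau$ is not even a well-formed directed walk in $G$, since $\Walk_\beta$ ends at $u$ while $\sigma$ begins at $v$.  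The correct move here is a length decrement, not a crossing decrement: take $\Walk' = \Walk_\alpha \cdot \Walk_\beta$, which is a closed walk at $u$ and is contractible because $\alpha \cdot \beta$ bounds the bigon disk $D$.  Because each bigon corner is a transverse self-intersection of a minimum-crossing drawing following $\Walk$, each corner corresponds to two \emph{distinct} visits of $\Walk$ to its vertex, so $|\sigma| \ge 1$ and $|\tau| \ge 1$; hence $|\Walk'| < |\Walk|$, contradicting length-minimality.  This is exactly the case the paper's proof handles most directly: rather than setting up a minimality argument, the paper first smooths away all incoherent bigons at the curve level (your swap, iterated on the drawing), and then simply extracts the monogon boundary or the coherent-bigon boundary $\alpha \cdot \beta$ as a simple contractible curve in $U$ that already follows a walk in $G$.
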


\begin{proof}
Let $\Walk$ be a contractible closed walk in $G$.  Let $\PerturbedWalk$ be a generic closed curve homotopic to~$\Walk$ inside the graph neighborhood $U$ (and therefore contractible in $S$) that follows $\Walk$ and does not self-intersect inside any edge bubble.  For each edge $\arc{u}{v}$ in $G$, the intersection $\PerturbedWalk \cap U_{\arc{u}{v}}$ consists of pairwise-disjoint simple paths, each directed from $U_u$ to $U_v$.

A seminal result of Hass and Scott \cite[Theorem 2.7]{hs-ics-85} implies that every non-simple contractible closed curve in $U$ contains either a simple contractible closed subpath, which we call a \emph{monogon}, or a pair of simple interior-disjoint subpaths that bound a disk, which we call a \emph{bigon}.  We call a bigon \emph{coherent} if one subpath is homotopic to the reverse of the other, and \emph{incoherent} otherwise.  See Figure \ref{F:smooth}.

\begin{figure}[ht]
\centering
\includegraphics[scale=0.4]{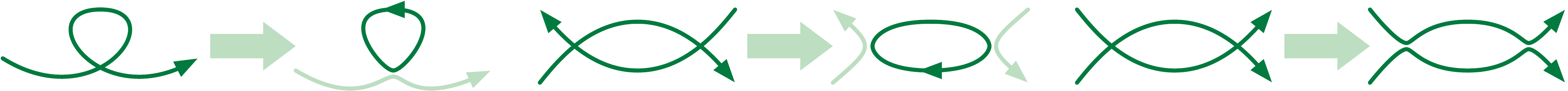}
\caption{Shortening or simplifying a contractible closed curve by smoothing at points of self-intersection.  From left to right: A~monogon, a coherent bigon, and an incoherent bigon.}
\label{F:smooth}
\end{figure}

First, suppose $\PerturbedWalk$ contains two homotopic subpaths~$\alpha$ and $\beta$; these subpaths need not be simple or interior-disjoint.  Let $\PerturbedWalk'$ be the curve obtained from $\PerturbedWalk$ by \emph{smoothing} the common endpoints of $\alpha$ and~$\beta$, as shown on the right of Figure \ref{F:smooth}.  The smoothed curve $\PerturbedWalk'$ is homotopic to $\PerturbedWalk$ via a homotopy that swaps the subpaths $\alpha$ and~$\beta$ \cite{n-csgs-01,dl-cginc-17,untangle}; it follows that~$\PerturbedWalk'$ is contractible.   Thus, by repeatedly smoothing pairs of vertices, we can reduce $\PerturbedWalk$ to a contractible curve~$\PerturbedWalk''$ with no pair of homotopic subpaths, and therefore no incoherent bigons.  (If $\PerturbedWalk$ has no pair of homotopic subpaths, then $\PerturbedWalk'' = \PerturbedWalk$.)

Hass and Scott's theorem implies that either $\PerturbedWalk''$ is simple, or $\PerturbedWalk''$ contains a monogon, or $\PerturbedWalk''$ contains a coherent bigon.  If $\PerturbedWalk''$ contains a monogon, that monogon is itself a simple contractible closed curve.  If $\PerturbedWalk''$ contains a coherent bigon composed of subpaths $\alpha$ and $\beta$, their concatenation $\alpha\cdot\beta$ is a simple contractible closed curve.  In all three cases, we have discovered a simple contractible closed curve $\gamma$ in the graph neighborhood~$U$.  Moreover, for every edge $\arc{u}{v}$ of $G$, the intersection $\gamma \cap U_{\arc{u}{v}}$ is a subset of $\PerturbedWalk\cap U_{\arc{u}{v}}$, and therefore consists of disjoint simple paths directed from $U_u$ to $U_v$.  Thus, $\gamma$ follows a closed walk in~$G$, which is both weakly simple \cite{weak} and contractible.
\end{proof}

\begin{lemma}
\label{L:weak-once}
Let $G$ be a directed graph embedded on an orientable surface $S$.  Every weakly simple \textbf{bounding} closed walk in $G$ traverses each edge of $G$ at most once.
\end{lemma}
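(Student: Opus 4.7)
The plan is to turn the simple closed curve $\gamma\subset U$ guaranteed by weak simplicity into a two-valued Alexander numbering of $\Walk$, which will immediately bound each edge multiplicity.  Let $\phi$ denote the circulation of $\Walk$, so $\phi(e)$ counts the traversals of $e$ by $\Walk$.

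First I would argue that $\gamma$ is null-homologous in $S$.  Because $\gamma$ follows $\Walk$, the two curves are homotopic inside $U\subset S$, and $\Walk$ is null-homologous by the bounding hypothesis.  Since $S$ is orientable, a simple null-homologous closed curve must be separating, so $S\setminus\gamma$ has exactly two components $D_0$ and $D_1$.  For each face $f$ of $G$, the set $f\setminus U$ is connected (since $U$ is a thin thickening of $G$ and $f$ deformation retracts onto $f\setminus U$) and disjoint from $\gamma$, so it lies entirely inside one of $D_0,D_1$; I define $\alpha\colon F(G)\to\{0,1\}$ by $\alpha(f)=i$ iff $f\setminus U\subset D_i$.

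Next I would verify the identity $|\alpha(\Left(e))-\alpha(\Right(e))|=\phi(e)$ for every edge $e$.  Because $\gamma$ follows $\Walk$, the edge bubble $U_e$ meets $\gamma$ in exactly $\phi(e)$ pairwise disjoint arcs, all oriented from the tail of $e$ to its head.  A short path from $\Left(e)\setminus U$ through $U_e$ to $\Right(e)\setminus U$, chosen transverse to $\gamma$, crosses each of these arcs exactly once; because the arcs share a common orientation, every crossing contributes the same sign to the signed crossing count.  By the winding-number characterization of $\alpha$, this signed count equals $\pm\bigl(\alpha(\Left(e))-\alpha(\Right(e))\bigr)$, so its magnitude is exactly $\phi(e)$.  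Since $\alpha$ takes only two values, $|\alpha(\Left(e))-\alpha(\Right(e))|\le 1$, which forces $\phi(e)\le 1$ and hence the lemma.

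The main obstacle is to confirm that $\alpha$ really is a function of $f$ alone, i.e.\ that the signed crossing count does not depend on the chosen transverse path used to reach $f$ from a fixed base face.  This is precisely where null-homology of $\gamma$ matters: any two transverse paths with common endpoints differ by a closed loop, and the algebraic intersection number of any closed loop with a null-homologous curve must vanish.  Once that standard topological fact is granted, everything else collapses into the local bubble-by-bubble count above, which is sharpened by the observation that $\gamma$'s parallel arcs inside $U_e$ all point consistently from tail to head of $e$ and therefore cannot cancel.
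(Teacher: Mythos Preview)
Your argument is correct and is essentially the paper's proof in a more algebraic wrapper.  The paper also takes the simple curve $\gamma$ following $\Walk$, observes that $\gamma$ is separating (since it is simple and bounding), two-colors $S\setminus\gamma$, and then argues locally inside each edge bubble $U_e$: the parallel arcs of $\gamma$ cut $U_e$ into strips that must alternate colors, but since all arcs are oriented from $U_u$ to $U_v$, adjacent strips would receive the \emph{same} color unless there is at most one arc.  Your ``signed crossing count equals the Alexander-numbering jump, which is $\{0,\pm1\}$-valued'' is exactly this alternation argument stated homologically; the paper just unpacks it by hand with a red/blue coloring and a picture, without naming intersection numbers.
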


\begin{proof}
Let $\Walk$ be a weakly simple bounding closed walk in~$G$.  Let $\gamma$ be a simple closed curve in $U$ that follows $\Walk$, as in the previous proof.  Because $\gamma$ is simple and bounding, the set $S\setminus\gamma$ has exactly two components.  Color the component on the left of $\gamma$ blue, and the component on the right of $\gamma$ red.

Consider a single edge $\arc{u}{v}$.  Each component of $\gamma\cap U_{\arc{u}{v}}$ is a directed path that splits $U_{\arc{u}{v}}$ into a blue region on the left and a red region on the right.  Thus, the components of $\gamma\cap U_{\arc{u}{v}}$ must alternate direction, as shown in Figure \ref{F:stripe}, since otherwise some component of $U_{\arc{u}{v}}\setminus \gamma$ would be both red and blue.  

\begin{figure}[ht]
\centering
\includegraphics[scale=0.4]{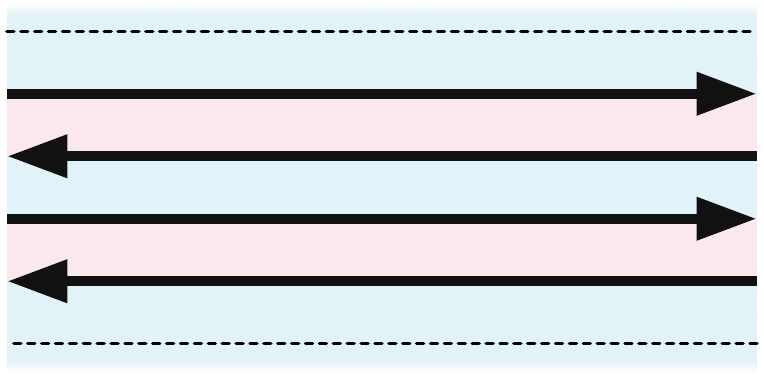}
\caption{A simple contractible cycle in $U$ must alternate directions inside any edge bubble.}
\label{F:stripe}
\end{figure}

By assumption, even if $G$ contains an edge from $v$ to $u$, that edge is embedded disjointly from $\arc{u}{v}$.  Thus, \emph{every} path component of $\PerturbedWalk\cap U_{\arc{u}{v}}$ is directed from $U_u$ to $U_v$.  It follows that $\PerturbedWalk\cap U_{\arc{u}{v}}$ is either empty or a single directed path, which implies that $\Walk$ traverses $\arc{u}{v}$ at most once.
\end{proof}

\begin{corollary}
\label{C:ccw-short}
Let $G$ be a directed graph with $n$ vertices and $m$ non-negatively weighted edges, embedded on an orientable surface, possibly with boundary.  The shortest contractible closed walk in $G$, if such a walk exists, has hop-length at most $m$.
\end{corollary}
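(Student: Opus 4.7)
The plan is to run the construction in the proof of Lemma \ref{L:weak-ccw} starting from a shortest contractible closed walk, observe that the construction never increases the number of times any edge is traversed, and then apply Lemma \ref{L:weak-once} to the resulting weakly simple walk.

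Concretely, let $\Walk^*$ be any shortest contractible closed walk in $G$, with total weighted length $\ell^*$. Running the construction in the proof of Lemma \ref{L:weak-ccw} on $\Walk^*$ produces a perturbation $\PerturbedWalk \subset U$ that follows $\Walk^*$, possibly some smoothings of $\PerturbedWalk$ applied to kill incoherent bigons, and finally a simple closed curve $\gamma \subset U$ that is either the smoothed curve itself, a monogon subpath of it, or the concatenation $\alpha\cdot\beta$ of the two subpaths of a coherent bigon. The weakly simple contractible closed walk $\Walk'$ that $\gamma$ follows in $G$ is my candidate for the corollary.

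The key observation is that this construction is edge-monotone. Smoothing at a single self-intersection point merely rewires two strands inside $U$ and does not create any new component of the curve's intersection with any edge bubble, and in each of the three terminal cases $\gamma$ is built from subpaths of the (smoothed) curve. Hence $\gamma \cap U_e \subseteq \PerturbedWalk \cap U_e$ for every edge $e \in E(G)$, and since $\PerturbedWalk$ follows $\Walk^*$, the walk $\Walk'$ traverses each edge at most as often as $\Walk^*$ does. With non-negative edge weights this forces $\Walk'$ to have length at most $\ell^*$, so $\Walk'$ is itself a shortest contractible closed walk.

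Finally I would invoke Lemma \ref{L:weak-once}, whose ``bounding'' hypothesis is satisfied by $\Walk'$ because every contractible closed walk is null-homotopic and therefore null-homologous in integer homology, hence bounding by the definition in Section \ref{S:back}. The lemma then yields that $\Walk'$ traverses each edge of $G$ at most once, so its hop-length is at most $|E(G)| = m$, proving the corollary. The only real obstacle is verifying the edge-monotonicity of the Lemma \ref{L:weak-ccw} construction, that is, confirming that none of its three terminal branches (already-simple curve, monogon extraction, coherent bigon splicing) can ever introduce a fresh edge traversal relative to $\Walk^*$; everything else is a direct appeal to an already-established lemma.
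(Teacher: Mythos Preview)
Your proposal is correct and is exactly the intended derivation: the paper states the corollary without proof because it follows immediately from Lemmas~\ref{L:weak-ccw} and~\ref{L:weak-once} in just the way you describe. The edge-monotonicity you flag as ``the only real obstacle'' is in fact already established verbatim in the proof of Lemma~\ref{L:weak-ccw} (``for every edge $\arc{u}{v}$ of $G$, the intersection $\gamma \cap U_{\arc{u}{v}}$ is a subset of $\PerturbedWalk\cap U_{\arc{u}{v}}$''), so there is no remaining gap.
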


\begin{lemma}
\label{L:oneface-walk}
Let $G$ be a directed graph with no cocycles, embedded on an orientable surface.  If $G$ contains a contractible closed walk, then $G$ has a face with coherent boundary whose interior is a disk.
\end{lemma}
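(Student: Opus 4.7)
The plan is to mirror the proof of Lemma~\ref{L:oneface-cycle}, replacing the simple contractible cycle produced there with a weakly simple contractible closed walk from Lemma~\ref{L:weak-ccw}, and then reducing back to Lemma~\ref{L:oneface-cycle} at the very end. First, I would apply Lemma~\ref{L:weak-ccw} to obtain a weakly simple contractible closed walk $\Walk$, together with a simple closed curve $\gamma\subset U$ that follows $\Walk$. Since $\Walk$---and hence $\gamma$---is contractible in $S$, the curve $\gamma$ bounds a closed disk $D\subset S$; without loss of generality, $D$ lies on the right of $\gamma$ (the opposite case is symmetric and yields a counterclockwise coherent face). After shrinking the bubbles slightly so that $\gamma$ is disjoint from every face interior, each face of $G$ is either \emph{enclosed}, meaning $f^\circ\subset D^\circ$, or disjoint from $D^\circ$.

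The second step is to show that no directed dual edge starts at an enclosed face and ends at a non-enclosed face. Suppose for contradiction that some edge $e$ has $\Left(e)$ enclosed but $\Right(e)$ not. Because $\Left(e)^\circ$ and $\Right(e)^\circ$ then lie in different components of $S\setminus\gamma$, $\Walk$ must traverse $e$, so $\gamma$ has an arc in the edge bubble $U_e$ directed from the tail of $e$ to its head. This arc has $D$ on its right---i.e., on the $\Right(e)$-side of $e$---which forces $\Right(e)$ to be enclosed too, a contradiction. Combining this with the standing hypothesis that $G$ has no cocycles (so $G^\star$ is acyclic), any walk along dual out-edges from an enclosed face cannot escape the enclosed set and cannot repeat a vertex, so it terminates at a dual sink $f_0^\star$. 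The corresponding face $f_0$ has every boundary edge on its right, so $\bdry f_0$ is oriented clockwise, and $f_0$ is coherent.

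Finally, I would verify that some coherent face has a disk interior. If $\bdry f_0$ is already a single simple directed cycle, then since $f_0\subset D$ is a planar $2$-manifold with one boundary component, $f_0^\circ$ is an open disk and we are done. Otherwise $\bdry f_0$ splits into two or more edge-disjoint simple directed cycles, each contractible because each lies in $D$; pick any such cycle bounding a hole of $f_0$. That cycle bounds a disk $D'\subsetneq D$ enclosing strictly fewer faces of $G$, and $\bdry D'$ is a simple contractible directed cycle in~$G$. Lemma~\ref{L:oneface-cycle} then produces a face $f\subset D'$ whose boundary is a simple contractible directed cycle; being a planar $2$-manifold with a single boundary component, $f^\circ$ is an open disk, and $\bdry f$ is coherent.

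The main obstacle is the orientation-sensitive claim that dual walks cannot escape $D$: everything hinges on the fact that the single arc of $\gamma$ inside each used edge bubble is directed tail-to-head and has $D$ on its right, forcing the right shore of that edge to be enclosed. Once this is in place, the rest is a straightforward acyclicity argument in $G^\star$ followed by a clean reduction to Lemma~\ref{L:oneface-cycle}.
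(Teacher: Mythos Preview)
Your approach is essentially the same as the paper's for the first half---use Lemma~\ref{L:weak-ccw} to get a weakly simple contractible walk $\Walk$, let $\gamma$ be the simple curve following it, and chase dual edges inside the bounded disk to find a coherent enclosed face $f_0$.  Where you diverge is the endgame: the paper inducts on the number of enclosed faces (the boundary of any hole of $f_0$ is a shorter weakly simple contractible walk), whereas you extract a simple contractible directed cycle from $\bdry f_0$ and invoke Lemma~\ref{L:oneface-cycle}.  Both finishes work, and yours is arguably tidier once Lemma~\ref{L:oneface-cycle} is in hand.

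There are a few genuine imprecisions you should clean up.  First, the ``shrink the bubbles so that $\gamma$ is disjoint from every face interior'' step is ill-posed: $\gamma\subset U$, and $U$ always overlaps the faces, so faces are never cleanly inside or outside $D^\circ$.  The paper avoids this by defining a face to be \emph{enclosed} when $f\setminus U\subset D^\circ$, and then working with the region $A$ formed by the enclosed faces, which it shows is an open disk using Lemma~\ref{L:weak-once}.  You need Lemma~\ref{L:weak-once} in the same place (and you do allude to ``the single arc of $\gamma$'' in each edge bubble, so you should cite it).  Second, your case split---``$\bdry f_0$ is a single simple cycle'' versus not---does not line up with ``$f_0^\circ$ is a disk'' versus not; a disk face can have a non-simple boundary walk that revisits a vertex.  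This does not break your argument, because step~7 still supplies a simple contractible cycle even when $f_0^\circ$ is already a disk, but the phrase ``pick any such cycle \emph{bounding a hole of $f_0$}'' is then vacuous; just say ``pick any such cycle.''  Finally, when you write ``Lemma~\ref{L:oneface-cycle} then produces a face $f\subset D'$,'' you are relying on the \emph{proof} of that lemma (which works inside the given disk), not its statement; make that explicit, since otherwise the face produced could in principle be the non-disk complementary region.
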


\begin{proof}
Let $\Walk$ be a weakly-simple contractible closed walk in $G$ (as guaranteed by Lemma \ref{L:weak-ccw}).  Smoothing~$\Walk$ at its vertices yields a simple contractible closed curve $\gamma$ in $U$ that follows $\Walk$; this closed curve is the boundary of an open disk~$D$ \cite{e-c2mi-66}.  Following terminology for face boundaries, we say that $\Walk$ is oriented \emph{clockwise} if~$D$ lies to the right of $\gamma$ and \emph{counterclockwise} otherwise.  We say that $\Walk$ \emph{encloses} a face~$f$ if the interior region $f\setminus U$ lies inside the disk $D$.

Let $A$ denote the area enclosed by $\Walk$, defined as the interior of the union of the closure of all faces enclosed by $\Walk$.  Area $A$ has genus zero and connected boundary $\Walk$ (by Lemma \ref{L:weak-once}), so each component of~$A$ is an open disk.  If $A$ is disconnected, then the boundary of each component of $A$ is a contractible closed walk that encloses fewer faces than $\Walk$.  Thus, we can  assume without loss of generality that~$A$ is a single open disk.  However, we \emph{cannot} assume that $\Walk$ itself is simple; the closure of $A$ might have holes and/or positive genus.  (See Figures \ref{F:torus-weak} and \ref{F:blob}.)

\begin{figure}[ht]
\centering
\includegraphics[scale=0.4]{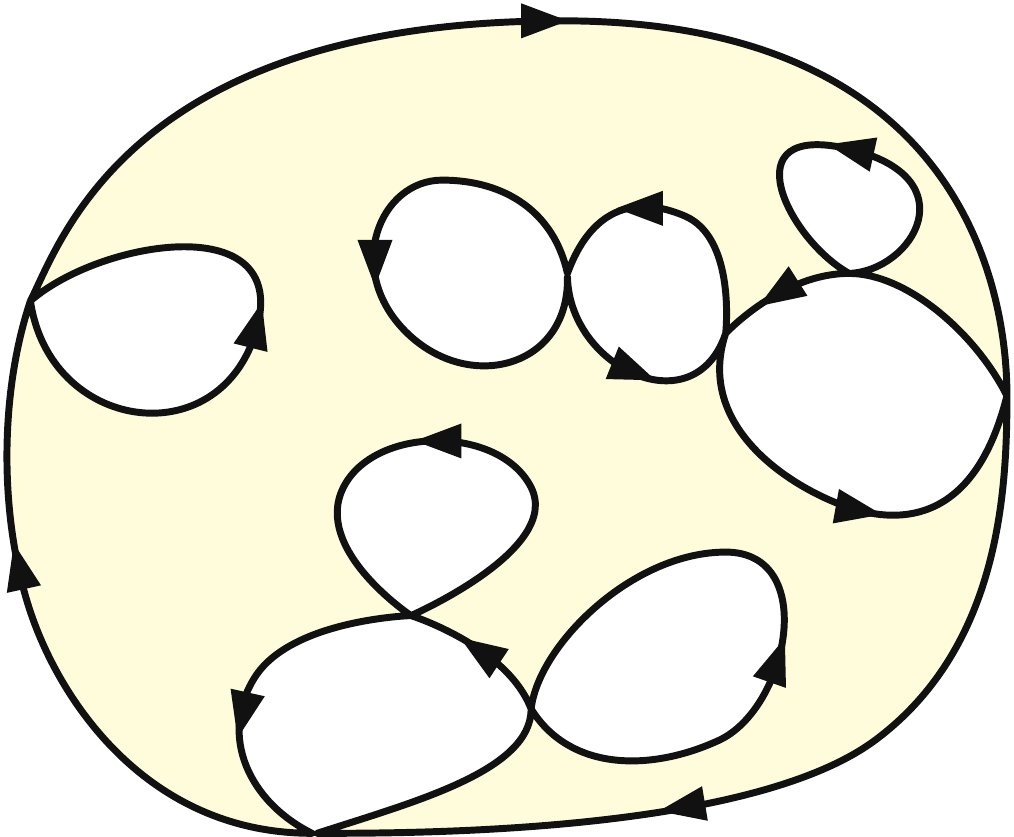}
\caption{A non-simple walk that encloses an open disk.}
\label{F:blob}
\end{figure}

Without loss of generality, assume $\Walk$ is oriented clockwise.  A dual walk that starts at a face enclosed by $\Walk$ cannot visit the same face more than once, because $G$ has no cocycles ($G^\star$ is a dag), and cannot cross $\Walk$, because $\Walk$ is oriented clockwise (all its dual edges point into $A$).  Thus, $\Walk$ must enclose at least one face $f$ with clockwise boundary (a sink in the dual graph $G^\star$).

The interior of~$f$ is a subset of the open disk~$A$, so it must be homeomorphic to an open disk with zero or more holes.  If $f$ is an open disk (and in particular, if $f=A$), we are done.  Otherwise, let $W'$ be closed walk that exactly covers the boundary of any interior hole of $f$.  Because $W'$ lies entirely inside the open disk $A$, it must be contractible.  Thus, $W'$ is a weakly simple contractible closed walk that encloses strictly fewer faces than~$\Walk$, and the lemma follows immediately by induction.
\end{proof}

\begin{theorem}
\label{Th:ccw}
Given a directed graph $G$ embedded on any surface, we can determine in $O(n)$ time whether there is a contractible closed walk in $G$.
\end{theorem}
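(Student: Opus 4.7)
The plan is to adapt the algorithm from Theorem~\ref{Th:con-cycle} essentially verbatim, replacing only the combinatorial certificate it looks for at the end. First I would compute the strong components of the dual graph $G^\star$ in $O(n)$ time and delete every edge $e$ for which the two endpoints of $e^\star$ lie in the same strong component. By Lemma~\ref{L:no-cocycle}, none of these cocycle edges can appear on any bounding closed walk, and in particular not on any contractible closed walk, so this pruning preserves the answer.

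The second phase is where Theorem~\ref{Th:ccw} diverges from Theorem~\ref{Th:con-cycle}. Instead of hunting for a face whose boundary is a \emph{simple} directed cycle, I would scan the face list of the pruned graph and test whether any face $f$ has (i)~interior homeomorphic to an open disk and (ii)~coherent boundary, meaning every edge in $\bdry f$ has $f$ on the same side. If such an $f$ exists, the closed walk that traverses $\bdry f$ once around the disk is a contractible closed walk in $G$, and we output \textsc{True}; otherwise we output \textsc{False}. The converse direction — that when no such face exists, no contractible closed walk exists at all — is exactly the content of Lemma~\ref{L:oneface-walk} applied to the cocycle-free graph produced in phase one.

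For the running time, the strong-component computation and cocycle-edge removal together take $O(n)$ time. Testing a face for the disk condition is standard (the face is a disk iff the induced subgraph on its boundary is a connected plane graph with Euler characteristic $1$), and checking coherence just compares the orientations of each boundary edge relative to $f$. Both tests run in time proportional to the combinatorial complexity of $\bdry f$, so summing over all faces gives $O(n)$ total, as required.

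I do not expect a genuine obstacle beyond correctly identifying the local certificate: the nontrivial topological content — that a contractible closed walk in a cocycle-free directed surface graph always implies the existence of a coherent disk face — has already been established in Lemma~\ref{L:oneface-walk} via the Hass--Scott-based argument of Lemma~\ref{L:weak-ccw}. The remaining work is mainly bookkeeping to confirm that classifying faces as disk/non-disk and coherent/incoherent can be folded into a single linear scan of the (possibly non-cellular) embedding.
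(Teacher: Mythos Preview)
Your proposal is correct and matches the paper's proof essentially line for line: remove cocycle edges via the strong components of $G^\star$, then scan faces for one whose interior is a disk and whose boundary is coherent, citing Lemmas~\ref{L:no-cocycle} and~\ref{L:oneface-walk} for correctness. The paper gives no additional detail on the linear-time face classification, so your brief justification of that step is already slightly more explicit than the original.
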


\begin{proof}
We use essentially the algorithm described by Theorem \ref{Th:con-cycle}: Remove all cocycle edges from $G$ by considering the strong components of the dual graph~$G^\star$, and then examine the faces of the remaining subgraph by brute force.  If we discover a face with coherent boundary whose interior is a disk, we output \textsc{True}; otherwise, we output \textsc{False}.  Correctness follows directly from Lemmas \ref{L:no-cocycle} and~\ref{L:oneface-walk}.
\end{proof}

In the last stage of the proof, if the face $f$ has an interior hole, then the graph $G$ \emph{must} be disconnected, because the interior of $f$ is a subset of the open disk $A$.  Thus, it may tempting to simplify the proof of Lemma \ref{L:oneface-walk} by assuming that the graph $G$ is connected.  Unfortunately, we cannot \emph{simultaneously} assume that $G$ is connected \emph{and} that $G$ contains no cocycles.  Resolving this tension lies at the core of our algorithm for finding bounding closed walks, which we describe next.

\Newpage
\section{Bounding Closed Walks}
\label{S:bcw}
%
%

To simplify our presentation, we assume in this section that our input graphs are embedded on surfaces \emph{without} boundary; this assumption is justified by the following observation.  Let $G$ be a directed graph embedded on a surface $S$ \emph{with} $b>0$ boundary components.  Let $S^\bullet$ be the surface without boundary obtained from $S$ by attaching a single disk with $b-1$ holes to the boundary cycles of $S$.  Thus, every boundary face of $G$ on the original surface $S$ is contained in a single face of $G$ on $S^\bullet$.  If $S$ has first Betti number $\beta$, then $S^\bullet$ has genus $O(\beta)$.

\begin{lemma}
A closed walk $\Walk$ in $G$ is bounding in $S$ if and only if it is bounding in $S^\bullet$.
\end{lemma}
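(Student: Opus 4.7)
The plan is to establish a correspondence between Alexander numberings on $S$ (subject to the boundary-face constraint) and Alexander numberings on $S^\bullet$ (up to an additive constant), in such a way that corresponding numberings induce identical boundary circulations on the common edge set of $G$. Since a closed walk is bounding precisely when it is an Euler tour of some boundary circulation, such a correspondence immediately transfers bounding walks between $S$ and $S^\bullet$ in both directions.

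The key structural observation is that the attached disk-with-$(b{-}1)$-holes is a single connected planar region glued onto $S$ along all $b$ boundary cycles at once; together with all boundary faces of $G$ on $S$ (each of which meets $\bdry S$), it therefore forms a single face $f^\bullet$ of $G$ on $S^\bullet$. Every non-boundary face of $G$ on $S$ remains a face of $G$ on $S^\bullet$ unchanged, and the edge set of $G$ is the same in either embedding. For the forward direction, given an Alexander numbering $\alpha$ on $S$ with $\Walk$ an Euler tour of $\bdry\alpha$, define $\alpha^\bullet$ on $S^\bullet$ by copying $\alpha$ on every non-boundary face and setting $\alpha^\bullet(f^\bullet) = 0$; a short case analysis on each edge $e$ of $G$---splitting into the cases where $0$, $1$, or $2$ of the shores of $e$ were boundary faces of $G$ on $S$---verifies that $\bdry\alpha^\bullet(e) = \bdry\alpha(e)$, because the merged face and all the old boundary faces all carry value~$0$. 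Hence $\Walk$ is an Euler tour of $\bdry\alpha^\bullet$ and is bounding in $S^\bullet$.

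For the reverse direction, given $\alpha^\bullet$ on $S^\bullet$ with $\Walk$ an Euler tour of $\bdry\alpha^\bullet$, first normalize by subtracting the constant $\alpha^\bullet(f^\bullet)$ from the value of every face; this leaves $\bdry\alpha^\bullet$ unchanged but makes $\alpha^\bullet(f^\bullet) = 0$. Then define $\alpha$ on $S$ by restricting to the non-boundary faces and extending by $0$ on every boundary face. The same edge case analysis shows $\bdry\alpha = \bdry\alpha^\bullet$, so $\Walk$ is bounding in $S$. The only subtle point, and the closest thing to an obstacle, is the topological claim that the disk-with-holes truly fuses all boundary faces of $G$ into a single face of $G$ on $S^\bullet$; this follows because the attached region is connected, because every boundary component of $S$ lies in the closure of some boundary face of $G$, and because no edges or vertices of $G$ are added in the gluing. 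Everything else reduces to the matching of values on the two sides of each edge.
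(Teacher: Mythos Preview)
Your proposal is correct and follows essentially the same approach as the paper's proof: transfer Alexander numberings between $S$ and $S^\bullet$ by assigning the merged face $f^\bullet$ the value~$0$ in the forward direction, and in the reverse direction first shift by the constant $-\alpha^\bullet(f^\bullet)$ before splitting $f^\bullet$ back into the boundary faces. The paper's write-up is much terser (it simply asserts that an Alexander numbering on $S$ \emph{is} one on $S^\bullet$, and handles the converse with the additive shift), but your more explicit edge-by-edge verification and your justification that all boundary faces coalesce into a single face are exactly the details being suppressed there.
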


\begin{proof}
Any Alexander numbering for $\Walk$ on $S$ is also an Alexander numbering for $\Walk$ on $S^\bullet$.

Let $\alpha$ be any Alexander numbering for $\Walk$ on $S^\bullet$.  For any integer $k$, the function $\alpha+k$ defined as $(\alpha+k)(f) = \alpha(f) + k$ for every face is also an Alexander numbering for $\Walk$ on $S^\bullet$.  Thus, there is an Alexander numbering $\alpha^\bullet$ of $\Walk$ such that $\alpha^\bullet(f)=0$ for the single face of $G$ containing $S^\bullet\setminus S$. The function~$\alpha^\bullet$ is an Alexander numbering for $\Walk$ on $S$.
\end{proof}

\subsection{Detection}

Our algorithm for detecting bounding closed walks relies on two elementary observations.  First, we can assume without loss of generality that $G$ is strongly connected, because any closed walk in the input graph $G$ stays within a single strong component of $G$.  On the other hand, Lemma \ref{L:no-cocycle} implies that we can assume without loss of generality that the input graph~$G$ has no cocycles; otherwise, we can remove all cocycles in $O(n)$ time by contracting each strong component of the dual graph $G^\star$ to a single dual vertex.  Indeed, if the input graph satisfies \emph{both} of these conditions, detecting bounding closed walks is trivial.

\begin{lemma}
\label{L:bcw-both-done}
Let $G$ be a strongly connected surface graph with at least one edge, whose dual graph $G^\star$ is acyclic.  There is a bounding closed walk in $G$.
\end{lemma}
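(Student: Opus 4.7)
The plan is to construct an explicit Alexander numbering whose boundary is a strictly positive circulation, and then invoke the Euler tour criterion recalled in Section~\ref{S:back}.

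By the reduction at the start of this section, we may assume $S$ has no boundary, so the Alexander numbering condition imposes no constraint on any face. Because $G^\star$ is acyclic, for each dual vertex $f^\star$ the \emph{longest} directed path in $G^\star$ starting at $f^\star$ is well-defined; I will define $\alpha\colon F(G)\to\N$ by letting $\alpha(f)$ be the number of edges on such a longest path. The key claim is that $\bdry\alpha(e)\ge 1$ for every edge $e$ of $G$: the dual edge $e^\star$ points from $\Left(e)^\star$ to $\Right(e)^\star$, so any longest directed path from $\Right(e)^\star$ extends through $e^\star$ to a strictly longer directed path from $\Left(e)^\star$, giving $\alpha(\Left(e)) \ge \alpha(\Right(e)) + 1$ and hence $\bdry\alpha(e) \ge 1$.

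Thus $\phi = \bdry\alpha$ is a nonzero boundary circulation whose support is all of $E(G)$. Since $G$ is strongly connected, $G$ is connected, so the support of $\phi$ is connected. By the Euler tour criterion stated in Section~\ref{S:back}, $\phi$ admits an Euler tour, which by definition is a bounding closed walk in $G$.

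This lemma is essentially a base case for the full bounding-walk algorithm, so I expect no real obstacle: the two structural hypotheses cooperate perfectly. The DAG structure of $G^\star$ is exactly what produces a strictly decreasing dual potential $\alpha$, while the strong connectivity of $G$ is exactly what guarantees that the resulting everywhere-positive primal circulation has connected support and therefore admits an Euler tour. The interesting work begins in the next subsection, where neither of these two hypotheses can be assumed a priori.
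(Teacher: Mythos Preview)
Your proof is correct and follows essentially the same approach as the paper: both construct an Alexander numbering from the DAG structure of $G^\star$ so that $\bdry\alpha$ is strictly positive on every edge, and then take an Euler tour of the resulting boundary circulation. The only difference is cosmetic---you use the longest-path length from $f^\star$, while the paper uses the rank of $f^\star$ in a topological sort of $G^\star$---and either choice yields a function that strictly decreases along every dual edge.
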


\begin{proof}
For each face $f$ of $G$, let $\alpha(f)$ be the rank of the dual vertex $f^\star$ in an arbitrary topological sort of the dual graph $G^\star$.  For each edge $e$ in $G$, we have $\alpha(\Left(e)) > \alpha(\Right(e))$.  Thus, the function $\bdry\alpha \colon e \mapsto \alpha(\Left(e)) - \alpha(\Right(e))$ is a positive integer boundary circulation with connected support.  Any Euler tour of this circulation is a bounding closed walk.
\end{proof}

Kao and Shannon \cite{ks-lrgop-89,ks-lpapd2-93,k-lpapd1-93} observed that these two assumptions are actually identical for planar graphs; a planar directed graph $G$ is strongly connected if and only if its dual graph $G^*$ is acyclic.  (This important observation is the basis of several efficient algorithms for planar directed graphs \cite{ks-lrgop-89,ks-lpapd2-93,k-lpapd1-93,kk-totcb-93,ry-fcppg-94,atz-itspd-03}.)  And indeed, finding bounding closed walks in planar graphs is trivial, because \emph{every} closed walk in a planar graph is bounding. 

However, this equivalence does not extend to directed graphs on more complex surfaces; a strongly connected directed surface graph can contain many directed cocycles.  Moreover, deleting all the directed cocycles from a strongly connected surface graph can disconnect the graph.  More subtly, if $H$ is a disconnected surface graph without cocycles, the induced embeddings of the individual components of $H$ \emph{can} contain cocycles.  See Figure~\ref{F:torus-alternate}.

\begin{figure}[ht]
\centering
\includegraphics[scale=0.35]{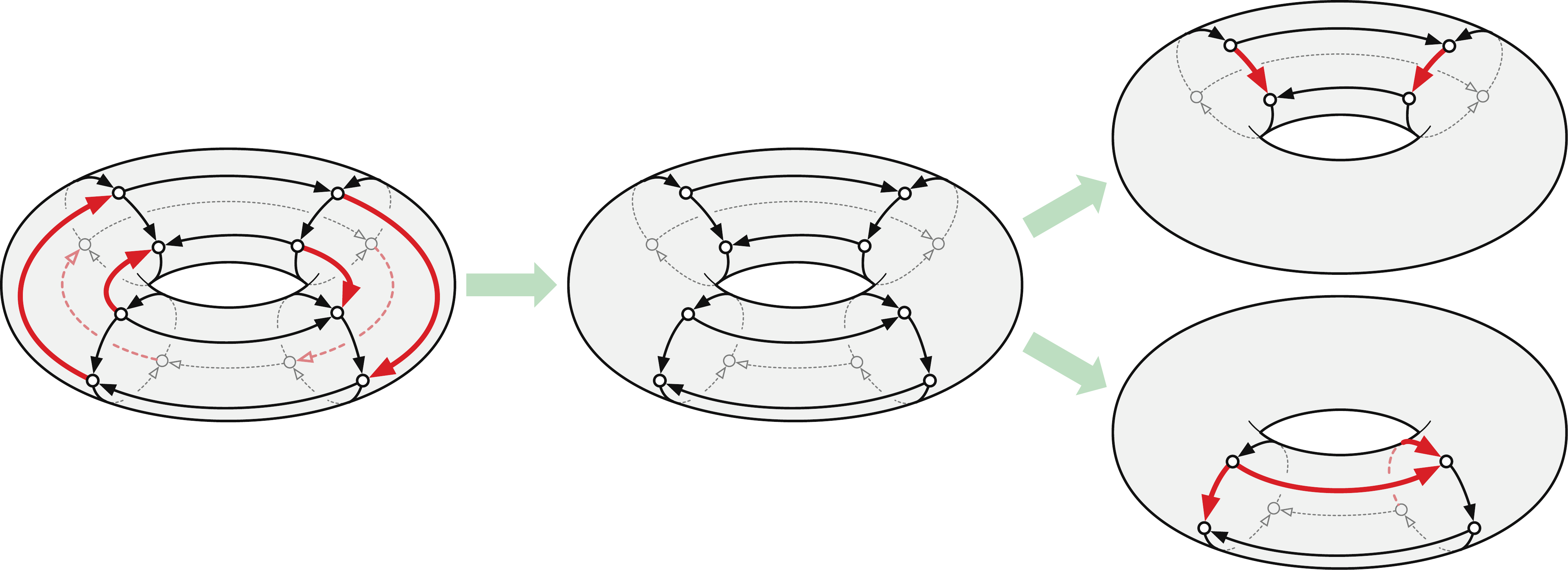}
\caption{A strongly connected directed graph on the torus with cocycles (bold red). Deleting all cocycles disconnects the graph.  The components of the disconnected graph have more cocycles.  (Only one cocycle is emphasized in each component.)}
\label{F:torus-alternate}
\end{figure}

Our algorithm repeatedly applies both of these simplifications, alternately removing cocycle edges and separating components, eventually reporting success if it ever finds a component that is both strongly connected and cocycle-free.  Each iteration can be performed in linear time, and each iteration removes at least one edge from the graph; thus, conservatively, our algorithm runs in $O(n^2)$ time.  But in fact, as we show next, our algorithm converges after only $O(g)$ iterations.

\begin{lemma}
\label{L:strong}
Let $H$ be a directed surface graph without cocycles.  Every weak component of $H$ is strongly connected.
\end{lemma}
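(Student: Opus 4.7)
The plan is to prove the contrapositive by contradiction: if some weak component $H_1$ of $H$ is not strongly connected, then $H$ must contain a cocycle.

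\textbf{Step 1 (one-sided cut).} I would choose $u \in V(H_1)$ from which some other vertex of $H_1$ is unreachable and let $R$ be the set of vertices of $H$ reachable from $u$. Then $u \in R \subsetneq V(H_1)$; because $R$ is closed under outgoing edges while $H_1$ is weakly connected, the set $X$ of edges of $H$ with tail in $V(H)\setminus R$ and head in $R$ is nonempty. Since no edge leaves $R$, the set $X$ is exactly the set of edges of $H$ with one endpoint in $R$.

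\textbf{Step 2 (ribbon neighborhood).} Let $N\subseteq\Sigma$ be a small regular ribbon neighborhood of the induced subgraph $H[R]$ in the ambient orientable surface $\Sigma$. Because every vertex of $R$ is reached from $u$ by a directed path that stays inside $H[R]$, the set $N$ is a connected sub-surface; its boundary $\partial N$ is a disjoint union of simple closed curves; each edge of $X$ crosses $\partial N$ transversely at exactly one point; and no other edges of $H$ meet $\partial N$. Since $X$ is nonempty, at least one component $C$ of $\partial N$ crosses some edges of $X$.

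\textbf{Step 3 (dual closed walk).} I orient $C$ so that $N$ lies on its left---possible because $\Sigma$ is orientable---and list the edges of $X$ crossed by $C$ in cyclic order as $e_1,\ldots,e_k$, with $k\ge 1$. Between crossings $e_j$ and $e_{j+1}$ (indices mod $k$), the curve $C$ lies entirely inside a single face $f_j$ of $H$. Each $e_j$ points from the right side of $C$ (containing $V(H)\setminus R$) to the left side (where $N$ sits), and a local orientation check (standing at the tail of $e_j$ and facing its head, the region behind $C$'s direction of travel is on the left) gives $\Left(e_j)=f_{j-1}$ and $\Right(e_j)=f_j$. Therefore the dual edges $e_1^\star,\ldots,e_k^\star$ assemble into a directed closed walk $f_0^\star\to f_1^\star\to\cdots\to f_{k-1}^\star\to f_0^\star$ of positive length in $H^\star$. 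This closed walk contains a simple directed cycle in $H^\star$, which is by definition a cocycle of $H$, contradicting the hypothesis.

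\textbf{Main obstacle.} The delicate part is Step~3. To assemble the dual edges traversed by $C$ into a \emph{directed} walk rather than a zigzag, two ingredients must work in tandem: orientability of $\Sigma$, so that ``left of $C$'' is globally well-defined along the whole curve, and the one-sidedness of the cut $X$, so that every edge of $X$ crosses $C$ from right to left (which uses that $R$ has no outgoing edges). If either condition failed, the dual edges along $C$ could alternate direction and no cocycle would follow.
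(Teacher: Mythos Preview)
Your proof is correct and follows the same high-level route as the paper: a weak component that is not strongly connected yields a one-sided directed edge cut, and such a cut must contain a cocycle. The paper dispatches this in two sentences by invoking the background fact (stated without proof earlier) that every directed edge cut in a surface graph is a disjoint union of cocycles; your ribbon-neighborhood argument in Steps~2--3 is precisely a self-contained proof of that fact in the case needed, and your orientation analysis matches the paper's dual-edge convention $\Tail(e^\star)=\Left(e)^\star$, $\Head(e^\star)=\Right(e)^\star$.
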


\begin{proof}
If a weak component of $H$ is not strongly connected, then $H$ contains a non-empty directed edge cut.  Every non-empty directed edge cut is the disjoint union of cocycles.
\end{proof}

\def\footprint#1{\left\llbracket{#1}\right\rrbracket}

We call any face of $G$ \EMPH{simple} if it is homeomorphic to an open disk and \EMPH{non-simple} otherwise.  (The boundary of a simple face is \emph{not} necessarily a simple cycle.)

\begin{lemma}
\label{L:simple-faces}
Let $G$ be a strongly connected surface graph, let $H$ be the subgraph of $G$ obtained by deleting all cocycles, and let $G'$ be a (strong) component of~$H$.  Every simple face of $G'$ is also a simple face of $G$.
\end{lemma}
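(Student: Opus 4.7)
The plan is a proof by contradiction. Assume $f'$ is a simple face of $G'$, i.e., an open disk, that is not a face of $G$; then the open disk $f'$ must contain some vertex or edge of $G$ in its interior. From this interior material I will extract a directed cocycle of $G$ whose image lies entirely in the open disk $f'$, and then rule out such a cocycle using strong connectivity of $G$.

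The key topological observation, which I will state and prove first as a subclaim, is: \emph{in a strongly connected directed surface graph, no directed cocycle is contractible in the surface}. This is essentially immediate. A contractible directed cocycle $\lambda$ bounds a disk $D$ in $S$ and, as a minimal edge cut, partitions $V(G)$ into two non-empty sides (vertices in $D$ versus vertices outside). Since $\lambda$ is a directed cocycle, all primal edges of the cut are oriented consistently across $\lambda$, so one of the two sides admits no incoming edges from the other, which contradicts strong connectivity of $G$.

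The main step is then to produce a cocycle edge of $G$ whose image lies in $\bar{f'}$. Any $G$-edge in the interior of $f'$ is either a cocycle edge (in which case we are done), or a non-cocycle edge, hence an edge of $H$. An $H$-edge with both endpoints in $V(G')$ would lie in the induced subgraph $G'$ and therefore on $\partial f'$, not in the interior; so any non-cocycle interior edge has at least one endpoint in another $H$-component $G''$. Because $G''$ is connected and disjoint from $G'$, its entire image is trapped in one face of $G'$, namely $f'$; by Lemma~\ref{L:strong}, $G''$ is strongly connected. Strong connectivity of the ambient graph $G$ then forces a directed edge with tail in $V(G'')$ and head outside $V(G'')$, and since its two endpoints lie in different $H$-components, this edge is a cocycle of $G$. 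I expect the most delicate part of the argument to be verifying that this cocycle edge has its image inside $\bar{f'}$: this follows because the edges of $G$ are embedded as interior-disjoint simple paths, so no $G$-edge can cross $\partial f'$ except at a vertex endpoint, which forces an edge with tail in the interior of $f'$ to have its head either on $\partial f'$ or in the interior of $f'$.

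Finally, let $e$ be such a cocycle edge with image in $\bar{f'}$, and let $\lambda$ be a directed cocycle of $G$ through $e^\star$. Every edge of $\partial f'$ is in $G' \subseteq H$, so it is a non-cocycle edge of $G$, and the dual of such an edge appears in no directed cocycle; in particular, $\lambda$ contains no dual of a $\partial f'$-edge. Consequently the simple closed curve $\lambda$ does not cross $\partial f'$, and since $e^\star$ can be drawn inside the open disk $f'$, all of $\lambda$ lies in $f'$. This makes $\lambda$ contractible in $S$, contradicting the subclaim. Therefore $f'$ contains no $G$-material in its interior, so $f'$ is a face of $G$; being already a disk, it is a simple face of $G$.
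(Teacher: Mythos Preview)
Your proof is correct and follows essentially the same approach as the paper's: both arguments produce a directed cocycle of $G$ lying entirely inside the open disk $f'$, observe that any cocycle inside a disk is a directed edge cut, and contradict strong connectivity of $G$. The paper organizes this as a two-case argument (no interior $G$-vertex versus some interior $G$-vertex), whereas you unify the cases by first locating a single cocycle edge in $\bar{f'}$ and then arguing that its entire cocycle is trapped in $f'$; your explicit justification that $\lambda$ cannot cross $\partial f'$ (because every $\partial f'$-edge lies in $H$ and hence in no cocycle) makes rigorous a step the paper states rather tersely.
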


\begin{proof}
Let $f$ be an arbitrary simple face of $G'$.  There are two cases to consider.

First, suppose $f$ contains no vertices of $G$ in its interior.  Then any cocycle in $G$ that passes through a face inside $f$ must contain at least one edge on the boundary of $f$.  But by definition, every edge on the boundary of~$f$ is an edge in~$G'$.  It follows that no cocycle in $G$ passes through a face in the interior of $f$, and thus no cocycle in $G$ includes an \emph{edge} in the interior of $f$.  We conclude that $f$ is actually a face of $G$.

On the other hand, suppose some vertex $v$ of $G$ lies inside the open disk $f$.  Then $v$ is disconnected from the boundary of $f$ in $H$ by removing cocycles in~$G$.  None of these cocycles contain an edge on the boundary of~$f$, because all such edges survive in $H$, so~$G$ must contain a cocycle whose edges and faces are entirely inside~$f$.  This cocycle must be a directed edge cut in~$G$, because \emph{every} directed cocycle inside a disk is a directed edge cut.  But no such edge cut exists, because $G$ is strongly connected.
\end{proof}

\begin{corollary}
\label{C:cocycle-nonsimple}
Let $G$ be a strongly connected surface graph, let $H$ be the subgraph of $G$ obtained by deleting all cocycles, and let $G'$ be a (strong) component of~$H$.  Every directed cocycle in $G'$ visits at least one non-simple face of $G'$.
\end{corollary}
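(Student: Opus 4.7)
The plan is to argue by contradiction: suppose $\lambda$ is a directed cocycle in $G'$ that visits only simple faces of $G'$, and derive a contradiction with the definition of $H$.

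First, I would unpack what $\lambda$ looks like. As a cocycle, $\lambda$ is a simple closed dual walk $f_0 \fenceup f_1 \fenceup \cdots \fenceup f_{\ell-1} \fenceup f_0$ in $G'$, where each $f_i$ is a (by assumption) simple face of $G'$ and each step corresponds to an edge $e_i \in E(G')\subseteq E(H) \subseteq E(G)$. By Lemma~\ref{L:simple-faces}, every $f_i$ is also a (simple) face of the original graph $G$.

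Next, I would check that this same sequence is also a simple closed dual walk in $G$. The key point is that for each edge $e_i$, its left and right shores in $G'$, namely $f_i$ and $f_{i+1}$, coincide with its shores in $G$. Indeed, the local region immediately to the left of $e_i$ lies inside the $G'$-face $f_i$; since $f_i$ is simultaneously a face of $G$, it must also be $\Left_G(e_i)$, and similarly on the right. Therefore $f_0 \fenceup f_1 \fenceup \cdots \fenceup f_{\ell-1} \fenceup f_0$ is a closed dual walk in $G$, and it is simple because the faces were already distinct in $G'$ and correspond to distinct faces of $G$. Hence $\lambda$ is a directed cocycle of $G$.

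This gives the contradiction: the edges $e_0,\dots,e_{\ell-1}$ belong to $G'$ and therefore to $H$, but by construction $H$ was obtained from $G$ precisely by deleting every edge that lies on a directed cocycle of $G$, so none of them can be in $H$. The only delicate point in the whole argument is the face-identification step, and Lemma~\ref{L:simple-faces} does all the real work there; the rest is bookkeeping about left/right shores.
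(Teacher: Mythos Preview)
Your argument is correct and is exactly the intended derivation: the paper states this result as an immediate corollary of Lemma~\ref{L:simple-faces} without further proof, and your contradiction argument—lifting a hypothetical all-simple-face cocycle in $G'$ to a cocycle in $G$, whose edges should therefore have been deleted from $H$—is the natural way to unpack it. The shore-identification step you flagged as ``the only delicate point'' is indeed the crux, and you handle it correctly.
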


We define the \EMPH{footprint $\footprint{G}$} of a directed surface graph $G$ as the union of the vertices, edges, and \emph{simple} faces of~$G$.  For example, the footprint of the middle graph in Figure \ref{F:torus-alternate} is the disjoint union of two annuli.  The embedding of $G$ is cellular if and only if $\footprint{G}$ is the entire surface.  Let \EMPH{$\beta_0(G)$} denote the number of (weak) components of $G$, or equivalently, the number of components of its footprint $\footprint{G}$.  Let \EMPH{$\beta_1(G)$} denote the first Betti number of $\footprint{G}$, which is the rank of the first homology group of $\footprint{G}$.   If~$G$ has at least one non-simple face, then
\[
	\beta_1(G) :=  \beta_0(G) - v(G) + e(G) - f_0(G),
\]
where $v(G)$, $e(G)$, and $f_0(G)$ denote the number of vertices, edges, and simple faces of $G$, respectively.  If~$G$ (and therefore $\footprint{G}$) is disconnected, then $\beta_1(G)$ is the sum of the first Betti numbers of its components.

\begin{lemma}
\label{L:beta-decreases}
Let $G$ be a strongly connected surface graph, let $H$ be the subgraph of $G$ obtained by deleting all cocycles, let $G'$ be a (strong) component of~$H$, and let $H'$ be the subgraph of $G'$ obtained by deleting all cocycles.  If $H' \ne G'$, then $\beta_1(H') < \beta_1(G')$.
\end{lemma}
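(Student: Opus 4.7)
My plan is to compute $\beta_1(G')-\beta_1(H')$ through the Euler characteristic of the two footprints, then reduce the question to a comparison of cycle spaces in $G'$.

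First, I would show that $\footprint{H'}$ is a CW-subcomplex of $\footprint{G'}$: the vertices and edges of $H'$ lie in $G'$, and crucially every simple face of $H'$ is also a simple face of $G'$. Whenever a face of $H'$ arises by merging several faces of $G'$ across cocycle edges, Corollary~\ref{C:cocycle-nonsimple} guarantees that the merge absorbs at least one non-simple face of $G'$ and so is itself non-simple; hence the simple faces of $H'$ are precisely the simple faces of $G'$ with no incident cocycle edges. Let $k$ be the total number of cocycle edges of $G'$, and let $s_1$ be the number of simple faces of $G'$ having at least one cocycle edge on their boundary (equivalently, those that fail to be simple faces of $H'$). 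Then $v(H')=v(G')$, $e(G')-e(H')=k$, and $f_0(G')-f_0(H')=s_1$, so $\chi(\footprint{G'})-\chi(\footprint{H'}) = s_1 - k$.

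Second, using $\beta_1 = \beta_0 - \chi$ (valid because both footprints contain at least one non-simple face, forcing $\beta_2 = 0$) together with $\beta_0(G')=1$, a direct computation gives
\[
    \beta_1(G') - \beta_1(H') \;=\; k - s_1 - (\beta_0(H')-1).
\]
The graph-theoretic cyclomatic identity $\beta_0(H')-1 = k - \rho$, where $\rho := \beta_1^{\mathrm{graph}}(G') - \beta_1^{\mathrm{graph}}(H')$ is the number of independent cycles of $G'$ destroyed by removing the cocycle edges, reduces the desired strict inequality $\beta_1(H') < \beta_1(G')$ to the cycle-space claim $\rho > s_1$.

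The main obstacle is this cycle-rank comparison. The boundaries of the $s_1$ affected simple faces are linearly independent in $H_1$ of the graph $G'$---because $\beta_2(\footprint{G'})=0$ forces the cellular boundary operator $\partial_2$ to be injective on simple-face generators---and each of these cycles is destroyed, giving $\rho \ge s_1$ immediately. To upgrade this to $\rho \ge s_1+1$, I would fix any cocycle $\lambda$ of $G'$ and, by Corollary~\ref{C:cocycle-nonsimple}, a non-simple face $f_{ns}$ visited by $\lambda$; one boundary component of $f_{ns}$ necessarily contains an edge of $\lambda$, so it is a destroyed cycle of $G'$ whose homology class in $H_1$ of the graph $G'$ is linearly independent of those of the simple-face boundaries. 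Rigorously verifying this independence---which I expect to be the technical heart of the proof---should follow from a careful analysis of the cellular chain complex of $\footprint{G'}$ together with the relative long exact sequence of the pair $(\footprint{G'},\footprint{H'})$, using that a non-simple face's boundary cannot be realized as a boundary chain in the footprint while hitting cocycle edges in the required pattern.
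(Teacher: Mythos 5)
Your Euler-characteristic bookkeeping is correct and does cleanly reduce the lemma to the cycle-space inequality $\rho > s_1$, where $\rho = \beta_1^{\mathrm{graph}}(G') - \beta_1^{\mathrm{graph}}(H')$. However, the argument you give for that inequality has a genuine gap. Linear independence of the $s_1$ affected face boundaries in the cycle space $Z_1(G')$ (which does follow from $\partial_2$ being injective) is \emph{not} the relevant notion: $\rho$ equals $\dim\bigl(Z_1(G')/Z_1(H')\bigr)$, so what you need is that the images of those cycles in the quotient $Z_1(G')/Z_1(H')$ are linearly independent. The inference ``each $\bdry f_i$ is independent in $Z_1(G')$ and each uses a cocycle edge, therefore $\rho\ge s_1$'' is a non-sequitur: a nontrivial linear combination $\sum a_i\,\bdry f_i$ could have all its cocycle-edge coordinates cancel and hence lie in $Z_1(H')$, even though each summand individually does not. (Compare: in $\Q^2$, the vectors $(1,1)$ and $(1,2)$ are independent and neither lies in the line $W=\Q\cdot(1,0)$, yet $\dim \Q^2 - \dim W = 1$.) So the ``$\rho \ge s_1$ immediately'' claim is unjustified, and the $+1$ upgrade you defer is, by your own reduction, \emph{equivalent} to the lemma itself, so the proposal has not made progress past the Euler-characteristic reformulation. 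Making the quotient-space independence work would be the actual content of the proof, and you would need to use that the cocycle edges organize into dual arcs whose endpoints are non-simple faces, plus strong connectivity of $G'$, in an essential way.

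The paper proves the lemma by a much more direct route that avoids cycle-space rank arguments entirely. Using Corollary~\ref{C:cocycle-nonsimple}, every cocycle of $G'$ is decomposed into edge-disjoint \emph{dual arcs} whose endpoints are non-simple faces and whose interior faces are simple and distinct. Deleting one dual arc $\alpha$ merges a chain of faces into a single non-simple face; if $\alpha$ is not an edge cut this does not disconnect the graph, and a direct Euler-characteristic count (exactly the kind you set up, but applied to a single arc) shows $\beta_1$ drops by one, whereas if $\alpha$ is an edge cut, $\beta_1$ is unchanged. The crucial point--and the role of strong connectivity--is that the \emph{first} dual arc removed from $G'$ cannot be an edge cut, so $\beta_1$ strictly decreases at that step; the remaining steps are handled by induction on the sequence of nested subgraphs $G' = G_0 \supset G_1 \supset \cdots \supset G_\ell = H'$. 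If you want to salvage your approach, I would suggest replacing the attempted independence argument with exactly this one-arc-at-a-time bookkeeping, which localizes the Euler-characteristic change and makes the role of ``not an edge cut'' transparent.
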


\begin{proof}
Assume $H'\ne G'$, so $G'$ contains at least one cocycle.  Corollary \ref{C:cocycle-nonsimple} implies that every cocycle in $G'$ passes through a non-simple face of~$G'$.  Thus, every cocycle in $G'$ can be decomposed into edge-disjoint \emph{dual arcs} of the form $f_0\fenceup f_1\fenceup \cdots\fenceup f_k$, where the initial face $f_0$ and final face $f_k$ are non-simple, and the intermediate faces $f_i$ are simple and distinct.

Let $\alpha_1$ be any dual arc in $G'$.  Because $G'$ is strongly connected, $\alpha$ is \emph{not} a directed edge cut, and therefore $G'\setminus \alpha_1$ is (at least weakly) connected.  It follows that $\beta_1(G' \setminus \alpha_1) = \beta_1(G') - 1$.

For any other dual arc $\alpha'$ in $G'$, the edges in $\alpha'\setminus \alpha_1$ define one or more dual arcs in $G'\setminus \alpha_1$.  Thus, we can proceed inductively as follows.  Define a sequence of nested subgraphs $G' = G_0 \supset G_1 \supset G_2 \supset \cdots \supset G_\ell = H$, where each subgraph $G_i$ is obtained by deleting a dual arc $\alpha_i$ from $G_{i-1}$, and the final subgraph $G_\ell = H$ contains no dual arcs (and therefore no cocycles).  For each index $i>0$, either $\alpha_i$ is an edge cut in $G_{i-1}$, which implies $\beta_1(G_i) = \beta_1(G_{i-1})$, or~$\alpha_i$ is not an edge cut, which implies $\beta_1(G_i) = \beta_1(G_{i-1}) - 1$.  The lemma now follows by induction.
\end{proof}

\begin{theorem}
\label{Th:bcw-detect}
Given a directed graph $G$ embedded on any surface with genus $g$, we can determine in $O(gn)$ time whether there is a bounding closed walk in $G$.
\end{theorem}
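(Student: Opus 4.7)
The plan is to iterate the two simplifications highlighted in the lead-up to the theorem---cocycle deletion and decomposition into weak components---until every surviving subgraph is both strongly connected and cocycle-free, and then appeal to Lemma~\ref{L:bcw-both-done}.

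Concretely, I would maintain a worklist of subgraphs, initialized to the input graph~$G$.  To process a subgraph $G'$, I first compute the strong components of the dual $(G')^\star$ in linear time; an edge of $G'$ lies in a cocycle of $G'$ if and only if its two dual endpoints share a strong component.  Delete all such edges to obtain a cocycle-free subgraph $H'$.  By Lemma~\ref{L:strong}, every weak component of $H'$ is strongly connected.  For each such component $G''$, test whether $(G'')^\star$ is acyclic.  If it is and $G''$ has at least one edge, Lemma~\ref{L:bcw-both-done} guarantees a bounding closed walk, so report \textsc{True} and halt.  Otherwise add $G''$ to the worklist.  If the worklist ever empties without a \textsc{True} report, output \textsc{False}.

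Correctness of \textsc{True} is immediate from Lemma~\ref{L:bcw-both-done}.  For the converse, suppose $G$ admits a bounding closed walk $\Walk$.  By Lemma~\ref{L:no-cocycle}, $\Walk$ traverses no cocycle edge, so $\Walk$ survives the first deletion intact and lies inside one weak component.  Inductively, $\Walk$ is preserved at every recursive step and eventually comes to rest in some descendant subgraph that contains no cocycles.  That descendant is strongly connected by Lemma~\ref{L:strong} and has at least one edge, so the algorithm reports \textsc{True} upon examining it.

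For the running time, view the computation as a forest whose nodes are the processed subgraphs and whose children of $G'$ are the weak components of $H'$.  The subgraphs at any fixed depth are pairwise edge-disjoint subgraphs of $G$, so the combined work per depth level is $O(n)$.  Along any root-to-leaf path, Lemma~\ref{L:beta-decreases} forces $\beta_1$ to strictly decrease at each step, so the forest has depth $O(\beta_1(G))$.  An Euler-characteristic calculation bounds $\beta_1(G) = O(g)$: using $v - e + f_0 + \sum_{f\text{ non-simple}} \chi(f) = 2-2g$ together with $\chi(f) \le 0$ for every non-simple face gives $\beta_1(G) \le \beta_0(G) + 2g - 2$ whenever a non-simple face exists, while a cellular embedding yields $\beta_1(G) = 2g$ directly.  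The overall running time is therefore $O(gn)$.

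The main obstacle is the Betti-number accounting that drives the depth bound.  Lemma~\ref{L:beta-decreases} applies the formula $\beta_1(G') = \beta_0(G') - v(G') + e(G') - f_0(G')$, which is only valid when $G'$ has at least one non-simple face.  Corollary~\ref{C:cocycle-nonsimple} rescues us: any subgraph we actually recurse on carries a cocycle, hence also a non-simple face, so the hypothesis of Lemma~\ref{L:beta-decreases} is satisfied at every recursive step.  With that caveat handled, the strict per-step decrease of $\beta_1$ combined with the $O(n)$ edge-disjoint processing per depth level delivers the advertised $O(gn)$ bound.
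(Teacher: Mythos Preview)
Your proposal is correct and follows essentially the same recursive scheme as the paper: alternately delete cocycle edges and split into components, invoke Lemma~\ref{L:bcw-both-done} upon reaching a strongly connected cocycle-free subgraph with an edge, and bound the recursion depth via Lemma~\ref{L:beta-decreases}.  The paper's only additional step is to restrict to a single strong component of $G$ at the outset, which directly supplies the strongly-connected grandparent that both Lemma~\ref{L:beta-decreases} and Corollary~\ref{C:cocycle-nonsimple} require at the very first recursive step; without this preprocessing your $\beta_1$-decrease argument strictly applies only from the second level onward, but that costs at most an additive constant in the depth and does not affect the $O(gn)$ bound.
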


\begin{proof}
Assume $G$ is strongly connected, since otherwise, we can consider each strong component of~$G$ separately.  Let $H$ be the graph obtained by deleting all cocycles of $G$.  We can construct $H$ in $O(n)$ time by computing the strong components of $G^\star$ in linear time, and then deleting any edge of $G$ whose incident faces lie in the same strong component of $G^\star$.  Lemma \ref{L:no-cocycle} implies that any bounding closed walk in $G$ is also a bounding closed walk in $H$.

If $H$ has no edges, we can immediately report that $G$ has no bounding closed walks.  If $H$ is weakly connected and has at least one edge, then Lemmas \ref{L:bcw-both-done} and \ref{L:strong} imply that $H$ contains a bounding closed walk.  Otherwise, we recursively examine each (strong) component of $H$.

Each vertex of the original input graph $G$ participates in only one subproblem at each level of recursion, so the total time spent at each level of the recursion tree is $O(n)$.  Finally, Lemma \ref{L:beta-decreases} implies that the depth of the recursion tree is at most $O(g)$.
\end{proof}

Call the subgraph obtained from $G$ by alternately removing cocycles and isolating components the \emph{snarl} of $G$; the snarl is the largest subgraph of $G$ in which every component is strongly connected and cocycle-free.  We conjecture that the snarl of a surface graph (or at least one nontrivial component thereof) can actually be extracted in $O(n)$ time.

\subsection{Explicit Construction}
\label{SS:bcw-quad}

This algorithm described in Theorem \ref{Th:bcw-detect} only reports whether $G$ contains a closed bounding walk; it does not actually compute such a walk. With some additional straightforward bookkeeping, we can compute an \emph{implicit} representation of a bounding closed walk as an Alexander numbering $\alpha$ of the faces of~$G$, without increasing the running time of our algorithm.  The complexity of this implicit representation is $O(n)$.  If necessary, we can then obtain an \emph{explicit} bounding closed walk, as an alternating sequence of vertices and edges, by computing an Euler tour of the circulation $\bdry\alpha$, as described in the proof of Lemma~\ref{L:bcw-both-done}.  Because each Alexander number $\alpha(f)$ is an integer between $0$ and $O(n)$, the total length of this Euler tour is $O(n^2)$, and we can compute in $O(n^2)$ time using standard Euler-tour algorithms. 

\begin{corollary}
\label{C:bcw-quadratic-algo}
Let $G$ be a directed surface graph with $n$ vertices.  In $O(n^2)$ time, we can either compute an explicit description of a bounding closed walk in $G$ or report correctly that no such walk exists.
\end{corollary}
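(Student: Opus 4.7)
The plan is to execute the four stages hinted by the paragraph preceding the statement: (i)~augment the detection algorithm of Theorem~\ref{Th:bcw-detect} so that, in addition to answering yes/no, it outputs an implicit witness in the form of an Alexander numbering $\alpha\colon F(G)\to\Z$; (ii)~observe that the values of $\alpha$ are bounded by $O(n)$; (iii)~compute the boundary circulation $\bdry\alpha$ in $O(n)$ time; and (iv)~compute an Euler tour of $\bdry\alpha$, whose total weight is $O(n^2)$, in time linear in its size.

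For step~(i) I would carry extra bookkeeping through the recursion of Theorem~\ref{Th:bcw-detect}.  When the recursion bottoms out at a strongly connected cocycle-free subgraph $G^\dagger$, I invoke the topological-sort construction from Lemma~\ref{L:bcw-both-done} to obtain an Alexander numbering $\alpha^\dagger$ on the faces of $G^\dagger$, whose values lie between $0$ and $|F(G^\dagger)|-1 = O(n)$.  I then lift $\alpha^\dagger$ to all of $G$: each face $f$ of $G$ lies inside some unique face $F$ of $G^\dagger$ (since the recursion only deletes edges), and I set $\alpha(f) := \alpha^\dagger(F)$.  Faces of $G$ lying in components of the recursion that did not produce a witness receive value~$0$.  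The boundary condition that $\alpha$ vanish on boundary faces of $S$ is arranged exactly as in the reduction at the start of Section~\ref{S:bcw}, by translating $\alpha^\dagger$ so that the face containing $S^\bullet\setminus S$ (if any) receives value~$0$.  A short verification then shows that $\bdry\alpha(e) = 0$ for every edge $e \in E(G)\setminus E(G^\dagger)$: such an $e$ is either a deleted cocycle edge or lies strictly inside a discarded component, and in either case its two shores in~$G$ lie inside a common face of $G^\dagger$, so $\alpha(\Left(e)) = \alpha(\Right(e))$.

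Step~(iv) is the only part whose running time is not linear.  The total weight of $\bdry\alpha$ is $\sum_e |\alpha(\Left(e))-\alpha(\Right(e))| \le O(n)\cdot O(n) = O(n^2)$, and its support is contained in the edge set of $G^\dagger$, which is strongly connected and therefore admits an Euler tour.  A standard Hierholzer sweep over the multigraph whose edges are the weighted copies of $\bdry\alpha$ produces the explicit walk in time linear in its total length, i.e., $O(n^2)$.  The main obstacle is really only the lift in step~(i): one has to be careful that the terminal-subgraph numbering extends to a valid Alexander numbering on the original graph $G$ under which every edge deleted during the recursion contributes $0$ to the boundary circulation.  Once that bookkeeping is pinned down, the remaining steps are routine.
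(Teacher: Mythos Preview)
Your proposal is correct and follows essentially the same approach as the paper: extract an Alexander numbering with $O(n)$-bounded values from the terminal subgraph found by the detection algorithm, then compute an Euler tour of its boundary circulation in $O(n^2)$ time.  The paper leaves the lifting step as ``straightforward bookkeeping,'' whereas you spell it out; your observation that any edge $e\notin G^\dagger$ has both shores in a common face of $G^\dagger$ (so $\bdry\alpha(e)=0$) is exactly the verification needed, and your remark that the support of $\bdry\alpha$ equals $E(G^\dagger)$ and is therefore connected is what guarantees the Euler tour exists.
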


\begin{corollary}
\label{C:bcw-quadratic}
Let $G$ be a directed surface graph with $n$ vertices.  The shortest bounding closed walk in~$G$ (if such a walk exists) has length $O(n^2)$.
\end{corollary}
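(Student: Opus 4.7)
The plan is to deduce this bound immediately from the explicit construction behind Corollary \ref{C:bcw-quadratic-algo}: that algorithm actually produces a specific bounding closed walk of length $O(n^2)$, which is by definition an upper bound on the length of the shortest such walk.

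First, I would recall the construction. The detection algorithm of Theorem \ref{Th:bcw-detect}, augmented with bookkeeping, finds a (strong) component $G'$ of the snarl of $G$ that is both strongly connected and cocycle-free, and produces an Alexander numbering of $G'$ by setting $\alpha(f)$ equal to the rank of $f^\star$ in a topological sort of $(G')^\star$, exactly as in the proof of Lemma \ref{L:bcw-both-done}. Extending this to an Alexander numbering of all of $G$ is immediate: each face of $G$ lies in exactly one face of $G'$ (since the edges removed during the snarl computation only merge faces), so inheriting the value from the containing face of $G'$ gives a globally consistent Alexander numbering $\alpha$ on $G$. The crucial bound is that $\alpha$ takes values only in $\{0, 1, \dots, |F(G')|-1\}$; since $G'$ is a subgraph of $G$, Euler's formula gives $|F(G')| = O(n)$, so $\alpha(f) = O(n)$ for every face $f$.

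Next, I would bound the Euler-tour length. Because $\bdry\alpha(e) = \alpha(\Left(e)) - \alpha(\Right(e))$ is a difference of two non-negative integers each at most $O(n)$, we have $\bdry\alpha(e) = O(n)$ for every edge $e$, with equality to $0$ on edges whose two sides lie in the same face of $G'$. Summing over the $O(n)$ edges of $G$ (again by Euler's formula), the total length of any Euler tour of $\bdry\alpha$ is
\[
\sum_{e \in E(G)} \bdry\alpha(e) \;\le\; O(n) \cdot O(n) \;=\; O(n^2).
\]
This Euler tour is a bounding closed walk in $G$, so the shortest bounding closed walk has length at most $O(n^2)$.

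I do not anticipate a real obstacle here: once Corollary \ref{C:bcw-quadratic-algo} and Lemma \ref{L:bcw-both-done} are in hand, the result reduces to the one-line counting argument above. The only point to check carefully is the claim that the topological-sort-rank Alexander numbering defined on a component of the snarl genuinely extends to an Alexander numbering of $G$ whose boundary circulation is supported on that component; but this is immediate because the edges we discarded (cocycle edges and edges between different components of the snarl) have identical $\alpha$-values on both sides and therefore contribute $0$ to the circulation.
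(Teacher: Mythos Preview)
Your proposal is correct and follows essentially the same approach as the paper: the corollary is stated without a separate proof, as an immediate consequence of the preceding paragraph, which observes that the Alexander numbers produced by the algorithm lie between $0$ and $O(n)$, so the Euler tour of $\bdry\alpha$ has total length $O(n^2)$. Your write-up is actually more careful than the paper's, in that you make explicit how the Alexander numbering on the snarl component extends to all of $G$ and why the discarded edges contribute zero to $\bdry\alpha$.
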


For graphs on the sphere or the torus, every bounding closed walk is actually contractible, so Lemmas~\ref{L:weak-ccw} and \ref{L:weak-once} imply that the shortest bounding closed walk actually has length $O(n)$, and Theorem~\ref{Th:ccw} implies that we can find an explicit bounding closed walk in $O(n)$ time.  For graphs on even slightly more complicated surfaces, however, our quadratic upper bounds are actually tight.

\begin{theorem}
For any fixed integer $g\ge 2$ and any positive integer $n$, there is a directed graph with $2n$ vertices and $3n+1$ edges, embedded on an orientable surface of genus $g$, in which the shortest bounding closed walk has length $\Omega(n^2)$.
\end{theorem}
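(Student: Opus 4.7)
The proof hinges on the Alexander-numbering characterization of bounding walks: any bounding closed walk in a surface graph has length
\[
\sum_e \bigl(\alpha(\Left(e)) - \alpha(\Right(e))\bigr),
\]
where $\alpha\colon F(G) \to \N$ is a non-constant Alexander numbering with $\alpha(\Left(e)) \ge \alpha(\Right(e))$ on every edge and whose boundary circulation has connected support. Hence, to prove the $\Omega(n^2)$ lower bound, it suffices to exhibit a single graph $G_n$ with $2n$ vertices and $3n+1$ edges, embedded on a genus-$g$ surface, for which every admissible non-constant $\alpha$ yields total variation $\Omega(n^2)$.

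I would build $G_n$ on a genus-$2$ surface (the case $g>2$ follows by attaching $g-2$ trivial handles disjoint from $G_n$) as follows: take two directed $n$-cycles $U = u_0 \to u_1 \to \cdots \to u_{n-1} \to u_0$ and $V = v_0 \to v_1 \to \cdots \to v_{n-1} \to v_0$ connected by $n$ rungs $u_i \to v_i$, plus one extra edge whose embedding winds around the second handle of the surface. The embedding is chosen so that $U$ and $V$ cobound a cylindrical strip wrapping one handle; the rungs subdivide this strip into $n$ quadrilateral faces $f_0, f_1, \ldots, f_{n-1}$, while the remaining faces (shaped by the extra edge and the complement of the strip) glue the two ``ends'' of the face cycle together in a twisted way.

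The heart of the argument is showing that every admissible non-constant $\alpha$ must assign values to $f_0, \ldots, f_{n-1}$ spanning a range of $\Omega(n)$. I would establish this by tracing the constraints around the annular strip: the rungs together with the cycle edges force $\alpha$ to be monotone along the cyclic sequence $f_0, f_1, \ldots, f_{n-1}$, while the extra edge (winding around the second handle) imposes a further constraint that is incompatible with this cyclic sequence closing at zero; the only admissible numberings wrap with a net increment of $\Omega(n)$. Once this range is established, each of the $n$ edges of $U$ contributes $\alpha(F) - \alpha(f_i)$, where $F$ is the adjacent large face; because $\alpha(F) \ge \max_i \alpha(f_i)$, summing over $i$ gives $\sum_i \bigl(\alpha(F) - \alpha(f_i)\bigr) = \Omega(n^2)$, and similarly the edges of $V$ contribute $\Omega(n^2)$. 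A direct construction of a specific $\alpha$ achieving $O(n^2)$ total variation then confirms tightness.

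The main obstacle is the topological step forcing the range of $\alpha$ to be $\Omega(n)$: a naively symmetric embedding yields only the trivial $\alpha \equiv$ constant, since the cyclic constraints around the strip close consistently at zero. The orientations of $U$, $V$, and the rungs, together with the precise placement of the extra edge on the second handle, must be chosen to introduce a Dehn-twist-like defect that prevents this closure. I expect to prove the key lemma by lifting $\bdry\alpha$ to the universal cover of the genus-$2$ surface and showing that any compatible lift must traverse $\Omega(n)$ fundamental domains, forcing the range of $\alpha$ to grow linearly with $n$.
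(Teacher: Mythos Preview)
Your proposal has a genuine gap at exactly the point you flag as ``the main obstacle.''  In your construction the $n$ rungs separate the strip faces $f_0,\dots,f_{n-1}$ in a \emph{cyclic} pattern: each rung $u_i\to v_i$ has $f_{i-1}$ on one side and $f_i$ on the other, so the dual edges $f_{i-1}^\star\to f_i^\star$ form a directed cycle in $G^\star$.  That directed dual cycle is a cocycle of $G$, and Lemma~\ref{L:no-cocycle} then says no bounding walk can traverse any rung at all.  Equivalently, the inequalities $\alpha(f_0)\ge\alpha(f_1)\ge\cdots\ge\alpha(f_{n-1})\ge\alpha(f_0)$ force all strip values equal; a single extra edge on the other handle touches at most two faces and cannot break this chain of $n$ local constraints.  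The ``Dehn-twist defect'' you hope for would have to alter which faces the rungs bound, not merely add an edge elsewhere, and you have not specified any such embedding.  Lifting to the universal cover will not help, because the obstruction is already visible in the finite dual graph.

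The paper sidesteps this entirely by making the chain of faces \emph{linear} rather than cyclic.  Its graph is not two $n$-cycles with rungs; it is one long $n$-cycle, a chain of $n$ \emph{length-$2$} cycles, and a single self-loop.  The $2$-cycles create a dual \emph{path} $f_0\to f_1\to\cdots\to f_n$ (so no cocycle, and $\alpha$ is forced to increase by at least $1$ at each step, giving $\alpha(f_n)-\alpha(f_0)\ge n$).  The long $n$-cycle and the self-loop are then embedded so that \emph{every} one of their edges has $f_0$ on one side and $f_n$ on the other; each such edge therefore contributes at least $n$ to the walk length, and the $n$ edges of the long cycle alone give $n^2$.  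A short separate argument (showing that any strongly connected proper subgraph merges $f_0$ with $f_n$ and hence acquires a cocycle) establishes that every bounding walk must use every edge, pinning down the lower bound exactly.
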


\begin{proof}
It suffices to consider the case $g=2$.  Let $G$ be a directed graph consisting of a directed cycle of length $n$ (“the long cycle”), a chain of $n$ cycles of length $2$ (“the short cycles”), and a single cycle of length~$1$ (“the self-loop”), embedded on a surface of genus $2$ as shown in Figure \ref{F:2-torus-long-ccw} (for $n=5$).  This embedding has $n+1$ faces, of which two are annuli and the rest are disks.  The long cycle and the self-loop both lie on the common boundary of both annular faces.  

\begin{figure}[ht]
\centering
\includegraphics[scale=0.5]{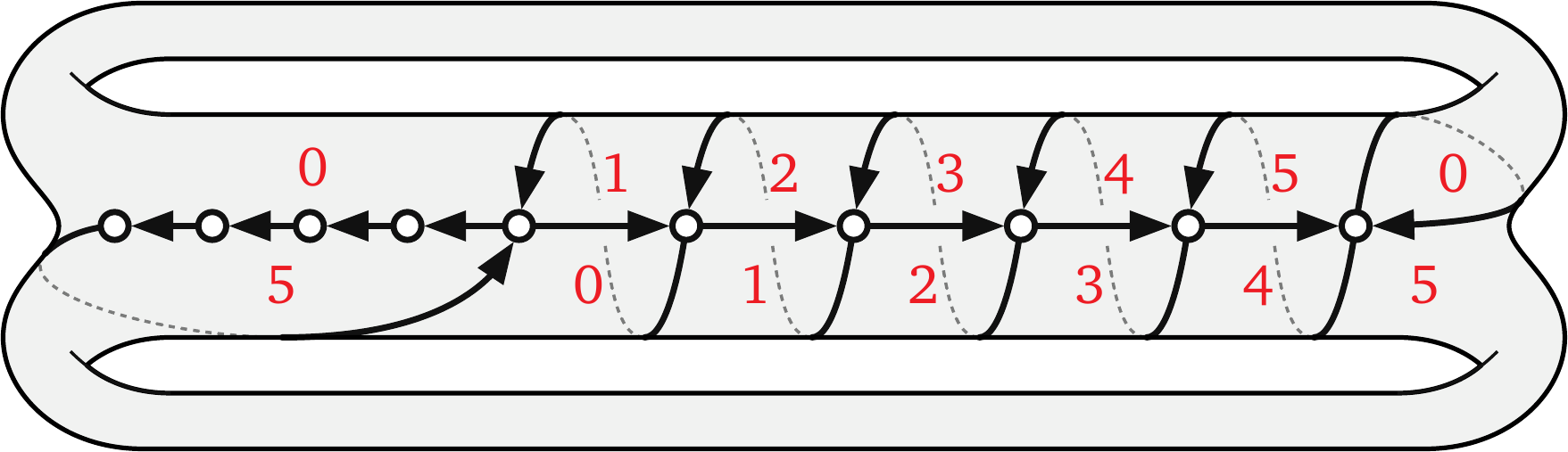}
\caption{A directed surface graph with only long bounding closed walks, with a minimal valid Alexander numbering.}
\label{F:2-torus-long-ccw}
\end{figure}

The dual graph $G^\star$ consists of a directed path $f_0 \arcto f_1 \arcto \cdots \arcto f_{n-1} \arcto f_n$, where $f_0$ and $f_n$ are the annular faces, plus $n$ additional edges directly from $f_0$ to $f_n$.  Because $G^\star$ is a dag, $G$ has no cocycles, and so Lemma~\ref{L:bcw-both-done} implies that there is a bounding closed walk in $G$.  Specifically, the function $\alpha(f_i) = i$ for all $i$ is the Alexander numbering of a bounding circulation; again, see Figure~\ref{F:2-torus-long-ccw}.

The subgraph of edges traversed by any closed walk is strongly connected.  Straightforward exhaustive case analysis implies that the only strongly connected subgraph of $G$ that contains both the long cycle and the self-loop is the entire graph $G$.  Thus, every strongly connected proper subgraph $H$ of $G$ omits either the entire long cycle or the self-loop.  In either case, faces $f_0$ and $f_n$ of $G$ belong to a single face of $H$.  Thus, the dual graph~$H^\star$ is strongly connected, and Lemma \ref{L:no-cocycle} implies that $H$ cannot support a bounding closed walk.  We conclude that every bounding closed walk in $G$ traverses \emph{every} edge of~$G$ at least once.

Now let $\Walk$ be any bounding closed walk in $G$ that traverses every edge at least once, and let $\alpha \colon F(G) \to \Z$ be the Alexander numbering of $\Walk$.  For each edge $e$, we must have $\alpha(\Left(e)) > \alpha(\Right(e))$, so $\alpha$ must be consistent with the only topological order of $G^\star$.  It follows that $\alpha(f_n) - \alpha(f_0) \ge n$.  Thus, $\Walk$ must traverse both the long  cycle and the self-loop $n$ times, and every other edge of $G$ at least once.  We conclude that $\Walk$ must have length at least $n^2 + 3n$.  (This is exactly the length of any  bounding closed walk induced by the Alexander numbering $\alpha(f_i)=i$, so our analysis is tight.)
\end{proof}

One frustrating source of complexity in our construction is the long cycle of degree-2 vertices.  A~different construction with no degree-$2$ vertices yields a lower bound of $\Omega(gn)$ on any surface of genus $g$; we leave this construction as an amusing exercise for the reader, and a proof or disproof of its optimality as an open problem.

\Newpage
\section{NP-hardness}
\label{S:NP-hard}

%

\subsection{Simple Bounding Cycles and Shortest Bounding Walks}
\label{AS:bounding-hard}

Our algorithms for detecting contractible closed walks and simple contractible cycles are essentially identical: Remove all cocycles, and then check the boundary of every face.  Unfortunately, this similarity does not extend to bounding walks and bounding cycles.  In this section, we prove that determining whether a directed surface graph contains a simple bounding cycle is NP-hard.  The same reduction also implies that finding the shortest bounding closed walk in a directed surface graph is NP-hard, even if all edges have weight $1$.

Our proof uses a variant of the “heaven and hell” NP-hardness reduction of Cabello \etal~\cite{ccl-fctpe-11} for the corresponding problem in undirected graphs, which is based in turn on a reduction of Chambers \etal~\cite{splitting} for the problem of finding the shortest closed walk in an undirected surface graph that is separating but non-contractible.  In fact, our reduction is simpler than either of these earlier arguments, because our underlying graphs are directed.

We reduce from the Hamiltonian cycle problem in directed planar graphs, where every vertex has either in-degree $1$ and out-degree $2$, or in-degree $2$ and out-degree $1$.  This problem was proved NP-hard by Plesńik~\cite{p-nhcpp-79}, using a direct reduction from \textsc{3Sat}.

Let $G$ be a planar directed graph.  Fix a planar embedding of $G$ on the sphere (“\emph{Earth}”).  Ultimately, we will construct an embedding of \emph{the same graph $G$} on a more complex surface~$S$, such that $G$ has a Hamiltonian cycle if and only if $G$ has a closed walk of length at most $n$ that is bounding in $S$.

As a preliminary step, we partially color the faces of $G$ as follows.  Call an edge of $G$ \EMPH{forced} if it is the only edge leaving a vertex or the only edge entering a vertex (or both).  Color the left shore of every forced edge blue, and color the right shore of every forced edge red.  A face may be colored both red and blue, or not colored at all; see Figure \ref{F:planar-ham-coloring}.

\begin{figure}[ht]
\centering
\includegraphics[scale=0.4]{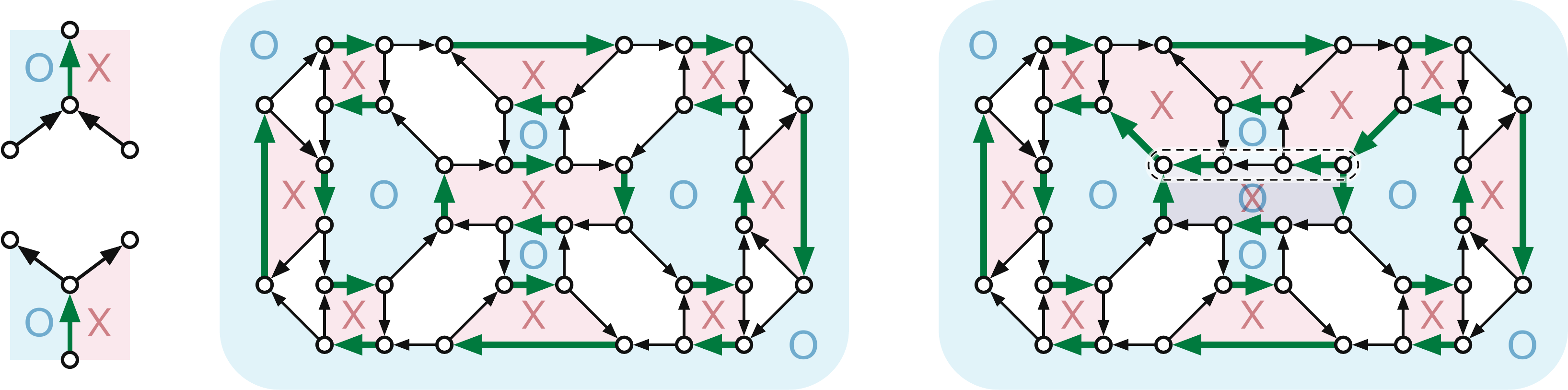}
\caption{Coloring the faces of cubic directed planar graphs; bold green edges are forced.  The graphs differ only in the direction of three edges at the top of the central face.  The second graph is not Hamiltonian, because its central face is colored both red and blue.}
\label{F:planar-ham-coloring}
\end{figure}

\begin{lemma}
If $G$ has a Hamiltonian cycle, then no face of $G$ is colored both red and blue.
\end{lemma}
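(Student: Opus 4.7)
The plan is to exploit the Jordan curve theorem, via the observation that every forced edge of $G$ must lie on any Hamiltonian cycle, so the Hamiltonian cycle (if one exists) cleanly separates the blue faces from the red ones.

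First I would verify that every forced edge must be used by any Hamiltonian cycle $H$ in $G$. Recall that in this input class, every vertex $v$ has total degree $3$ with either in-degree $1$ or out-degree $1$. The cycle $H$ visits $v$ exactly once, using precisely one incoming edge and one outgoing edge. If $v$ has in-degree~$1$, then the unique edge entering~$v$ is the only edge $H$ can use to arrive at~$v$, so $H$ contains it; likewise if $v$ has out-degree~$1$. Every forced edge is the unique in-edge or out-edge of some endpoint, so $H$ contains every forced edge.

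Next I would apply Jordan's theorem on the sphere (Earth): the simple closed curve $H$ separates Earth into two open disks, which I will call $D_L$ and $D_R$, sitting to the left and right of the oriented curve $H$, respectively. Because every face of $G$ is bounded by edges of~$G$, and the only edges of~$G$ meeting both $D_L$ and $D_R$ are the edges of $H$ (which lie on the boundary of the disks, not in either open disk), each face of $G$ lies entirely in one of $D_L$ or $D_R$. Moreover, for every edge $e$ of $H$, the face of $G$ immediately to the left of $e$ lies in $D_L$ and the face immediately to the right lies in $D_R$.

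Finally, suppose for contradiction that some face $f$ is colored both red and blue. Then there is a forced edge $e_B$ with left shore $f$ and a forced edge $e_R$ with right shore $f$. Since both forced edges lie in $H$, the previous paragraph forces $f \subseteq D_L$ (from $e_B$) and $f \subseteq D_R$ (from $e_R$), contradicting the fact that $D_L$ and $D_R$ are disjoint. Thus no face is colored both red and blue. I expect no serious obstacle here; the only delicate point is the bookkeeping that the globally consistent orientation of $H$ matches the local left/right designation of each forced edge's shores, which follows because each forced edge, as an edge of $H$, inherits its left/right shores from the cycle's orientation.
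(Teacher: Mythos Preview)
Your proposal is correct and follows essentially the same approach as the paper: both argue that a Hamiltonian cycle must traverse every forced edge, then invoke the Jordan curve theorem to conclude that blue faces lie on the left side and red faces on the right. Your write-up is simply a more detailed expansion of the paper's terse proof.
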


\begin{proof}
Suppose $G$ contains a Hamiltonian cycle $\gamma$.   By definition, $\gamma$ visits each vertex of $G$ exactly once, so $\gamma$ is a simple closed curve that traverses every forced edge in $G$.  Thus, each blue face is on the left side of $\gamma$, and each red face is on the right side of $\gamma$. The lemma now follows immediately from the Jordan curve theorem.
\end{proof}

Assuming no face of $G$ is colored both red and blue, we construct the surface $S$ in two stages.  First we delete a small disk from the interior of each blue face and attach the resulting punctures to another punctured sphere, thereby merging all blue faces into a single blue face, which we call \emph{heaven}.  Similarly, we delete a small disk from the interior of each red face and attach the resulting holes to a single punctured sphere, merging all the red faces into a single red face, which we call \emph{hell}.  Euler's formula implies that the resulting surface $S$ has genus $O(n)$.
%

\begin{lemma}
The following statements are equivalent:  (a)~$G$ contains a Hamiltonian cycle.  (b)~$G$~contains a simple cycle that is bounding in $S$.  (c)~There is a closed walk in $G$ that is bounding in $S$ and has length at most $n$.
\end{lemma}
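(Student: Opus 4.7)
The plan is to prove the cyclic chain (a)$\Rightarrow$(b)$\Rightarrow$(c)$\Rightarrow$(a).

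For (a)$\Rightarrow$(b), let $\gamma$ be a Hamiltonian cycle. Because $\gamma$ visits every vertex exactly once, it must traverse the unique in-edge at each vertex of in-degree~$1$ and the unique out-edge at each vertex of out-degree~$1$; in particular, $\gamma$ contains every forced edge. As a simple closed curve on Earth, $\gamma$ splits the sphere into a left disk containing every blue face and a right disk containing every red face. The handles used to build $S$ are attached only to blue faces (forming heaven) or only to red faces (forming hell), so each added handle lies entirely on one side of $\gamma$. Hence $\gamma$ still separates $S$ and is therefore a simple bounding cycle. The implication (b)$\Rightarrow$(c) is immediate, since any simple cycle in $G$ has at most $n$ edges.

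The substantive direction is (c)$\Rightarrow$(a). Let $W$ be a bounding closed walk of length $\ell$ with $1 \le \ell \le n$, let $\phi\colon E(G)\to\N$ count its edge multiplicities, and fix an Alexander numbering $\alpha$ with $\phi = \bdry\alpha$. Normalize $\alpha(\mbox{hell})=0$ and set $d := \alpha(\mbox{heaven})$; because $\phi \ge 0$ and every forced edge has heaven on its left and hell on its right, $\phi(e) = d \ge 0$ for every forced edge $e$. Now comes the key counting step: every vertex $v$ of $G$ has either in-degree~$1$ or out-degree~$1$, and the corresponding unique edge is forced, so $W$ enters (resp.\ leaves) $v$ exactly $d$ times along that forced edge; by circulation balance $W$ also leaves (resp.\ enters) $v$ exactly $d$ times along its other incident edges. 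Summing entries over all vertices gives $\ell = nd$. Combined with $1 \le \ell \le n$, this forces $d = 1$ and $\ell = n$, so $W$ makes exactly one visit to each vertex and is therefore a simple cycle through every vertex of $G$, i.e., a Hamiltonian cycle.

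The main obstacle is the counting identity $\ell = nd$ in (c)$\Rightarrow$(a), which converts the global homological hypothesis into a rigid local flow count. It relies crucially on Plesn\'ik's restricted instances, in which every vertex has a unique in- or unique out-edge so that at least one incident edge carries flow exactly $d$; without that structural property, flow could redistribute across unforced edges and no direct contradiction with $\ell \le n$ would follow. Everything else --- the topological separation argument for (a)$\Rightarrow$(b) and the trivial length bound for (b)$\Rightarrow$(c) --- is routine.
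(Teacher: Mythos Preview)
Your proof is correct and follows essentially the same approach as the paper's. Both arguments hinge on the observation that every forced edge has heaven on its left and hell on its right, so all forced edges carry the same flow $d = \alpha(\text{heaven}) - \alpha(\text{hell})$; the paper argues more tersely that $d \ge 1$ forces $W$ to visit every vertex (and then the length bound finishes), whereas you make the identity $\ell = nd$ explicit before concluding $d = 1$.
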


\begin{proof}
We prove the implication (a)$\Rightarrow$(b) first.  Suppose $\gamma$ is a Hamiltonian cycle in $G$.  Viewed as a cycle in the original planar embedding, $\gamma$ separates all blue face from all red faces.  Thus, the embedding of $\gamma$ on $S$ separates heaven from hell; it follows that $\gamma$ is a bounding cycle in $S$.

The implication (b)$\Rightarrow$(c) is trivial.

Finally, we prove the implication (c)$\Rightarrow$(a).  Suppose $\Walk$ is a closed walk in $G$ that is bounding in~$S$ and has length at most $n$.  Fix an Alexander numbering $\alpha$ of the faces of $G$ on $S$ that is consistent with~$\Walk$.  Because $\Walk$ visits at least one vertex of $G$, it traverses at least one forced edge.  It follows that $\alpha(\emph{heaven}) \ne \alpha(\emph{hell})$, which implies that $\Walk$ traverses \emph{every} forced edge, and thus visits every vertex.  We conclude that $\Walk$ is a Hamiltonian cycle.
\end{proof}

\begin{theorem}
Deciding whether a directed surface graph contains a simple bounding cycle is NP-hard.
\end{theorem}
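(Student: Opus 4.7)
The plan is to give a polynomial-time many-one reduction from the Hamiltonian cycle problem in directed planar graphs in which every vertex has in/out-degree pattern $(1,2)$ or $(2,1)$ — proved NP-hard by Plesńik~\cite{p-nhcpp-79} — to the decision problem of detecting a simple bounding cycle in a directed surface graph. Given such an input graph $G$, I would first fix a planar embedding on the sphere (computable in linear time) and compute the forced edges together with the induced red/blue coloring of the faces. If some face receives both colors, the first lemma of this section guarantees that $G$ has no Hamiltonian cycle, so the reducer safely outputs a canonical no-instance of the target problem (e.g., a single isolated vertex on the sphere, which supports no bounding cycle).

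Otherwise I would construct the surface $S$ explicitly via the heaven/hell recipe already set up in the excerpt: puncture each blue face and glue all blue punctures through one extra punctured sphere, and symmetrically for the red faces. The resulting combinatorial surface can be encoded as a rotation system together with a face-merging list in polynomial time; Euler's formula bounds its genus by $O(n)$, so the encoding size of the output instance is polynomial in $n$. The vertex and edge set of $G$ itself is untouched — the handle attachments take place inside the blue and red faces of the planar embedding — so $G$ embeds on $S$ in the obvious way. With the instance $(G,S)$ in hand, correctness of the reduction is exactly the equivalence lemma (a)$\Leftrightarrow$(b): $G$ is Hamiltonian if and only if its embedding on $S$ contains a simple bounding cycle. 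This proves NP-hardness.

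The same reduction, together with the implication (a)$\Leftrightarrow$(c), immediately yields NP-hardness of computing the shortest bounding closed walk in a directed surface graph, even when every edge has weight $1$: the shortest bounding closed walk of $G$ on $S$ has length at most $n$ precisely when $G$ is Hamiltonian, so any algorithm for shortest bounding walks decides the Hamiltonian cycle problem. I do not expect any genuinely hard step here, since the heavy topological lifting is encapsulated in the equivalence lemma; the only subtlety worth checking is the combinatorial bookkeeping for the heaven/hell surgery — namely that the two attached spheres do not force modifications to $G$ itself and that the resulting rotation system is constructible in polynomial time — but this follows directly from the fact that the surgery is performed inside the interiors of the colored faces, away from the graph.
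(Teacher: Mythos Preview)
Your proposal is correct and follows the paper's argument essentially verbatim: the same reduction from Plesńik's cubic planar directed Hamiltonian cycle problem, the same forced-edge coloring, the same heaven/hell surgery, and the same appeal to the equivalence lemma for correctness. Your explicit handling of the ``both colors'' case by outputting a fixed no-instance is a minor bookkeeping elaboration the paper leaves implicit, but it does not change the approach.
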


\begin{theorem}
Computing the shortest bounding closed walk in a directed surface graph is NP-hard.
\end{theorem}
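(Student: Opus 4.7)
The plan is to observe that the reduction constructed for the previous theorem already yields NP-hardness for the optimization problem, essentially for free. Specifically, the same construction that turns a directed planar graph $G$ (in which every vertex has in-degree $1$ and out-degree $2$ or vice versa) into a directed surface graph on the surface $S$ (built by gluing ``heaven'' to the blue faces and ``hell'' to the red faces) gives us exactly the equivalence we need between Hamiltonicity of $G$ and the existence of a short bounding closed walk in $S$.

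First, I would invoke the previous lemma: the implication (a)$\Rightarrow$(b)$\Rightarrow$(c) shows that if $G$ is Hamiltonian, then its Hamiltonian cycle is itself a simple bounding cycle in~$S$ of length exactly $n$, and in particular the shortest bounding closed walk in $G$ (viewed on $S$) has length at most $n$. Conversely, the implication (c)$\Rightarrow$(a) shows that if some bounding closed walk in $G$ on $S$ has length at most $n$, then $G$ has a Hamiltonian cycle. Thus the shortest bounding closed walk in $G$ on $S$ has length exactly $n$ if $G$ is Hamiltonian, and length strictly greater than $n$ (or is undefined, if no bounding closed walk exists at all) otherwise.

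Next I would point out that the reduction is polynomial: $S$ has genus $O(n)$ and the surface embedding of $G$ on $S$ can be described in linear size, so given an oracle that returns the length of the shortest bounding closed walk in any directed surface graph, we can decide the Hamiltonian cycle problem for the Plesn\'{i}k family of cubic directed planar graphs simply by comparing the returned length to $n$. Since that Hamiltonicity problem is NP-hard, so is computing the shortest bounding closed walk.

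I expect no serious obstacle: the reduction, the surface, and the three-way equivalence have all been set up already, and the only additional observation is that ``length at most $n$'' is distinguishable from ``length strictly greater than $n$'' by a shortest-walk oracle. The only minor point worth explicitly mentioning is the edge-weight model: the reduction uses unit weights, so NP-hardness holds already in the unweighted case, as promised in the introduction to this subsection.
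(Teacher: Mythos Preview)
Your proposal is correct and matches the paper's approach exactly: the paper states this theorem immediately after the three-way equivalence lemma without any separate proof, treating it as an immediate corollary of the heaven-and-hell reduction. Your explicit observation that a shortest-bounding-walk oracle distinguishes ``length $\le n$'' from ``length $> n$'' (and that the construction is polynomial with unit edge weights) is precisely the intended reading.
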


Both of these NP-hardness results hold even if we insist on cellularly embedded graphs.  Without loss of generality, suppose the original input graph $G$ is connected, since otherwise it cannot contain a Hamiltonian cycle.  It follows that heaven and hell are the only non-disk faces in $G$'s embedding on~$S$.  Thus, we can extend $G$ to a cellularly embedded graph $H$ by adding a single vertex $a$ in the interior of heaven, with directed edges $\arc{v}{a}$ from one vertex $v$ on the boundary of each blue face, and a single vertex $z$ in the interior of hell, with edges $\arc{v}{z}$ from one vertex $v$ on the boundary of each red face.  Because the two new vertices are sinks, they cannot appear in any (bounding) closed walk in~$H$.

\subsection{Shortest Trivial Cycles}
\label{AS:CCW-hard}

Now we prove that finding the shortest simple contractible cycle in a directed surface graph is NP-hard, by reduction from the classical maximum independent set problem.  Our reduction is nearly identical to a reduction of Cabello~\cite{c-fscss-10}, who proved that finding the shortest simple \emph{bounding} cycle in an \emph{undirected} surface graph is NP-hard.  In the same paper, Cabello described an algorithm to compute the shortest contractible cycle in an \emph{undirected} surface graph in $O(n^2\log n)$ time; thus, our NP-hardness proof must make essential use of the fact that the input graph is directed.

Given an undirected graph $G$ with $n$ vertices and $m$ edges and a positive integer $k$, we first construct a directed planar graph $H$ as follows.  Arbitrarily identify the $n$ vertices of $G$ with the integers $0$ through $n-1$.  For each integer $1\le i\le n$ the graph $H$ contains a corresponding vertex $i$, along with two directed paths $P_i$ and $Q_i$ from vertex $i$ to vertex ${i+1 \bmod n}$.  Path $P_i$ has length $\deg(i)+1$, and path $Q_i$ has length $\deg(i)+2$.  Label the vertices of $P_i$ as $i \arcto [ij_1] \arcto [ij_2] \arcto\cdots \arcto [ij_{\deg(i)}] \arcto (i+1) \bmod n$, where $j_1, j_2, \dots, j_{\deg(i)}$ are the neighbors of vertex $i$ in $G$ (in no particular order).  Thus, for each edge $ij$ of $G$, there are two corresponding vertices in $H$, namely $[ij]$ in $P_i$ and $[ji]$ in $P_j$.  In total, $H$ has $2n+4m$ vertices and $3n+4m$ edges.

We embed $H$ on the sphere so that one face is bounded by the paths $P_0, P_1, \dots, P_{n-1}$, another face is bounded by the paths $Q_0, Q_1, \dots, Q_{n-1}$, and each of the remaining faces is bounded by $P_i$ and $Q_i$ for some index $i$.  See Figure \ref{F:short-con}.

\begin{figure}[ht]
\centering\includegraphics[scale=0.5]{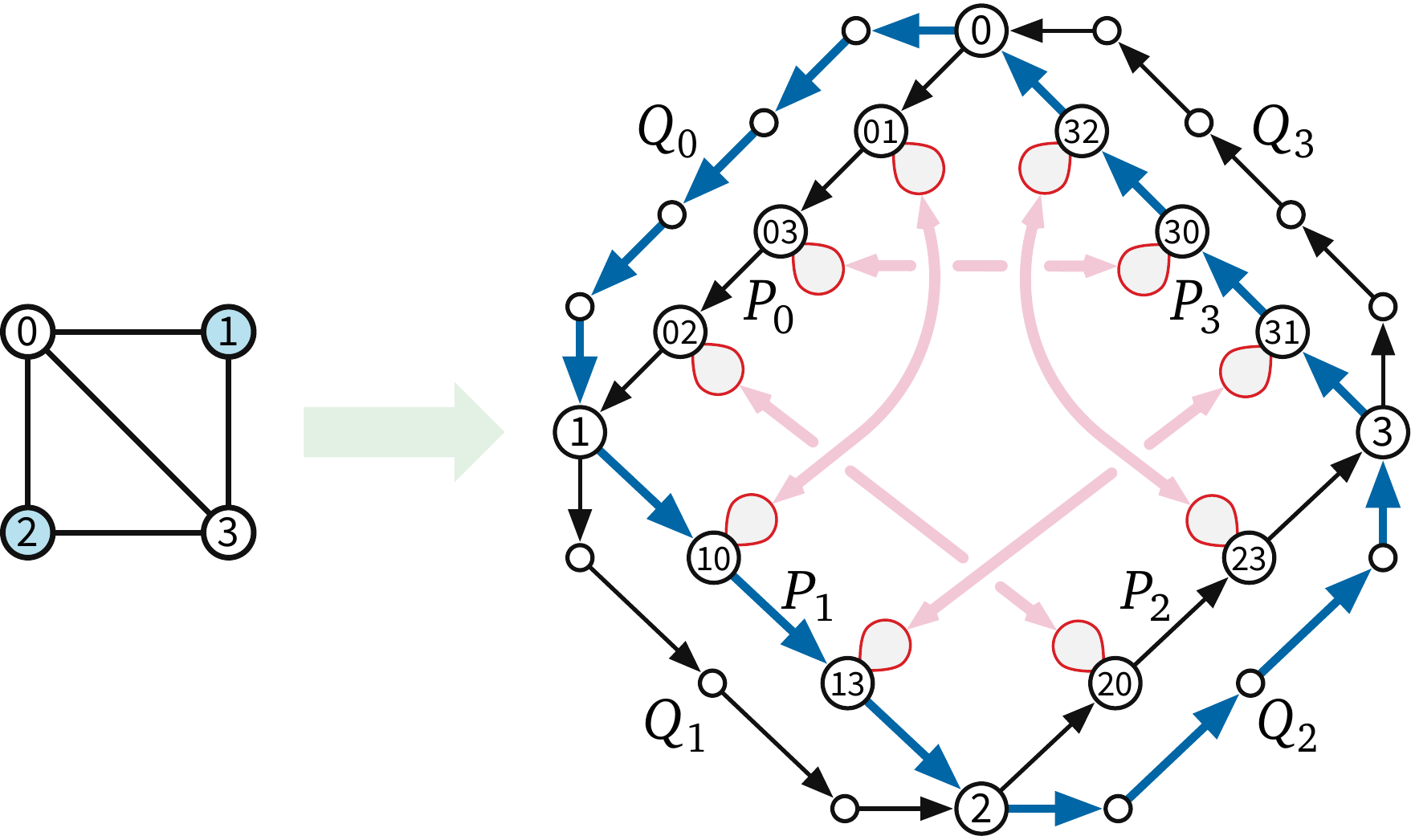}
\caption{Reduction from maximum independent set to shortest contractible cycle in a directed surface graph.}
\label{F:short-con}
\end{figure}

A simple cycle $\gamma$ in $H$ is {\em compliant} if, for every edge $ij$ in $G$, $\gamma$ does not use both of the vertices $[ij]$ and $[ji]$ in $H$.

\begin{lemma}
$G$ contains an independent set of size $k$ if and only if $H$ contains a simple compliant cycle of length at most $2m + n - k$.
\end{lemma}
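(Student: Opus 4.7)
The plan is to exploit the rigid directed structure of $H$ to translate compliant simple cycles into independent sets of $G$. I would proceed in three steps: characterize the shape of every simple directed cycle in $H$, compute its length as a linear function of a bookkeeping set, and match compliance with independence in $G$.

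First, I would show that every simple directed cycle $\gamma$ in $H$ visits each main vertex $0, 1, \dots, n-1$ exactly once, and between consecutive main vertices traverses either the entire path $P_i$ or the entire path $Q_i$. The argument is purely local: every internal vertex of $P_i$ or $Q_i$ has in-degree and out-degree exactly one, so once $\gamma$ enters such a vertex it is forced to follow the path to its endpoint; the only in-edges to a main vertex $i$ come from $P_{i-1}$ and $Q_{i-1}$, and the only out-edges from $i$ enter $P_i$ and $Q_i$, so the sequence of main vertices along $\gamma$ advances by $1 \bmod n$ at every step. Because $\gamma$ closes up, it must wrap exactly once around the cyclic main sequence.

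Second, writing $S(\gamma) := \{\, i : \gamma \text{ uses } P_i \,\}$, the length of $\gamma$ is
\[
  \sum_{i \in S(\gamma)}(\deg(i)+1) \;+\; \sum_{i \notin S(\gamma)}(\deg(i)+2),
\]
a decreasing affine function of $|S(\gamma)|$ that translates an upper bound on the length of $\gamma$ directly into a lower bound of the form $|S(\gamma)| \ge k$ at the claimed threshold. Simultaneously, $\gamma$ is compliant if and only if $S(\gamma)$ is an independent set in $G$: the vertex $[ij]$ lies on $\gamma$ exactly when $i \in S(\gamma)$, and $[ji]$ lies on $\gamma$ exactly when $j \in S(\gamma)$, so $\gamma$ uses both precisely when both endpoints of the edge $ij \in E(G)$ lie in $S(\gamma)$.

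The two implications then drop out. For ($\Rightarrow$), given an independent set $I \subseteq V(G)$ of size $k$, the cycle $\gamma_I$ that uses $P_i$ for $i \in I$ and $Q_i$ for $i \notin I$ is a simple directed cycle by the first step, is compliant by independence of $I$, and has the required length by the computation above. For ($\Leftarrow$), any compliant simple cycle $\gamma$ whose length meets the bound has $S(\gamma)$ independent and of cardinality at least $k$, so any $k$-subset is the desired independent set. The main obstacle is the rigid-shape step: one must rule out ``degenerate'' simple directed cycles that bypass main vertices, short-circuit across $P_i$/$Q_i$ segments, or reuse intermediate vertices. This is where the asymmetric orientation of $H$ is essential; every edge points forward from some $i$ to $(i+1) \bmod n$, which is exactly what forces the combinatorial correspondence with vertex subsets of $G$ to be exact.
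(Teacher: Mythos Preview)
Your proposal is correct and follows essentially the same route as the paper: both arguments rest on the observation that every simple directed cycle in $H$ is a concatenation $R_0\cdot R_1\cdots R_{n-1}$ with each $R_i\in\{P_i,Q_i\}$, then translate compliance into independence of the index set $\{i:R_i=P_i\}$ and read off the length as an affine function of its size. The only difference is emphasis: the paper simply asserts the structural characterization of cycles in $H$, whereas you spell out the in-/out-degree argument that forces it; and you handle the ``at most'' in the length bound explicitly by passing to a $k$-subset, which the paper leaves implicit.
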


\begin{proof}
First, suppose $G$ has an independent set $I$ of size $k$.  For each integer $0\le i\le n-1$, let $R_i = P_i$ if $i \in I$, and define $R_i = Q_i$ otherwise.  Let $\gamma$ be the simple cycle formed by concatenating the paths $R_0, R_1, \dots, R_{n-1}$.  This cycle is compliant, because there are no edges in $G$ between the vertices in $I$.  Straightforward calculation implies that the length of $\gamma$ is
$2m + n - k$.

On the other hand, suppose $H$ contains a compliant cycle $\gamma$ of length $2m + n - k$.  Every cycle in $H$ is the concatenation of paths $R_0, R_1, \dots, R_{n-1}$, where for each index $i$, either $R_i=P_i$ or $R_i=Q_i$.  Because $\gamma$ is compliant, $\gamma$ cannot contain both $P_i$ and $P_j$, for any edge $ij$ of $G$.  Thus, the set of indices $I = \set{i \mid R_i = P_i}$ is an independent set in $G$.   Straightforward calculation implies $\abs{I} = k$.
\end{proof}
    
We now convert $H$ into a directed graph $H'$ embedded on a more complex surface $S$ as follows.  Let~$f$ be the face of $H$ bounded by paths $P_0, P_1, \dots, P_{n-1}$.  For each edge $ij$ of the original graph $G$, let $\alpha_{ij}$ and $\alpha_{ji}$ denote two small simple closed curves, respectively passing through vertices $[ij]$ and $[ji]$ of $H$, and otherwise in the interior of $f$.  To define both the new graph $H'$ and the surface $S$, we remove the interiors of all curves $\alpha_{ij}$ and then identify each curve $\alpha_{ij}$ with its twin $\alpha_{ji}$, so that  vertices $[ij]$ and $[ji]$ are identified and the resulting surface $S$ is orientable.  The resulting surface $S$ has genus $m$.

\begin{lemma}
\label{L:compliant}
For any integer $\ell$, there is a simple compliant cycle of length $\ell$ in $H$ if and only if there is a simple cycle of length $\ell$ in $H'$ that is contractible in $S$.
\end{lemma}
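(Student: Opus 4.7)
The plan is to prove both directions by translating between cycles in $H$ and $H'$, exploiting the fact that the surgery constructing $S$ modifies only the interior of the $P$-face $f$ of $H$'s spherical embedding.  For the forward direction ($\Rightarrow$), I start with a simple compliant cycle $\gamma$ of length $\ell$ in $H$; the preceding lemma's analysis shows $\gamma$ decomposes as $R_0\cdot R_1\cdots R_{n-1}$ with $R_i\in\{P_i,Q_i\}$.  Compliance --- never using both $[ij]$ and $[ji]$ --- is precisely the condition that no pair of identified vertices is visited twice, so $\gamma$ remains simple when viewed in $H'$.  On the sphere, $\gamma$ bounds two disks; let $D_Q$ be the one not containing $f$.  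Because every puncture and handle attachment occurs strictly inside $f$, $D_Q$ lies entirely in the unmodified portion of the sphere and therefore remains a disk in $S$ bounded by $\gamma$.  Hence $\gamma$ is contractible in $S$, with length $\ell$ preserved.

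For the backward direction ($\Leftarrow$), I start with a simple cycle $\gamma'$ of length $\ell$ in $H'$ that is contractible in $S$.  The crucial claim is that $\gamma'$ never \emph{switches} at an identified vertex, meaning that whenever $\gamma'$ visits some $[ij]=[ji]$, its incoming and outgoing edges both lie on $P_i$ or both lie on $P_j$, never mixing.  I will prove no-switching by contradiction.  Suppose $\gamma'$ switches at $[ij]=[ji]$, and let $\mu_{ij}\subset S$ denote the meridian of the handle corresponding to edge $ij$ --- the common image of $\alpha_{ij}$ and $\alpha_{ji}$ after gluing.  Because every edge of $H'$ lies on a face boundary of the original sphere embedding, whereas $\mu_{ij}$ lies in the interior of the former face $f$ except at the identified vertex, the only intersection of $\gamma'$ with $\mu_{ij}$ occurs at $[ij]=[ji]$.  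A small perturbation pushing $\gamma'$ off this vertex converts the switch into a single transversal crossing of $\mu_{ij}$, making the intersection number $\pm 1$.  But $\gamma'$ is contractible, hence null-homologous, so its intersection number with any cycle must be $0$ --- a contradiction.

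Once no-switching is established, $\gamma'$ lifts uniquely to a simple cycle $\gamma = R_0\cdot R_1\cdots R_{n-1}$ in $H$, where $R_i\in\{P_i,Q_i\}$ is determined by which path $\gamma'$ uses between backbone vertices $i$ and $i+1\bmod n$.  Simplicity of $\gamma'$ in $H'$ forbids visiting any identified pair $\{[ij],[ji]\}$ twice, so $\gamma$ cannot contain both $[ij]$ and $[ji]$; hence $\gamma$ is compliant.  Lengths agree because the identification changes only vertices, not edges.  The main obstacle throughout is the no-switching argument: it requires verifying two geometric facts about the surgery --- that each meridian $\mu_{ij}$ is disjoint from $H'$ apart from its single identified vertex, and that perturbing $\gamma'$ off this vertex genuinely yields one transversal crossing --- both of which reduce to a careful local analysis near the handle attachments.
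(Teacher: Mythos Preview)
Your proposal is correct and follows essentially the same approach as the paper.  The paper's backward direction uses the identified curve $\alpha_{ij}=\alpha_{ji}$ (your meridian $\mu_{ij}$) and the mod-$2$ crossing number with $\gamma'$---contractibility forces an even number of crossings, but that curve meets $H'$ only at the single vertex $[ij]=[ji]$, so $\gamma'$ cannot cross it, which is exactly your no-switching conclusion; the forward direction likewise observes that the surgery is confined to the interior of~$f$, so the image of $\gamma$ remains contractible.
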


\begin{proof}
First, let $\gamma$ be \emph{any} simple cycle in $H$.  The edges in $H'$ are in one-to-one correspondence with edges in $H$, and sequence of edges corresponding to $\gamma$ defines a closed walk $\gamma'$ in $H'$ with the same length at $\gamma$.  Because the construction of $H'$ and $S$ only involves surgery on the interior of the single face $f$, the corresponding closed walk $\gamma'$ in $H$ is contractible in $S$.  Finally, if $\gamma$ is compliant, then $\gamma'$ is simple.

Now let $\gamma'$ be a simple contractible cycle in $H'$.  For each edge $ij$ of the original graph $G$, the curve $\alpha_{ij} = \alpha_{ji}$ intersects $H'$ at exactly one vertex $[ij] = [ji]$ and therefore crosses $\gamma'$ at most once.  Because~$\gamma'$ is contractible, $\gamma'$ crosses \emph{every} closed curve in $S$ an even number of times.  It follows that $\gamma'$ does hot cross any curve $\alpha_{ij}$.  Thus, the corresponding sequence of edges in $H$ defines a simple cycle $\gamma$.  Finally, because~$\gamma'$ visits each vertex $[ij]=[ji]$ at most once, $\gamma$ is compliant.
\end{proof}

\begin{theorem}
Finding the shortest simple contractible cycle in a directed surface graph is NP-hard.
\end{theorem}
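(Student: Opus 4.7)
The plan is to observe that the theorem is an immediate consequence of chaining the two preceding lemmas, once one checks that the construction of $H'$ and $S$ from the max-independent-set instance is polynomial time. So the proof proposal is essentially a routine wrap-up rather than any new argument.

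First I would describe the reduction formally. Given an instance $(G,k)$ of the (decision version of the) maximum independent set problem, where $G$ is an undirected graph on $n$ vertices and $m$ edges, construct the directed planar graph $H$ as in the preceding paragraphs; this takes $O(n+m)$ time since $H$ has $2n+4m$ vertices and $3n+4m$ edges and its embedding is explicit. Then perform the surgery identifying the paired small curves $\alpha_{ij}$ and $\alpha_{ji}$ to obtain the directed graph $H'$ embedded on an orientable surface $S$ of genus $m$; again this is clearly polynomial-time, since only $m$ identifications are performed.

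Next I would combine the two lemmas. By the lemma about compliant cycles, $G$ has an independent set of size at least $k$ if and only if $H$ contains a simple compliant cycle of length at most $2m+n-k$. By Lemma~\ref{L:compliant}, the simple compliant cycles in $H$ of length $\ell$ are in length-preserving bijection with the simple contractible cycles of length $\ell$ in $H'$ on $S$. Chaining these equivalences, $G$ has an independent set of size at least $k$ if and only if the shortest simple contractible cycle in $H'$ has length at most $2m+n-k$. Thus a polynomial-time algorithm that computes (or decides the length of) the shortest simple contractible cycle in a directed surface graph would yield a polynomial-time algorithm for the NP-hard decision version of maximum independent set, completing the reduction.

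There is essentially no hard step here, since the two lemmas already do all the substantive work: one encodes independence as compliance and the other encodes compliance as contractibility via the surface surgery. The only things worth checking carefully are that the construction is polynomial time (immediate from its explicit description) and that the target length $2m+n-k$ is part of the input, so that a decision oracle for the shortest simple contractible cycle length settles the independent set question. If one wanted to make the hardness robust, one could also note that the reduction goes through for the problem restricted to unit-weight edges and orientable surfaces, exactly as constructed.
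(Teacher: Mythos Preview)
Your proposal is correct and is essentially identical to the paper's approach: the paper also treats the theorem as an immediate consequence of the two preceding lemmas together with the polynomial-time construction of $H'$ and $S$, and in fact does not even write out a separate proof. The only minor quibble is that Lemma~\ref{L:compliant} as stated gives an existence equivalence at each length rather than a literal bijection, but that is all the reduction needs.
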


The statement and proof of Lemma \ref{L:compliant} still hold without further modification if we replace every instance of the word “contractible” with the word “bounding” or “separating”.  Indeed, every bounding cycle in $H'$ is actually contractible. 

\begin{theorem}
\label{Th:shortest-sbc-hard}
Finding the shortest simple bounding cycle in a directed surface graph is NP-hard.
\end{theorem}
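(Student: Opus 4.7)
The plan is to recycle the reduction from the maximum independent set problem used in Section~\ref{AS:CCW-hard}: we keep the same directed graph $H'$, embedded on the same genus-$m$ surface $S$, with the same identification curves $\alpha_{ij}$. Because every contractible cycle is bounding, the forward direction of Lemma~\ref{L:compliant} already gives half of what is needed: any compliant cycle of length $\ell$ in $H$ lifts to a simple bounding cycle of length $\ell$ in $H'$. What remains is the reverse direction---that any simple bounding cycle in $H'$ of length $\ell$ descends to a simple compliant cycle in $H$ of length $\ell$---after which the independent-set/cycle equivalence from the contractible case transfers verbatim, yielding a simple bounding cycle of length at most $2m+n-k$ in $H'$ if and only if $G$ has an independent set of size at least~$k$.

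The only step requiring new thought is showing that a simple bounding cycle $\gamma'$ in $H'$ crosses none of the curves $\alpha_{ij}$. Because $\gamma'$ is bounding, its integer homology class in $S$ is trivial, so its $\Z_2$-homology class is also trivial, and therefore its mod-$2$ intersection number with every closed curve in $S$ vanishes. By construction, $\alpha_{ij}$ meets $H'$ only at the single identified vertex $[ij]=[ji]$, so $\gamma'$ crosses $\alpha_{ij}$ at most once; the parity constraint then forces this count to be zero. Once all $\alpha_{ij}$ are avoided, the edge sequence traced by $\gamma'$ descends to a simple cycle $\gamma$ in~$H$ of the same length, and $\gamma$ is compliant because $\gamma'$ visits each identified vertex at most once. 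The remaining bookkeeping---the length arithmetic and the genus bound on $S$---is inherited from Section~\ref{AS:CCW-hard} unchanged. The only mild obstacle is simply recognizing that the contractible-case argument relied only on $\Z_2$-intersection theory, which applies equally well to bounding cycles, so no further surgery on~$S$ or on $H'$ is required.
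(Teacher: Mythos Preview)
Your proposal is correct and takes essentially the same approach as the paper: the paper simply observes that the statement and proof of Lemma~\ref{L:compliant} hold verbatim with ``contractible'' replaced by ``bounding'' (and additionally notes that in this particular $H'$ every bounding cycle is in fact contractible). Your write-up makes explicit the reason the substitution works---namely, that the crossing-parity argument in the proof of Lemma~\ref{L:compliant} only requires trivial $\Z_2$-homology, which follows from trivial $\Z$-homology---which is exactly the justification the paper leaves implicit.
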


\Newpage
\section{Shortest Contractible Closed Walks: The Easy Cases}
\label{S:short-con-easy}

\noindent
Now we describe polynomial-time algorithms to find a shortest contractible closed walk in a directed surface graph with weighted edges.  For this problem, it will prove convenient to treat input graphs as undirected, but with asymmetric edge weights, as proposed by Cabello~\etal~\cite{ccl-fsncd-16}.  That is, each undirected edge $uv$ in the input graph is composed of two \emph{darts} $\arc{u}{v}$ and $\arc{v}{u}$ with independent weights; a walk is alternating series of vertices and \emph{darts}; and the length of a walk is the sum of the weights of its darts, counted with appropriate multiplicity.   We also assume without loss of generality that the given embedding of $G$ is \emph{cellular}, meaning every face is homeomorphic to an open disk.  Both of these assumptions can be enforced by adding at most $O(n+\beta)$ directed edges with (symbolically) infinite weight, where $\beta$ is the first Betti number of the underlying surface.  Finally, to avoid issues with negative cycles, we assume all edge weights are non-negative. 

All orientable 2-manifolds can be partitioned into four classes, depending on the structure of their fundamental group.  
\begin{itemize}\itemsep0pt
\item
The sphere and the disk have \emph{trivial} fundamental groups.
\item
The annulus and the torus have \emph{abelian} fundamental groups.
\item
Every surface with boundary has a \emph{free} fundamental group.
\item
Every surface without boundary and with genus at least $2$ has a \emph{hyperbolic} fundamental group.
\end{itemize}
All closed curves on the sphere or the disk are contractible, so our problem is trivial for those surfaces.  We develop a separate polynomial-time algorithm for each of the other three classes.

Our algorithm for the annulus and the torus, described in Section \ref{SS:ccw-abelian}, constructs a finite relevant portion~$\Relevant{G}$ of the universal cover of the input graph, and then finds the shortest directed cycle in~$\Relevant{G}$ using a recent algorithm of Mozes \etal\ \cite{mnnw-mdpgo-18}.  Our algorithm for surfaces with boundary, described in Section~\ref{SS:ccw-free}, exploits an observation by Smikun \cite{s-cbcgg-76} and Muller and Schupp \cite{ms-gtecl-83} that the set of trivial words for any finitely-generated free group is a context-free language.  After carefully labeling the input graph, we invoke an algorithm of Barrett \etal~\cite{bjm-flcpp-00} that finds the shortest walk in a given graph whose structure is consistent with a given context-free grammar.  We defer discussion of the hyperbolic case to Section~\ref{S:short-con-hyperbolic}.

\subsection{Annulus and Torus (Abelian)}
\label{SS:ccw-abelian}

The universal cover $\Lift{G}$ of a surface graph $G$ is an infinite planar graph, constructed by tiling the plane with an infinite number of copies of a disk called the \EMPH{fundamental domain}, obtained by cutting the underlying surface along a system of loops \cite{e-dgteg-03,k-csnco-06} or a system of arcs \cite{en-mcsnc-11}.  A closed walk in $G$ is contractible if and only if it is the projection of a closed walk in $\Lift{G}$.  Thus, the shortest contractible walk in~$G$ is the projection of the shortest directed cycle in $\Lift{G}$.

Corollary \ref{C:ccw-short} implies that we can find the shortest contractible closed walk in $G$ by searching a relevant region $\Relevant{G}$ of $\Lift{G}$, assembled from a finite copies of the fundamental domain, and just large enough to contain at least one lift of \emph{every} closed walk of length $m = O(n)$ in $G$.  The assembled graph~$\Relevant{G}$ is planar, so we can find the shortest directed cycle in $O(\Relevant{n}\log\log\Relevant{n})$ time, where $\Relevant{n}$ is the number of vertices in $\Relevant{G}$, using the recent algorithm of Mozes \etal\ \cite{mnnw-mdpgo-18}. 

This approach immediately gives us efficient algorithms when the fundamental group of the underlying surface is \emph{abelian}.  There are only two such surfaces:

\begin{theorem}
Given a directed graph $G$ with $n$ vertices and non-negatively weighted edges embedded on the annulus, we can find the shortest contractible closed walk in $G$ in $O(n^2\log\log n)$ time.
\end{theorem}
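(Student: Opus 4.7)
The plan is to follow the covering-space strategy sketched in the introductory discussion. First I would cut the annulus $S$ along a simple arc connecting its two boundary components, yielding a fundamental domain $D$ that is a closed disk containing one copy of the abstract graph $G$ together with the edges that cross the cut, each appearing twice (once on each side of $D$). The universal cover $\Lift{G}$ is the infinite planar graph supported on the strip $\mathbb{R}\times[0,1]$ obtained by gluing countably many consecutive copies of $D$ edge-to-edge along the cut; the deck group is $\mathbb{Z}$, acting by translation from one copy to the next.

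Next I would carve out a finite relevant region $\Relevant{G}$. By Corollary~\ref{C:ccw-short}, the shortest contractible closed walk in $G$ has hop-length at most $m = O(n)$, so its lift through any chosen basepoint visits at most $m+1$ consecutive copies of $D$. Consequently I would let $\Relevant{G}$ consist of $2m+1$ consecutive copies of the fundamental domain; any contractible closed walk in $G$ of minimum length then admits a lift that is a closed walk (i.e., a directed cycle through some fixed lift of its basepoint) lying entirely inside $\Relevant{G}$. Conversely every directed cycle in $\Relevant{G}$ projects to a contractible closed walk in $G$ of the same weight, since $\Relevant{G}$ sits inside the simply connected cover. The graph $\Relevant{G}$ is planar and has $\Relevant{n} = O(n^2)$ vertices and edges.

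Then I would run the shortest-directed-cycle algorithm of Mozes~\etal~\cite{mnnw-mdpgo-18} on $\Relevant{G}$; since that algorithm runs in $O(\Relevant{n}\log\log\Relevant{n})$ time on non-negatively weighted directed planar graphs, this step costs $O(n^2\log\log n)$. Finally I would project the resulting cycle back to $S$ to obtain an actual closed walk in $G$; by the two-sided correspondence above, this walk realizes the shortest contractible closed walk.

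The main obstacle, such as it is, lies not in any delicate topological argument but in pinning down the width of $\Relevant{G}$. One has to be careful that a shortest contractible closed walk's basepoint can be any vertex of $G$, so I would start from an arbitrary fixed copy of $D$ and pad by $m$ copies on each side; Corollary~\ref{C:ccw-short} then guarantees that a minimum-length lift fits. The only other subtlety is ensuring that the asymmetric-weight convention from the start of the section is preserved under lifting, which is automatic because each dart of $G$ lifts to darts of the same weight in $\Lift{G}$ and hence in $\Relevant{G}$. After these routine checks the complexity bound follows immediately from~\cite{mnnw-mdpgo-18}.
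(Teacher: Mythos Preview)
Your proposal is correct and follows essentially the same approach as the paper: cut the annulus along a boundary-to-boundary path to obtain a fundamental domain, use Corollary~\ref{C:ccw-short} to bound the hop-length of the optimum so that $O(n)$ copies of the domain suffice, and then invoke the planar shortest-directed-cycle algorithm of Mozes~\etal~\cite{mnnw-mdpgo-18} on the resulting $O(n^2)$-size graph. The paper's proof is terser (it does not spell out the $2m+1$ width or the basepoint padding), but the argument is identical in substance.
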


\begin{proof}
We can compute an appropriate fundamental domain in $O(n)$ time by slicing the annulus along any path in $G$ between the two boundaries of the annulus.  $\Relevant{G}$ consists of at most $O(n)$ copies of the fundamental domain and thus has complexity $O(n^2)$.
\end{proof}

\begin{theorem}
Given a directed graph $G$ with $n$ vertices and non-negatively weighted edges embedded on the torus, we can find the shortest contractible closed walk in $G$ in $O(n^3\log\log n)$ time.
\end{theorem}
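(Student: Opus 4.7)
The plan is to mimic the annulus argument but with a two-dimensional fundamental domain. The fundamental group of the torus is $\Z^2$, so the universal cover $\Lift{G}$ tiles the plane with copies of a single fundamental domain indexed by $(i,j)\in\Z^2$. To construct this fundamental domain, I would first cut the torus along any system of two loops in $G$ based at a common vertex, both computable in $O(n)$ time (for instance a shortest non-contractible loop together with a shortest loop that crosses it). Cutting yields a disk of complexity $O(n)$ containing (a lift of) every vertex and edge of $G$.

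Next, I would bound the size of the relevant region $\Relevant{G}$. By Corollary~\ref{C:ccw-short}, the shortest contractible closed walk in $G$ has hop-length at most $m=O(n)$. Any lift of such a walk in $\Lift{G}$ is a directed cycle whose consecutive vertices lie in the same or adjacent copies of the fundamental domain; hence the lift occupies at most $(2m+1)\times(2m+1)=O(n^2)$ copies centered at whichever copy contains its starting vertex. By translation-invariance of the tiling it suffices to pick an arbitrary central copy and take $\Relevant{G}$ to be the union of these $O(n^2)$ copies, producing a planar graph with $\Relevant{n}=O(n^3)$ vertices and edges.

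Finally, every directed cycle in $\Relevant{G}$ projects to a contractible closed walk in $G$, and by construction every closed walk in $G$ of length at most $m$ that lifts to a closed walk has at least one lift contained in $\Relevant{G}$. Thus the shortest contractible closed walk in $G$ is the projection of the shortest directed cycle in the planar graph $\Relevant{G}$. Applying the algorithm of Mozes \etal~\cite{mnnw-mdpgo-18}, which computes a shortest directed cycle in an $N$-vertex planar graph with non-negative edge weights in $O(N\log\log N)$ time, yields a running time of $O(n^3\log\log n^3)=O(n^3\log\log n)$.

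The only real subtlety is the containment claim for the relevant region; this is where one must use Corollary~\ref{C:ccw-short} to rule out the possibility that the shortest contractible walk is so long that no single $O(n)\times O(n)$ window of fundamental domains captures any of its lifts. Since each edge traversal in $\Lift{G}$ changes the $\Z^2$-index by at most one in each coordinate, a walk of hop-length at most $m$ cannot escape an $O(m)$-radius window, so a window of side $O(n)$ suffices. Everything else is direct bookkeeping on the fundamental-domain construction.
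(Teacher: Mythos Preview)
Your proposal is correct and follows essentially the same approach as the paper's proof: cut along a system of loops to get a fundamental domain, use Corollary~\ref{C:ccw-short} to bound the hop-length by $O(n)$, assemble $O(n^2)$ copies of the domain into a planar $\Relevant{G}$ of size $O(n^3)$, and invoke Mozes~\etal. The paper's proof is two sentences; you have simply filled in the justification for why $O(n^2)$ copies suffice, which is exactly the intended reasoning.
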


\begin{proof}
We can compute an appropriate fundamental domain in $O(n)$ time by slicing the surface along any system of loops \cite{e-dgteg-03,k-csnco-06}.  $\Relevant{G}$ consists of at most $O(n^2)$ copies of the fundamental domain and thus has complexity $O(n^3)$.
\end{proof}

In principle, we can use the same covering-space approach to find shortest contractible closed walks in graphs on \emph{arbitrary} surfaces.  Unfortunately, for every nontrivial surface other than the annulus and the torus, covering all paths of length $m$ requires exponentially many copies of the fundamental domain.%

\subsection{Surfaces with Boundary (Free)}
\label{SS:ccw-free}

The fundamental group of a surface with genus $g$ and $b>0$ boundary cycles is a \emph{free} group with $\beta = 2g + b - 1$ generators.  Smikun \cite{s-cbcgg-76} and Muller and Schupp \cite{ms-gtecl-83} observed that the set of trivial words for any finitely-generated free group is a context-free language.%
\footnote{More generally, both Smikun and Muller and Schupp proved that the set of trivial words for a finitely-presented group is a context-free language if and only if the group has a free subgroup with finite index.} 

This observation allows us to reduce finding shortest contractible walks on surfaces with boundary to the \EMPH{CFG shortest path} problem, introduced by Yannakakis~\cite{y-gmdt-90}.  Given a directed graph $G$ with weighted edges, a labeling $\ell\colon E(G)\to \Sigma\cup\set{\e}$ for some finite set $\Sigma$, and a context-free grammar~$\Grammar$ over the alphabet~$\Sigma$, the CFG shortest path problem asks for the shortest walk (if any) in~$G$ whose label is in the language generated by $\Grammar$.  An algorithm of Barrett \etal~\cite{bjm-flcpp-00} solves the CFG shortest path problem in $O(N P n^3)$ time, where $N$ and $P$ are the numbers of nonterminals and productions in~$\Grammar$, respectively, and $n$ is the number of vertices in $G$, when the grammar~$\Grammar$ is in Chomsky normal form and all edge weights in~$G$ are non-negative.  (Their algorithm was extended to graphs with negative edges but no negative cycles by Bradford and Thomas~\cite{bt-lspdw-09}.)

\begin{theorem}
Given a directed graph $G$ with $m$ non-negatively weighted edges, embedded on a surface~$S$ with boundary with first Betti number $\beta$, we can find the shortest contractible closed walk in $G$ in $O(\beta^5 m^3)$ time.
\end{theorem}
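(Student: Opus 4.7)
My plan is to reduce the problem to the CFG shortest path problem already invoked in this subsection. Since $S$ has boundary, its fundamental group $\pi_1(S)$ is the free group $F_\beta$, and by the Smikun--Muller--Schupp theorem the set of trivial words in $F_\beta$ is context-free. The overall strategy is to label the edges of $G$ with letters of a $2\beta$-symbol alphabet encoding how each edge sits relative to a fixed cut system on $S$, and then invoke the algorithm of Barrett \etal~with a Chomsky-normal-form grammar for trivial words in $F_\beta$.

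First I would compute a \emph{cut system} on $S$: a collection of $\beta$ pairwise-disjoint oriented simple arcs, each with both endpoints on $\partial S$, whose complement in $S$ is an open disk. To keep the number of arc--edge crossings small, I would realize each arc as a \emph{simple} dual walk in the dual graph $G^\star$ augmented with a single vertex representing $\partial S$, for instance using a tree--cotree decomposition of $G$ on $S$. Because each such arc is a simple dual walk, it crosses each primal edge at most once, so the total number of crossings is $O(\beta m)$. Assigning each cut arc $a_i$ a generator symbol $x_i$ of $F_\beta$ and subdividing each primal edge at its crossings produces a graph $G'$ with $n' = O(\beta m)$ vertices and edges, in which every edge carries either a single letter of $\Sigma = \{x_1,\bar x_1,\ldots,x_\beta,\bar x_\beta\}$ or the empty label. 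A standard covering-space argument shows that a closed walk in $G$ is contractible in $S$ if and only if the corresponding label word in $G'$ is trivial in $F_\beta$.

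Next I would construct a CFG $\Grammar$ in Chomsky normal form generating exactly the trivial words of $F_\beta$. The natural grammar $S \to \e \mid SS \mid x_i\,S\,\bar x_i \mid \bar x_i\,S\,x_i$ for $1 \le i \le \beta$ converts to CNF by introducing one auxiliary nonterminal per terminal symbol and splitting each length-three production, yielding $O(\beta)$ nonterminals and $O(\beta)$ productions. Feeding $G'$ and $\Grammar$ to the Barrett \etal~algorithm computes the shortest label-accepted walk between every ordered pair of vertices; minimizing over diagonal pairs yields the shortest contractible closed walk in $G$. The total running time is $O(NP\,(n')^3) = O(\beta\cdot\beta\cdot(\beta m)^3) = O(\beta^5 m^3)$, matching the claim.

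The principal technical hurdle is the first step: I must simultaneously guarantee that the chosen $\beta$ arcs \emph{do} form a genuine cut system (so that the trivial-word language characterizes contractibility rather than a weaker topological condition) and that each arc is simple as a dual walk (so that the total crossing count stays at $O(\beta m)$ rather than $O(\beta^2 m)$ or worse). The tree--cotree decomposition of the embedded graph, together with the observation that a simple dual walk traverses each dual edge at most once, resolves both requirements in one stroke.
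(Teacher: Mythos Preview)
Your proposal is correct and follows essentially the same approach as the paper: build a system of $\beta$ dual arcs cutting $S$ to a disk (the paper invokes the Eppstein tree--cotree construction and the Cabello--Mohar subdivision trick to make the arcs simple and edge-disjoint), label edges by their signed crossings, and feed the resulting $O(\beta m)$-vertex graph and an $O(\beta)$-size CNF grammar for trivial words in $F_\beta$ to Barrett \etal's algorithm. The only cosmetic difference is the grammar: the paper uses auxiliary nonterminals $\Sym{A}_i,\Sym{\Inverse A}_i$ for ``words reducing to $\Sym{a}_i$'' rather than your Dyck-style productions, but both yield $\Theta(\beta)$ nonterminals and productions in CNF and hence the same $O(\beta^5 m^3)$ bound.
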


\begin{proof}
Our algorithm begins by constructing a \emph{system of dual arcs} for $G$—a collection of boundary-to-boundary paths $\alpha_1, \alpha_2, \dots, \alpha_\beta$ in $G^*$ that cut the underlying surface $S$ into a disk.  Such a system can be constructed in $O(gn)$ time \cite{en-mcsnc-11}, using a natural variant of Eppstein's tree-cotree construction of systems of loops \cite{e-dgteg-03}.  If necessary, we subdivide the edges of $G$ (introducing parallel edges into $G^*$) so that these dual arcs are simple and edge-disjoint, as described by Cabello and Mohar \cite{cm-fsnsn-07} for systems of loops.  After subdivision, $G$ has at most $\beta m$ edges and therefore (because $G$ is symmetric) at most $\beta m$ vertices.  We arbitrarily direct each arc $\alpha_i$.  Finally, we label each directed edge of $G$ with a generator $\Sym{a}_i$ if it crosses the corresponding dual arc $\alpha_i$ from left to right, with an inverse generator $\Sym{\Inverse{a}}_i$ if it crosses~$\alpha_i$ from right to left, and with the empty string otherwise.

Any closed walk in $G$ is naturally labeled with a string obtained by concatenating the labels of its edges in order.  A closed walk is contractible if and only if its label can be \emph{reduced} to the empty string by repeatedly removing substrings of the form $\Sym{a\Inverse{a}}$ or \Sym{\Inverse{a}a}, where \Sym{a} is a generator and \Sym{\Inverse{a}} is its inverse.  The set of all such reducible strings is generated by the following Chomsky normal form context-free grammar $\Grammar$:
\begin{align*}
	\Sym{I} &\to \Sym{II}
			\mid \Sym{A}_1\Sym{\Inverse{A}}_1 \mid \Sym{\Inverse{A}}_1 \Sym{A}_1
			\mid \Sym{A}_2\Sym{\Inverse{A}}_2 \mid \Sym{\Inverse{A}}_2 \Sym{A}_2
			\mid \cdots
			\mid \Sym{A}_\beta\Sym{\Inverse{A}}_\beta \mid
						\Sym{\Inverse{A}}_\beta \Sym{A}_\beta
\\
	\Sym{A}_i &\to \Sym{a}_i \mid \Sym{A}_i\Sym{I} \mid \Sym{I}\Sym{A}_i
		\qquad\qquad\qquad\qquad\qquad \text{for all $i$}
\\
	\Sym{\Inverse{A}}_i &\to \Sym{a}_i \mid \Sym{\Inverse{A}}_i\Sym{I} \mid \Sym{I}\Sym{\Inverse{A}}_i
		\qquad\qquad\qquad\qquad\qquad \text{for all $i$}
\end{align*}
Specifically, the starting nonterminal \Sym{I} generates reducible strings; each non-terminal $\Sym{A}_i$ generates strings that reduce to the corresponding generator $\Sym{a}_i$;  each barred non-terminal $\Sym{\Inverse{A}}_i$ generates strings that reduce to the corresponding inverse generator $\Sym{\Inverse{a}}_i$.

We are now left with instance of the CFG shortest path problem with $N = 2\beta+1$ non-terminals and $P = 8\beta+1$ productions in a graph with $n = O(\beta m)$ vertices, which can be solved in $O(N P n^3) = O(\beta^5 m^3)$ time by the algorithm of Barrett \etal~\cite{bjm-flcpp-00}.
\end{proof}


\section{Shortest Contractible Closed Walks: The Hyperbolic Case}
\label{S:short-con-hyperbolic}

Finally, we describe our algorithm to compute shortest contractible closed walks on surfaces with no boundary and with genus at least $2$.  All such surfaces have \emph{hyperbolic} fundamental groups in the sense of Gromov \cite{g-hg-87}; as observed by Dehn over a century ago \cite{d-uudg-11}, the most natural geometry for the universal cover of these surfaces is the hyperbolic plane.  (See Figure \ref{F:universal-cover} below.)  

Neither of the algorithms described in Section \ref{S:short-con-easy} apply to graphs on hyperbolic surfaces.  As we already mentioned, adapting the covering-space algorithm from Section \ref{SS:ccw-abelian} would require exponentially many copies of the fundamental domain.  (See Lemma \ref{L:exp-growth} below.)  On the other hand, the set of trivial words for a hyperbolic group is \emph{not} a context-free language~\cite{ms-gtecl-83}, so we cannot apply our algorithm from Section~\ref{SS:ccw-free} directly.  Nevertheless, we solve our problem by reduction to the CFG-shortest path algorithm of Barrett \etal~\cite{bjm-flcpp-00}, by constructing a grammar that generates all \emph{sufficiently short} contractible closed walks in a certain canonical surface map. 

Fix an undirected graph $G$ embedded on an orientable surface $S$ with genus $g\ge 2$ and no boundary.  As in the previous section, we assume without loss of generality that $G$ is symmetric, the embedding of $G$ is cellular, and all edge weights are non-negative.

\subsection{Edge Labeling}

The first step in our algorithm is to label the edges of $G$ so that the labels along any closed walk encode its homotopy type.  In principle, we can derive such a labeling by reducing $G$ to a system of loops \cite{dg-tcs-99,cm-fsnsn-07,k-csnco-06}; however, this labeling leads to a less efficient algorithm.\footnote{See footnote \ref{fn:whyquads}.}  Instead, we reduce $G$ to a different canonical surface decomposition called a \emph{system of quads}, as proposed by Lazarus and Rivaud \cite{lr-hts-12} and Erickson and Whittlesey \cite{ew-tcsr-13}.

Let $(T, L, C)$ be an arbitrary \emph{tree-cotree decomposition} of $G$, which partitions the edges of $G$ into a spanning tree $T$, a spanning tree $C^*$ of the dual graph $G^*$, and exactly $2g$ leftover edges $L$ \cite{e-dgteg-03}.  Contracting every edge in $T$ and deleting every edge in $C$ reduces $G$ to a \emph{system of loops}, which has one vertex $a$, one face $f$, and $2g$ loops $L$.  To construct the system of quads $Q$, we introduce a new vertex $z$ in the interior of $f$, add edges between $z$ and every corner of $f$, and then deleting the edges in $L$.

Next we label each directed edge $e$ in $G$ with a directed walk \EMPH{$\seq{e}$} in $Q$ as follows.  Every walk $\seq{e}$ starts and ends at $a$ and thus has even length.  If $e\in T$, then $\seq{e}$ is the empty walk.  Otherwise, after contracting~$T$, edge $e$ connects two corners of~$f$; we define $\seq{e}$ as the walk of length $2$ in $Q$ from the back corner of $e$, to $z$, and then to the front corner of $e$.  (If $e\in L$, there are two possibilities for $\seq{e}$; choose one arbitrarily.)

\begin{figure}[ht]
\centering
	\includegraphics[scale=0.4]{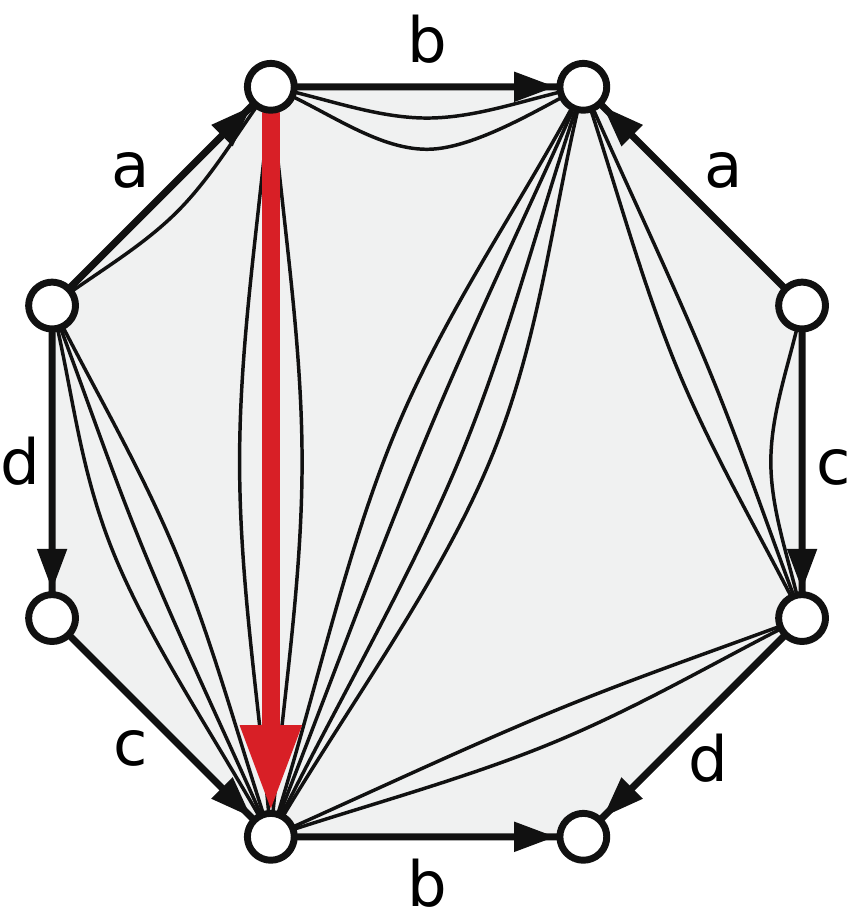} \qquad\qquad
	\includegraphics[scale=0.4]{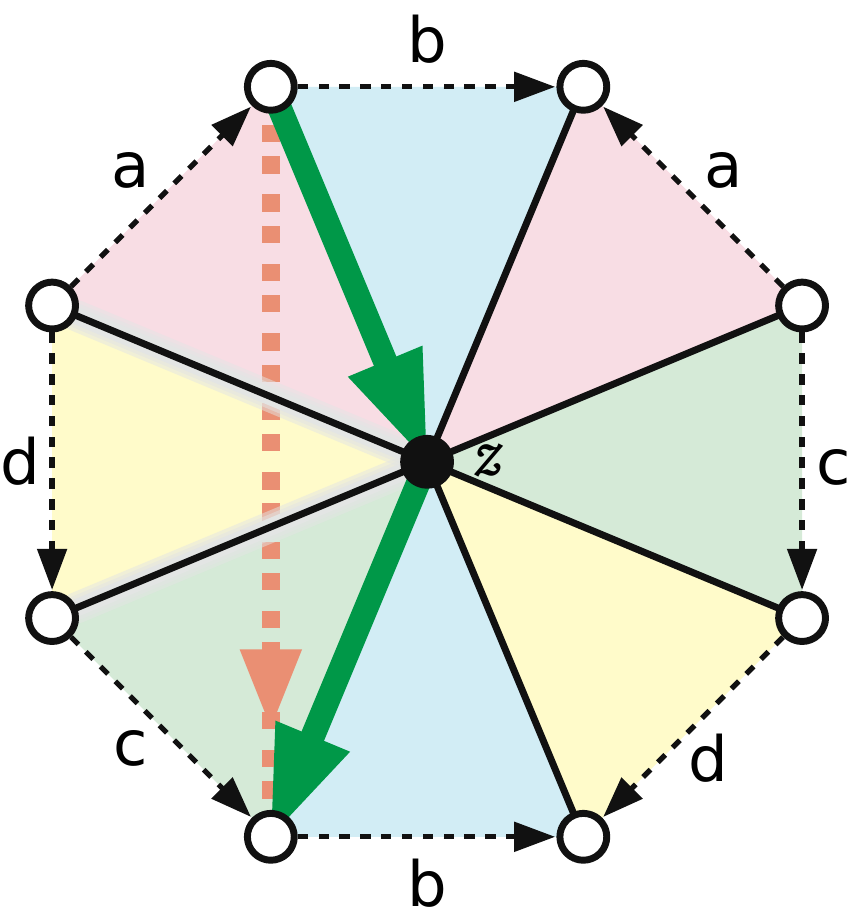}
\caption{Left: A fundamental polygon for a surface of genus~$2$, obtained by contracting a spanning tree and cutting along a system of loops.  Right: Mapping a directed non-tree edge to a path of length $2$ in the corresponding system of quads.}
\label{F:reduction}
\end{figure}

Finally, for any closed walk $\Walk$ in $G$, let $\seq{\Walk}$ denote the closed walk in $Q$ obtained by concatenating of the labels of edges in $\Walk$.  The closed walks $\Walk$ and $\seq{\Walk}$ are freely homotopic in $S$; in particular, $\Walk$ is contractible if and only if $\seq{\Walk}$ is contractible.  Moreover, the number of edges in $\seq{\Walk}$ is at most twice the number of edges in $\Walk$.


The universal cover $\Tiling$ of a system of quads $Q$ is isomorphic to a regular tiling of the hyperbolic plane by quadrilaterals meeting at vertices of degree $4g$; see Figure \ref{F:universal-cover}.  The covering map from $\Tiling$ to $Q$ is a graph homomorphism (mapping vertices to vertices and edges to edges), which at the risk of confusing the reader, we also denote $\seq{\cdot}\colon \Tiling \to Q$.  A closed walk  in $Q$ is contractible if and only if it is the projection of a closed walk in $\Tiling$.  Thus, a closed walk $\Walk$ in $G$ is contractible if and only if there is a closed walk $\Lift{\Walk}$ in~$\Tiling$ such that $\seq{\Lift{\Walk}} = \seq{\Walk}$.

\begin{figure}[ht]
\centering
\includegraphics[scale=0.4]{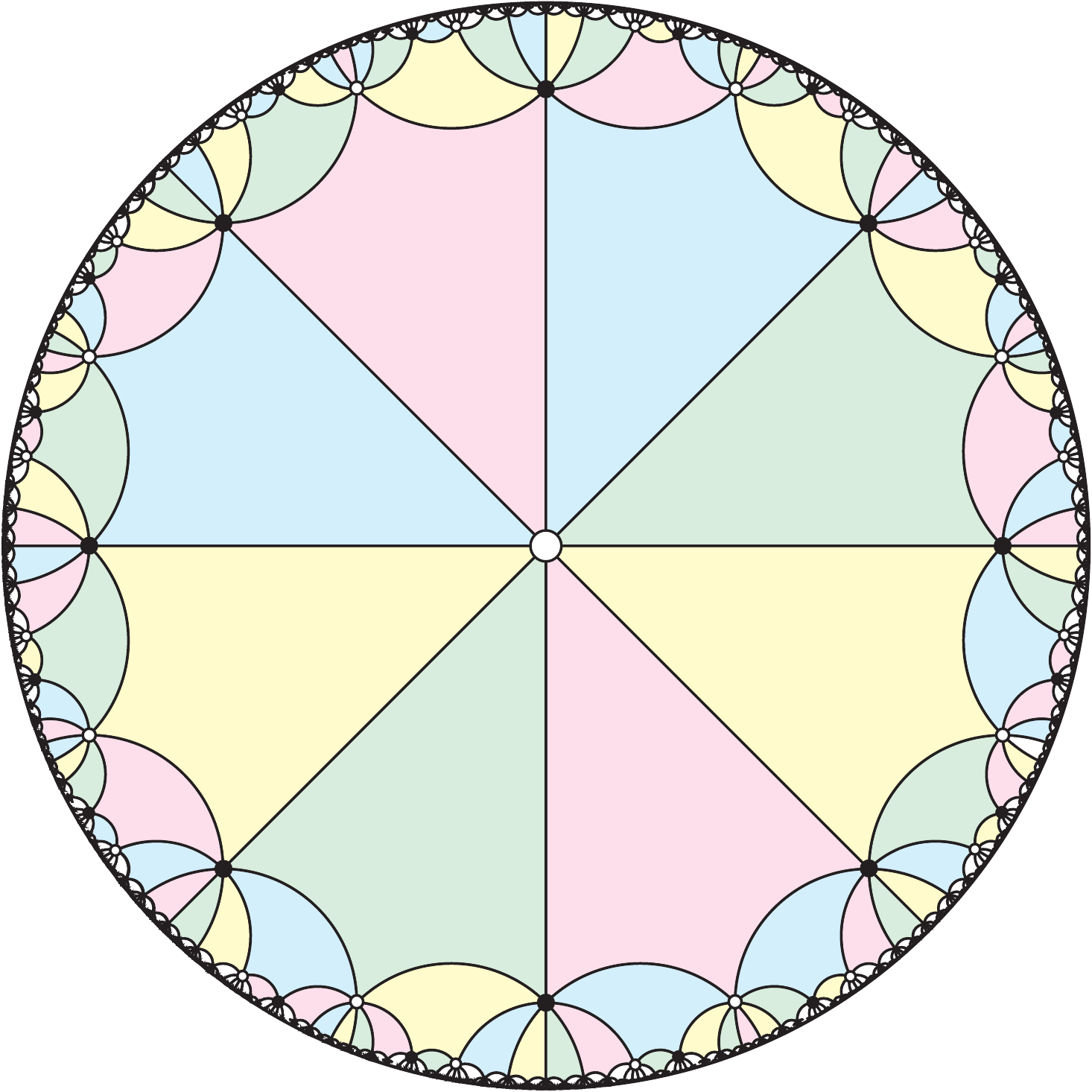}
\caption{The universal cover of a genus-2 system of quads.  (Compare with Figure \ref{F:reduction}.)}
\label{F:universal-cover}
\end{figure}

\subsection{Hyperbolic Geometry}

Our algorithm exploits two well-known geometric properties of regular hyperbolic tilings: the area of any ball in a hyperbolic tiling grows exponentially with its radius~\cite{m-ncfg-68,gh-prges-97}, and the area enclosed by any simple cycle grows at most linearly with the cycle's length~\cite{d-tkzf-12,d-pgtt-87}.  In particular, we rely on the following immediate corollary of these two properties: The distance from any point in the interior of a simple cycle in a regular hyperbolic tiling is at most \emph{logarithmic} in the length of that cycle; see Figure~\ref{F:hyperblob}.  

\begin{figure}[ht]
\centering
\includegraphics[scale=0.4,page=1]{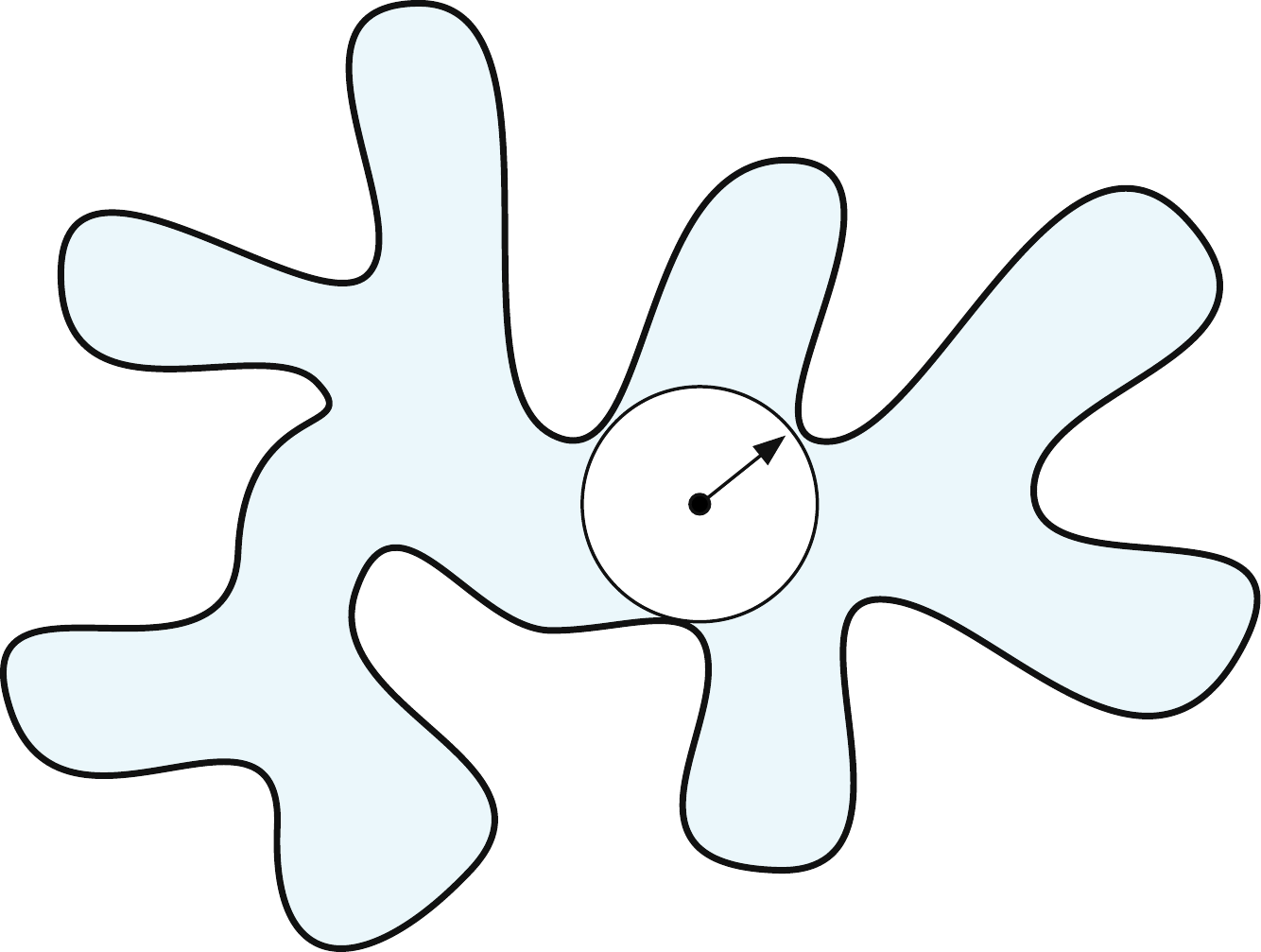}
\caption{Simple closed curves in hyperbolic tilings have logarithmic in-radius.}
\label{F:hyperblob}
\end{figure}

Classical results already imply this \emph{qualitative} behavior for \emph{all} regular hyperbolic tilings; indeed, Chepoi \etal~\cite{cdehv-dcatd-08} have derived tight \emph{asymptotic} bounds.  However, we require \emph{precise} bounds for growth, isoperimetry, and in-radius in the specific tiling $\Tiling$, because the running time of our eventual algorithm depends \emph{exponentially} on the constants in those bounds.

Fix an arbitrary vertex $\Lift{a}$ of $\Tiling$ called the \emph{basepoint}.  For any positive integer $r$, let \EMPH{$n(r)$} denote the number of vertices in $\Tiling$ at distance exactly $r$ from $\Lift{a}$, and let \EMPH{$N(r)$} denote the number of vertices at distance at most $r$ from $\Lift{a}$.

\begin{lemma}
\label{L:exp-growth}
$N(r) \sim \beta \lambda^r$ for some constants $1 < \beta < 2$ and $4g-3 < \lambda < 4g-2$.  In particular, $N(r) \ge \lambda^r$ for all $r\ge 0$.
\end{lemma}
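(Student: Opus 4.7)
The plan is to derive a second-order linear recurrence for $n(r)$ from the combinatorial structure of the regular hyperbolic tiling $\Tiling$ by quadrilaterals meeting $4g$ at a vertex, then solve it and extract the dominant root.

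First, I would layer the vertices of $\Tiling$ by distance from the basepoint, setting $L_r = \{v : d(\Lift a, v) = r\}$, so that $n(r) = |L_r|$. Because every face of $\Tiling$ is a quadrilateral, each vertex in $L_r$ has at most two neighbors in $L_{r-1}$, and (by elementary case analysis using the $\{4,4g\}$ local structure) the number of its neighbors in $L_{r+1}$ depends only on that count. Introducing the two sub-populations and tracking face incidences at the boundary of the ball of radius $r$, a Floyd–Plotnick style calculation (using that each vertex has degree $4g$ and each face has four corners) decouples into the single second-order recurrence
\[
n(r+1) \;=\; (4g-2)\,n(r) \;-\; n(r-1) \qquad (r \ge 1),
\]
with initial data $n(0)=1$ and $n(1)=4g$.

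Next I would solve it.  The characteristic polynomial $x^2-(4g-2)x+1$ has two positive real roots $\lambda > \mu$ with $\lambda\mu = 1$ and $\lambda+\mu = 4g-2$.  Since $\mu = 1/\lambda \in (0,1)$, the identity $\lambda = (4g-2)-\mu$ immediately yields $4g-3 < \lambda < 4g-2$.  Writing $n(r) = A\lambda^r + B\mu^r$, the initial conditions determine $A,B$ explicitly; summing the geometric series,
\[
N(r) \;=\; \sum_{k=0}^{r} n(k) \;=\; \frac{A\lambda}{\lambda-1}\,\lambda^r \;+\; \frac{B\mu}{\mu-1}\,\mu^r,
\]
which gives $N(r) \sim \beta\lambda^r$ for $\beta := A\lambda/(\lambda-1)$.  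A routine substitution using $A = (4g-\mu)/(\lambda-\mu)$ and the bracket $\lambda \in (4g-3,4g-2)$ shows $1 < \beta < 2$.

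The uniform bound $N(r) \ge \lambda^r$ then follows by induction: the base cases $r=0,1$ are direct ($N(0)=1=\lambda^0$ and $N(1)=4g+1 > 4g-2 > \lambda$), and for the inductive step one uses $N(r) = N(r-1) + n(r)$ together with the lower bound $n(r) \ge (\lambda-\mu)\lambda^{r-1}$ read off the closed form, which suffices because $\lambda-\mu = \sqrt{(4g-2)^2-4}$ is comfortably larger than $\lambda-1$ for $g \ge 2$.

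The main obstacle is the first step, the derivation of the recurrence.  One must carefully distinguish the two types of vertices in $L_r$ (those with one versus two neighbors in $L_{r-1}$), count how many vertices of each type they contribute to $L_{r+1}$ using the rigid local $\{4,4g\}$ structure, set up a $2\times 2$ linear system, and verify it has no degeneracies at the boundary of the ball.  Once this recurrence is in hand, everything else is standard linear algebra and elementary estimates.
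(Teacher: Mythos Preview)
Your approach is essentially the same as the paper's: both derive the second-order linear recurrence with characteristic polynomial $z^2-(4g-2)z+1$ (the paper via the Floyd--Plotnick $2\times 2$ system for the two vertex types $n_1,n_2$, you by collapsing that system to a scalar recurrence), solve it, and read off $\lambda$ and $\beta$.

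One technical slip worth fixing: the scalar recurrence $n(r+1)=(4g-2)\,n(r)-n(r-1)$ holds only for $r\ge 2$, not for $r\ge 1$ with your stated initial data $n(0)=1$, $n(1)=4g$.  Plugging in $r=1$ gives $n(2)=(4g-2)(4g)-1$, whereas a direct count in the $\{4,4g\}$ tiling (or the paper's matrix recurrence) gives $n(2)=4g(4g-2)$.  The paper sidesteps this by working with the closed form $n(r)=\alpha(\lambda^r-\lambdabar^r)$, which is valid for $r\ge 1$ and implicitly sets the ``recurrence value'' at $r=0$ to $0$; the basepoint is then accounted for separately in $N(r)$.  This off-by-one only perturbs your coefficients $A,B$, not $\lambda$, but your ``routine substitution'' for $1<\beta<2$ and the bound $n(r)\ge(\lambda-\mu)\lambda^{r-1}$ should be redone with the corrected coefficients $A=\alpha$, $B=-\alpha$.

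For the uniform bound $N(r)\ge\lambda^r$, the paper takes a slightly different route: it shows by induction directly on the $2\times 2$ system that $n_1(r)\ge\lambda^r$ for all $r\ge 1$, whence $N(r)\ge n_1(r)\ge\lambda^r$.  Your induction via $N(r)=N(r-1)+n(r)$ together with $n(r)\ge(\lambda-1)\lambda^{r-1}$ also works once the closed form is corrected; indeed $n(r)/\lambda^{r-1}=\alpha\lambda-\alpha\lambdabar(\lambdabar/\lambda)^{r-1}$ is increasing in $r$ with value $\alpha(\lambda-\lambdabar)=4g>\lambda-1$ at $r=1$.
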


\begin{proof}
Every node at distance $r$ from the basepoint has either zero or one neighbors at distance $r-1$ from the basepoint.  Thus, we can write $n(r) = n_1(r) + n_2(r)$, where each $n_i(r)$ is the number of nodes at distance~$r$ that have $i$ neighbors at distance $r-1$.  We immediately have $n_1(1) = 4g$ and $n_2(1) = 0$.  Floyd and Plotnick \cite{fp-gffge-87} prove that these functions satisfy the following recurrence:
\[
	\begin{bmatrix}
		n_1(r) \\ n_2(r)
	\end{bmatrix}
	~=~
	\begin{bmatrix}
		4g-3 & 4g-4 \\
		1 & 1
	\end{bmatrix}
	\begin{bmatrix}
		n_1(r-1) \\ n_2(r-1)
	\end{bmatrix}
	~=~
	\begin{bmatrix}
		4g-3 & 4g-4 \\
		1 & 1
	\end{bmatrix}^{r-1}
	\begin{bmatrix}
		4g \\ 0
	\end{bmatrix}
\]
(Moran \cite{m-grbht-97} derives similar recurrences for a slightly different growth function.)  Standard techniques now imply that $n(r) = \alpha(\lambda^r -\lambdabar^r)$ for some constant $\alpha$, where $\lambda$ and $\lambdabar = 1/\lambda$ are the roots of the polynomial $z^2 - (4g-2)z + 1$; specifically, we have
\begin{align*}
	\alpha &= \sqrt{g/(g-1)},
	& \lambda &= 2g - 1 + 2\sqrt{g(g-1)},
	& \lambdabar &= 2g - 1 - 2\sqrt{g(g-1)}.
\end{align*}
Straightforward calculation implies $4g-3 < \lambda < 4g-2$.  It follows that $N(r) \sim \beta\lambda^r$, where $1 < \beta = \alpha/(1-\lambdabar)< 2$.

Moreover, a straightforward but tedious induction argument implies that $n_1(r) \ge \lambda^r$ and $n_2(r+1) \ge \lambda^r$ for all $r\ge 1$.  We conclude that $N(r) \ge n_1(r) \ge \lambda^r$ for all $r\ge 1$; the remaining case $r=0$ is trivial.
\end{proof}

\begin{lemma} 
\label{L:dehn-faces}
Any simple cycle of length $L$ in $\Tiling$ has less than $3L/2$ faces of $\Tiling$ in its interior.
\end{lemma}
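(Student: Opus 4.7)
The plan is a combinatorial Gauss-Bonnet-style argument: I combine Euler's formula for the closed disk bounded by $\gamma$ with two incidence counts that exploit the specific structure of $\Tiling$, namely that every face is a quadrilateral and every vertex has degree exactly $4g$. The ``hyperbolic'' content of the bound is really just that $4g \ge 8$, which is what forces the isoperimetric inequality to be linear in $L$; in the Euclidean case $g=1$, where vertices have degree $4$, the same calculation collapses and yields nothing.

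First, I would use the fact that $\gamma$ is a simple closed curve in a plane-embedded graph to conclude that it bounds a closed topological disk $D$, and that the vertices, edges, and faces of $\Tiling$ contained in $D$ form a cellular decomposition of $D$. Let $F$ denote the number of faces of $\Tiling$ inside $D$, and let $V_i, E_i$ denote the numbers of vertices and edges of $\Tiling$ strictly interior to $D$. The cycle $\gamma$ itself contributes $L$ boundary vertices and $L$ boundary edges, so Euler's formula for $D$ reduces to $V_i - E_i + F = 1$.

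Next, I would record two incidence counts. Counting face-edge incidences---each of the $F$ quadrilaterals contributes $4$, each interior edge is shared by two interior faces, and each of the $L$ boundary edges borders exactly one interior face---yields $4F = 2E_i + L$. Counting the sum of vertex degrees in the subgraph of $\Tiling$ supported on $D$, together with the observations that every interior vertex has full degree $4g$ and every boundary vertex has $D$-degree at least $2$, yields $4g\,V_i + 2L \le 2(E_i + L)$.

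Finally, I would eliminate $V_i$ and $E_i$ from these three relations to obtain a single linear inequality in $F$ and $L$. A short computation produces an upper bound of the shape $F \le \frac{(2g-1)L - 4g}{4(g-1)}$, whose coefficient of $L$ is at most $3/4$ for every $g \ge 2$; in particular $F < 3L/4$, which is comfortably stronger than the claimed $3L/2$. The main obstacle I anticipate is purely bookkeeping: in all three counts, interior versus boundary vertices (and edges) must be distinguished carefully, since the linear coefficient is very sensitive to which inequality is used at the boundary, and a miscount could easily turn the bound from linear into something useless.
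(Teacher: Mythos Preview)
Your argument is correct, and in fact yields the sharper bound $F < 3L/4$. The three relations you set up are exactly right: Euler's formula $V_i - E_i + F = 1$, the quad count $4F = 2E_i + L$, and the degree inequality $4g\,V_i \le 2E_i$ coming from the fact that interior vertices have full degree $4g$ while boundary vertices contribute at least $2$. Eliminating $V_i$ and $E_i$ gives $F \le \frac{(2g-1)L - 4g}{4(g-1)}$, and since $(2g-1)/(4(g-1)) \le 3/4$ for all $g \ge 2$, you are done.

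This is a genuinely different route from the paper. The paper argues via the \emph{spur-and-bracket} machinery of Gersten--Short and Erickson--Whittlesey: it identifies ``brackets'' (maximal subpaths of low turn), slides them inward to peel off a strip of tiles, and tracks both the area and a potential $\Phi = 2t_1 + t_2$ through an induction that reduces $\gamma$ to a single tile. That analysis yields $A(\gamma) \le 3L/2 - 1$. Your combinatorial Gauss--Bonnet count is shorter, avoids induction entirely, and gives a constant twice as good; the paper's approach, on the other hand, is constructive (it exhibits an explicit sequence of moves reducing the disk) and ties directly into the small-cancellation literature the authors cite. For the purposes of the lemma as stated, your proof is strictly preferable; the only thing lost is the explicit bracket structure, which the paper does not actually reuse downstream.
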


\begin{proof}
Our proof follows the “spur and bracket” analysis of Erickson and Whittlesey \cite[Section 4.2]{ew-tcsr-13}, which is based in turn on results of Gersten and Short \cite{gs-sctag-90}.

Let $\gamma$ be a nontrivial simple cycle in $\Tiling$, without loss of generality oriented counterclockwise around its interior faces (if any).  The \emph{turn} of any vertex of $\gamma$ is the number of interior tiles incident to~$v$ that locally lie to the left of $\gamma$.  A \emph{(left) bracket} is a subpath of $\gamma$ whose first and last vertices have turn $1$ and whose intermediate vertices have turn $2$.  A discrete curvature argument, ultimately based on Euler's formula, implies that any $\gamma$ has at least four brackets \cite[Corollary~5.1]{gs-sctag-90}\cite[Lemma 4.1]{ew-tcsr-13}.  In particular, $\gamma$ has at least four distinct vertices with turn $1$.

Let $L(\gamma)$ denote the number of edges in $\gamma$, let $A(\gamma)$ denote the number of interior tiles enclosed by $\gamma$, and let $\Phi(\gamma) = 2t_1 + t_2$, where $t_i$ denotes the number of vertices of $\gamma$ with turn~$i$.  Trivially, $\Phi(\gamma) \le 2L(\gamma)$.  

We can reduce the length of $\gamma$ by \emph{sliding} a bracket, as shown in Figure \ref{F:bracket}.  Intuitively, sliding a bracket of length~$\ell$ replaces a subpath of length $\ell+2$ around three sides of a $1\times \ell$ rectangle of tiles with a path of length~$\ell$ around the fourth side of that rectangle.  This replacement removes the turn-$1$ vertices at the ends of the bracket, changes the turns of the vertices in the interior bracket from $2$ to $4g-2$, and decreases the turns of the vertices adjacent to the bracket by~$1$.

\begin{figure}[ht]
\centering
\includegraphics[scale=0.5]{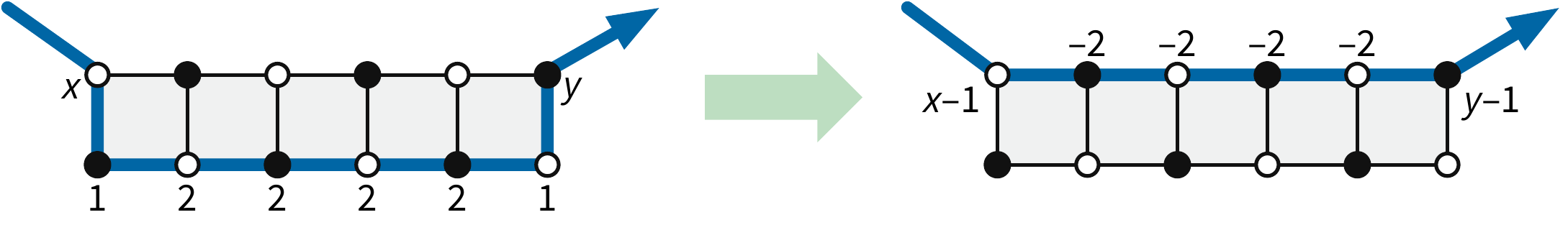}
\caption{Removing a bracket; vertex labels are turns (modulo $4g$).}
\label{F:bracket}
\end{figure}

Sliding a bracket creates a spur (an edge followed by its reversal) if and only if that bracket is adjacent to a vertex with turn $1$, and thus another bracket with length~$1$.  The only simple cycle that contains two consecutive length-$1$ brackets is the boundary of a single tile.  Thus, either $\gamma$ is a single tile boundary, or removing the \emph{shortest} bracket yields a shorter simple cycle $\gamma'$.  In the latter case, we immediately have 
\[
	L(\gamma') = L(\gamma) - 2, \quad
	A(\gamma') = A(\gamma) - \ell, \quad\text{and}\quad
	\Phi(\gamma') \le \Phi(\gamma) - \ell - 1,
\]
where $\ell$ is the length of the bracket.
It follows by induction that $\gamma$ can be reduced to a single tile boundary by a sequence of $L(\gamma)/2 - 2$ bracket slides, and
\[
	A(\gamma) = 1 + \sum_i \ell_i 
	\quad\text{and}\quad
	\Phi(\gamma) \ge 4 + \sum_i (\ell_i + 1)
\]
where $\ell_i$ is the length of the $i$th bracket.  We conclude that $1 + A(\gamma) + L(\gamma)/2 \le \Phi(\gamma) \le 2L(\gamma)$, and therefore
 $A(\gamma) \le 3L(\gamma)/2 - 1$.
\end{proof}

\begin{lemma} 
\label{L:dehn-verts}
Any simple cycle of length $L$ in $\Tiling$ has less than $2L/g$ vertices of $\Tiling$ in its interior.
\end{lemma}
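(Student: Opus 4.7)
The plan is to derive this bound directly from Lemma \ref{L:dehn-faces} via an elementary corner-counting identity that exploits the high vertex degree of $\Tiling$ (every vertex has degree exactly $4g$, and every face is a quadrilateral).

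Let $\gamma$ be a simple cycle of length $L$ in $\Tiling$, oriented counterclockwise around its interior. Let $V(\gamma)$ denote the number of vertices of $\Tiling$ strictly inside $\gamma$, let $A(\gamma)$ denote the number of interior faces (as in Lemma \ref{L:dehn-faces}), and for each vertex $v$ of $\gamma$ let $t_v$ denote its turn, as defined in the proof of Lemma \ref{L:dehn-faces}.

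First I would observe that every face of $\Tiling$ incident to an interior vertex of $\gamma$ lies inside $\gamma$. Indeed, a small neighborhood of any interior vertex $v$ is contained in the open disk bounded by $\gamma$, and faces of $\Tiling$ cannot cross $\gamma$, since both the face boundaries and $\gamma$ lie in the 1-skeleton and so can only share edges; hence any face with a corner at $v$ lies entirely inside that disk. Consequently, every interior vertex is incident to exactly $4g$ interior faces, while every boundary vertex $v \in \gamma$ is incident to exactly $t_v$ interior faces.

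Now I would count corners of interior faces in two ways: there are $4\, A(\gamma)$ corners in total (one per face side), and summing by vertex gives $4g\, V(\gamma) + \sum_{v \in \gamma} t_v$, so
\[
4\, A(\gamma) ~=~ 4g\, V(\gamma) + \sum_{v \in \gamma} t_v.
\]
Since $t_v \ge 0$ for every $v \in \gamma$, this immediately yields $V(\gamma) \le A(\gamma)/g$. Combining with the bound $A(\gamma) < 3L/2$ from Lemma \ref{L:dehn-faces} gives
\[
V(\gamma) ~\le~ A(\gamma)/g ~<~ 3L/(2g) ~<~ 2L/g,
\]
as desired.

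The only step requiring any care is the verification that all $4g$ faces around an interior vertex lie inside $\gamma$, but I do not expect this to be a serious obstacle: it follows straightforwardly from the fact that $\gamma$ is simple (and hence bounds an embedded disk) together with the fact that face boundaries of $\Tiling$ cannot cross $\gamma$. Everything else is a short piece of arithmetic built on the previous lemma.
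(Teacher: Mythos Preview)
Your proof is correct and takes a genuinely different route from the paper's.  The paper passes to the dual tiling $\Tiling^*$ (a regular tiling by $4g$-gons meeting four at each vertex), invokes a classical lemma of Dehn to bound the number of dual tiles enclosed by a closed dual walk, and then relates interior primal vertices to dual tiles enclosed by an offset curve $\gamma^*$ whose length is controlled via Lemma~\ref{L:dehn-faces}.  Your argument stays entirely in the primal tiling: the corner-counting identity $4A(\gamma) = 4g\,V(\gamma) + \sum_{v\in\gamma} t_v$ immediately yields $V(\gamma)\le A(\gamma)/g$, and Lemma~\ref{L:dehn-faces} finishes the job.  This is shorter, avoids both Dehn's lemma and the offset-curve construction, and in fact gives the slightly sharper constant $V(\gamma) < 3L/(2g)$ (and one could tighten it further by observing $t_v\ge 1$ at every boundary vertex, though this is unnecessary here).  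The paper's approach, by contrast, makes the face/vertex duality explicit and ties the result to the classical combinatorial-group-theory literature.
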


\begin{proof}
The dual of $\Tiling$ is another regular tiling $\Tiling^*$ of the hyperbolic plane by $4g$-gons, meeting $4$ at each vertex.  A classical lemma of Dehn \cite{d-tkzf-12,d-pgtt-87} (see also Lyndon and Schupp \cite{ls-cgt-01}) implies that any closed walk~$\gamma^*$ in the dual tiling $\Tiling^*$ contains either a spur or at least $4g-2$ consecutive edges of some tile.  It follows by induction that $\gamma^*$~encloses at most $L(\gamma^*)/(2g-1)$ tiles in $\Tiling^*$.

Now let $\gamma$ be a simple cycle in the original tiling $\Tiling$.  Offsetting $\gamma$ slightly into its interior yields a closed curve $\gamma^*$ that visits every face in the interior of $\gamma$ at most twice; we can regard $\gamma^*$ as a closed walk in the dual tiling $\Tiling^*$.  The previous lemma implies $L(\gamma^*) < 3L(\gamma)$.  Each vertex of $\Tiling$ in the interior of $\gamma$ is dual to a unique face of~$\Tiling^*$ enclosed by $\gamma^*$.  We conclude that $\gamma$ encloses at most $3L(\gamma)/(2g-1) < 2L/g$ interior vertices.
\end{proof}

The main result of this section now follows from Lemma~\ref{L:dehn-verts} and the exact lower bound in Lemma~\ref{L:exp-growth}.

\begin{lemma}
\label{L:short-radius}
Let $\gamma$ be a simple cycle in $\Tiling$ with length at most $L$.  Every tiling vertex in the interior of $\gamma$ has distance at most $\Radius = \ceil{\log_\lambda (2L/g)}$ from at least one vertex of $\gamma$.
\end{lemma}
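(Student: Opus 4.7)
The plan is to argue by contradiction via a ball-counting argument, combining the exponential growth bound from Lemma~\ref{L:exp-growth} with the linear isoperimetric bound from Lemma~\ref{L:dehn-verts}. Suppose for contradiction that some tiling vertex $v$ in the interior of $\gamma$ has distance strictly greater than $\Radius$ from every vertex of $\gamma$. I would then show that the entire ball $B(v, \Radius)$ of tiling vertices at distance at most $\Radius$ from $v$ lies in the interior of $\gamma$, which together with the size of this ball contradicts the interior-vertex bound from Lemma~\ref{L:dehn-verts}.

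First I would establish the containment claim. Since $\gamma$ is a simple cycle in the planar graph $\Tiling$, it separates the plane into an interior disk and an exterior. Any graph walk from the interior vertex $v$ to a vertex outside $\gamma$ must pass through at least one vertex on $\gamma$. Thus, if every vertex of $\gamma$ has distance greater than $\Radius$ from $v$, then every vertex at distance at most $\Radius$ from $v$ is reachable from $v$ by a walk of length at most $\Radius$ that avoids $\gamma$ entirely, and therefore lies in the interior region.

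Next I would count. Because $\Tiling$ is vertex-transitive (it is the universal cover of $Q$, a regular tiling of the hyperbolic plane), the number of vertices at distance at most $\Radius$ from $v$ equals $N(\Radius)$. Lemma~\ref{L:exp-growth} gives $N(\Radius) \ge \lambda^{\Radius}$, and the definition $\Radius = \lceil \log_\lambda(2L/g) \rceil$ yields $\lambda^{\Radius} \ge 2L/g$. So the interior of $\gamma$ would contain at least $2L/g$ vertices. This directly contradicts Lemma~\ref{L:dehn-verts}, which bounds the number of interior vertices by strictly less than $2L/g$.

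The argument is short and essentially routine once both lemmas are in hand; the only subtle step is the containment claim, which hinges on the planar Jordan-curve fact that any graph walk crossing from interior to exterior must visit a vertex of $\gamma$, combined with the triangle-inequality observation that such a vertex would itself sit within distance $\Radius$ of $v$. I do not expect any real obstacle: the entire proof should fit in a short paragraph, and the constants line up exactly because the bound in Lemma~\ref{L:exp-growth} was stated precisely as $N(r) \ge \lambda^r$ and the bound in Lemma~\ref{L:dehn-verts} as a strict inequality, so the contradiction at the critical radius $\Radius$ is clean.
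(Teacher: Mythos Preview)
Your argument is correct and is exactly the intended one: the paper states only that the lemma ``follows from Lemma~\ref{L:dehn-verts} and the exact lower bound in Lemma~\ref{L:exp-growth}'' without spelling out the details, and your ball-counting contradiction is precisely how those two lemmas combine. The only point worth noting is that vertex-transitivity of $\Tiling$ (needed to apply $N(\Radius)$ at an arbitrary center $v$) is implicit in the paper's description of $\Tiling$ as a \emph{regular} hyperbolic tiling, so you are on solid ground there as well.
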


\subsection{Triangulation}

Now let $\Walk$ be a closed walk in $\Tiling$ with a distinguished vertex $\Lift{a}$ called its \EMPH{basepoint}.  Following Muller and Schupp~\cite{ms-gtecl-83}, we define a \EMPH{$k$-triangulation} of~$\Walk$ to be a continuous map $\mathcal{T}$ from a \emph{reference} triangulation~$T$ of a simple polygon~$P$ to $\Tiling$ with the following properties:
\begin{itemize}\itemsep0pt
\item
$\mathcal{T}$ maps vertices of $P$ to vertices of $\Walk$.
\item
$\mathcal{T}$ maps a distinguished edge of $P$, called the \EMPH{root edge}, to the basepoint $\Lift{a}$.
\item
$\mathcal{T}$ maps the boundary of $P$ continuously onto the walk $\Walk$; in particular, every edge of $P$ except the root edge is mapped to a single edge of $\Walk$.
\item
Finally, $\mathcal{T}$ maps each diagonal edge in $T$ to a walk of length at most $k$ in $\Tiling$.
\end{itemize}
This map is not necessarily an embedding, even locally.  The diagonal walks may intersect each other or~$\Walk$, even if $\Walk$ is a simple cycle; moreover, some diagonal walks may have length zero if $\Walk$ is not simple.

\begin{figure}[ht]
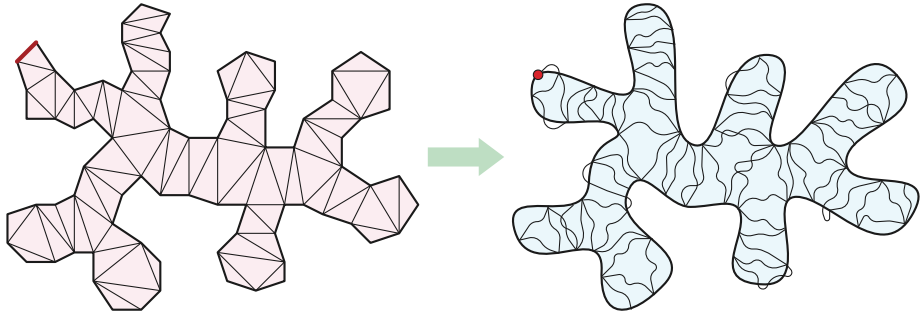

\centering
\includegraphics[scale=0.4,page=3]{Fig/hyperblob}
\includegraphics[scale=0.4,page=2]{Fig/hyperblob}
\caption{Any closed walk in a hyperbolic tiling can be “triangulated” by paths of logarithmic length.}
\end{figure}

\begin{lemma}
\label{L:triangulation-simple}
Every simple cycle of length $L$ in $\Tiling$ has a $(2\Radius+2)$-triangulation, where $\Radius = \ceil{\log_\lambda(2L/g)}$.
\end{lemma}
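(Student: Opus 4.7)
The plan is to prove this by strong induction on the length $L$ of the simple cycle $\gamma$ in $\Tiling$, building up the required triangulation one chord at a time. The base case $L \le 3$ is vacuous: the reference polygon $P$ has no diagonals to check. For the inductive step with $L \ge 4$, the central claim is that $\gamma$ admits a \emph{short chord}---two non-adjacent vertices $u, v$ of $\gamma$ joined by a walk in $\Tiling$ of length at most $2\Radius + 2$. Taking such a chord $\sigma$ as the image of one diagonal of $T$ splits the reference polygon $P$ into two smaller sub-polygons $P_1$ and $P_2$; the inductive hypothesis applied to the two sub-cycles, with the same $\Radius$ since $\Radius$ is monotone in $L$, yields compatible triangulations whose union is the desired triangulation of $P$.

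Proving the short chord claim is the heart of the argument. Pick any vertex $w$ in the interior of the closed disk $D$ bounded by $\gamma$; if $\gamma$ has no interior vertex at all, then $L$ is bounded by a function of $g$ alone, and two vertices at cyclic distance $\lfloor L/2 \rfloor$ on $\gamma$ already satisfy the required bound trivially. With $w$ fixed, I claim that the ball $B(w,\Radius+1)$ in $\Tiling$ contains at least three vertices of $\gamma$. Since any three vertices on a cycle of length at least $4$ must contain a non-adjacent pair, pairing two such vertices via $w$ yields a walk $u \to w \to v$ of length at most $(\Radius + 1) + (\Radius + 1) = 2\Radius + 2$, which is the desired chord.

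To establish the three-vertex claim, I would combine exponential growth (Lemma~\ref{L:exp-growth}) with the linear isoperimetric bound (Lemma~\ref{L:dehn-verts}). The ball $B(w,\Radius+1)$ has size at least $\lambda^{\Radius+1}$, whereas the interior of $D$ contains fewer than $2L/g \le \lambda^{\Radius}$ vertices. Every vertex of $B(w,\Radius+1)$ lying outside $D$ reaches $w$ via a shortest path that crosses $\gamma$ at some vertex within $B(w,\Radius+1)$, so exterior vertices are routed through boundary vertices already in the ball. A careful accounting---bounding the number of exterior vertices routed through any single boundary vertex by $N(\Radius)$ and comparing this against the gap $\lambda^{\Radius+1} - \lambda^{\Radius}$---forces the ball to contain at least three boundary vertices for any nontrivial $L$ and $g$.

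The main obstacle I anticipate is executing the counting in the last step tightly enough to cover the small cases, where the asymptotic growth estimates are delicate and off-by-one issues can arise. A secondary subtlety is verifying that the two sub-cycles $\gamma_1$ and $\gamma_2$ obtained by concatenating an arc of $\gamma$ with $\sigma$ are themselves \emph{simple} cycles in $\Tiling$, which requires $\sigma$ to meet $\gamma$ only at its endpoints. If this fails, I can recover by choosing $\sigma$ to be a shortest walk in $\Tiling$ between $u$ and $v$ (hence a simple path) and then treating any coincidental $\gamma$-vertex along $\sigma$ as an additional splitting point; the induction then proceeds on the total vertex count of the reference polygon rather than on $L$ directly, and the bound $2\Radius+2$ on the lengths of all the new diagonals is preserved.
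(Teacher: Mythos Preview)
Your inductive approach differs substantially from the paper's, and as written the induction does not close.

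The core issue is the recursive step. Splitting $\gamma$ along a chord $\sigma$ of length up to $2\Radius+2$ produces sub-cycles $\gamma_1,\gamma_2$ whose lengths can exceed $L$ (each is an arc of $\gamma$ plus $\sigma$), so induction on $L$ fails outright. Your proposed fix—inducting on the side-count of the reference polygon—does decrease the right quantity, but then the statement you need at the next level is not the lemma itself: the boundary of the sub-polygon $P_i$ maps to $\gamma_i$, which is a closed walk of possibly greater length, and the short-chord argument you want to re-apply depends on the in-radius bound $\Radius$, which is controlled (via Lemmas~\ref{L:dehn-verts} and~\ref{L:short-radius}) by the number of vertices \emph{interior} to the sub-cycle. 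That interior count stays below $2L/g$ only if $\sigma$ lies inside the closed disk $D$ bounded by $\gamma$, so that each $\gamma_i$ bounds a sub-disk of $D$. But your chord $u\to w\to v$ is built from shortest paths in $\Tiling$, not in $D$: while a shortest path from $w$ to its \emph{nearest} $\gamma$-vertex must stay inside $D$, paths to a second or third nearby $\gamma$-vertex may exit and re-enter. Without forcing $\sigma\subseteq D$ (or otherwise reformulating the inductive hypothesis to track interior-vertex count and to allow non-simple sub-walks), you lose control of the sub-cycles' in-radii and the recursion breaks.

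The paper sidesteps all of this by constructing the triangulation in one shot, with no induction for simple cycles. It assigns to every vertex $x$ in the closed disk $\Gamma$ its nearest neighbor $nn(x)$ on $\gamma$ (distance at most $\Radius$ by Lemma~\ref{L:short-radius}), arbitrarily triangulates each quad tile of $\Gamma$, and for every resulting edge $xy$ with $nn(x)\ne nn(y)$ takes a shortest path $\sigma(x,y)$ in $\Tiling$ from $nn(x)$ to $nn(y)$; the triangle inequality gives $|\sigma(x,y)|\le \Radius+2+\Radius=2\Radius+2$. These paths, together with the tile-triangles whose three corners have three distinct nearest neighbors, assemble directly into the required $(2\Radius+2)$-triangulation—essentially a discrete Delaunay construction, with no recursion and no need to confine chords to $D$.
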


\begin{proof}
For the moment, let $\gamma$ be a \emph{simple} cycle of length $L$ in $\Tiling$, and let $\Gamma$ denote the closed disk bounded by $\gamma$.  Intuitively, we construct a “Delaunay triangulation” of the vertices of $\gamma$ inside the disk $\Gamma$.

For each tiling vertex $x$ in $\Gamma$, let \EMPH{$nn(x)$} denote its nearest neighbor on $\gamma$, measuring distance by counting edges in $\Tiling$ and breaking ties arbitrarily.  In particular, if $x\in \gamma$, then $nn(x)=x$.  Lemma \ref{L:short-radius} implies that the distance from any vertex $x$ to its nearest neighbor $nn(x)$ is at most $\Radius$.

Arbitrarily triangulate each tile of $\Gamma$ to obtain a triangulation $\Delta$, and let $M$ denote the set of all edges $xy$ in $\Delta$ such that $nn(x)\ne nn(y)$.  (Intuitively, the duals of edges in $M$ form a “discrete medial axis” of~$\Gamma$~\cite{b-cccda-94}.)  For each edge $xy$ in $M$, let \EMPH{$\sigma(x,y)$} denote any shortest path in $\Tiling$ (not in $\Delta$!) from $nn(x)$ to $nn(y)$.  (In particular, if $xy$ is an edge of~$\gamma$, then $\sigma(x,y) = xy$.)  The triangle inequality implies that each shortest path $\sigma(xy)$ has length at most $2\Radius+2$.%
\footnote{A nearly identical argument implies that for any $t>1$, any simple cycle of length $L$ in a regular hyperbolic tiling by $2t$-gons has a $(2\rho + t)$-triangulation, for some appropriate inradius function $\rho = O(\log_\lambda L)$.  Unfortunately, the $+t$ term in the triangulation constant implies a $g^{O(t)}$ factor in the complexity of our context-free grammar, and therefore in the running time of our algorithm.  In particular, using a system of loops would lead to an extra $g^{O(g)}$ factor in the running time. \label{fn:whyquads}}

The shortest paths $\set{\sigma(x,y) \mid xy\in M(\gamma)}$ are the diagonal paths of a $(2\Radius+2)$-triangulation of $\gamma$.  Specifically, for every triangle $xyz$ in $\Delta$ whose vertices have three distinct nearest neighbors, the shortest paths $\sigma(x,y)$, $\sigma(y,z)$, and $\sigma(x,z)$ define a triangle $\triangle(xyz)$ in the $(2\Radius+2)$-triangulation.
\end{proof}

\begin{lemma}
\label{L:triangulation}
Every closed walk of length $L$ in $\Tiling$ has a $(2\Radius+2)$-triangulation, where $\Radius = \ceil{\log_\lambda(2L/g)}$.
\end{lemma}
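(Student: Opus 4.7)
The plan is to reduce to the simple-cycle case (Lemma \ref{L:triangulation-simple}) by first inserting \emph{length-zero chords} that separate repeated vertex visits. Let $P$ be an abstract simple polygon whose boundary edges are in correspondence with the edges of $\Walk$, equipped with the natural map $\mathcal{T}\colon \partial P \to \Walk$ and with a distinguished root edge at the basepoint. I then add a maximal non-crossing set of chords $p_i p_j$ with $\mathcal{T}(p_i) = \mathcal{T}(p_j)$, and extend $\mathcal{T}$ by sending each such chord to the length-zero walk at the common image vertex. Each such chord satisfies the diagonal-length bound trivially.

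The key observation is that after maximality, no two non-adjacent boundary vertices of any resulting face $F$ can share a $\mathcal{T}$-image: otherwise a chord between them would lie inside $F$ without crossing any existing chord, contradicting maximality. Consequently, the boundary of $F$, after collapsing its length-zero edges, is a closed walk in $\Tiling$ in which every vertex appears at most once---that is, a simple cycle $\gamma_F$ of length at most $L$. Applying Lemma \ref{L:triangulation-simple} to each such $\gamma_F$ yields a $(2\Radius_F + 2)$-triangulation with $\Radius_F = \ceil{\log_\lambda(2|\gamma_F|/g)} \le \Radius$; lifting these per-face triangulations back to $F$ (choosing an arbitrary representative among identified boundary vertices of $F$ for each vertex of $\gamma_F$) and combining with the length-zero chords yields a triangulation of $P$ whose every diagonal maps to a walk of length at most $2\Radius + 2$.

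The main obstacle will be carefully handling degenerate faces: triangles and faces whose collapsed boundary has length less than three (a single vertex, or a spur traversing an edge and its reverse). In those cases Lemma \ref{L:triangulation-simple} either does not apply or is trivial, but the face either needs no further triangulation (because it is already a triangle) or is triangulated entirely by length-zero chords between coincident vertices, which satisfy the bound by construction. A related bookkeeping task is verifying that the root edge of $P$---which already maps to the basepoint and hence is itself effectively a length-zero edge---is compatible with the chord insertion process, but this follows because the root edge plays no distinguished role in the maximality argument.
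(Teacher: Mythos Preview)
Your approach is correct and takes a genuinely different route from the paper's.  The paper argues by induction on~$L$: if $\Walk$ has a spur, it deletes the spur, recurses, and then patches the triangulation across the spur; if $\Walk$ has a repeated vertex, it splits $\Walk$ into two shorter closed walks, recurses on each, and merges the two triangulations at the shared vertex.  Each patching step is handled by an explicit (small) case analysis.  You instead perform a one-shot decomposition: insert a maximal non-crossing family of length-zero chords, observe that maximality forces the collapsed boundary of every face to be a simple cycle, and then invoke Lemma~\ref{L:triangulation-simple} face by face.

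Your maximality argument is the cleaner idea, and it avoids the inductive case analysis entirely.  The paper's argument, in exchange, is more explicit about how the pieces reassemble.  Two places in your write-up deserve one more sentence each.  First, the ``lifting'' step: by maximality, each run of identified boundary vertices on a non-degenerate face $F$ has length at most~$2$ (three consecutive chord-connected vertices would give two non-adjacent vertices with the same image), so choosing representatives leaves only triangular ears, each closed off by a diagonal of length~$0$ or~$1$.  Second, your degenerate faces: the same run-length bound forces any face with collapsed length~$0$ to be a triangle bounded by three chords, and any face with collapsed length~$2$ to be a triangle or quadrilateral, so the required triangulation is immediate.  (Collapsed length~$1$ or~$3$ cannot occur, since $\Tiling$ has no loops and girth~$4$.)
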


\begin{proof}
The proof proceeds by induction on $L$.  Let $\Walk$ be any closed walk of length $L$.  If $\Walk$ is a simple cycle, we can defer to the previous lemma, so assume otherwise.  There are two cases to consider.

First, suppose $\Walk$ contains a spur $x\arcto y\arcto x$, and let $\Walk'$ be the walk obtained by removing that spur.  The induction hypothesis implies that $\gamma'$ has a $(2\Radius+2)$-triangulation $\mathcal{T}'$.  There are two subcases to consider; see the top row of Figure \ref{F:extend}.
\begin{itemize}
\item
If $\mathcal{T}$ contains the path $\sigma(xz)$, we can transform $\mathcal{T}$ into a $(2\Radius+2)$-triangulation of $\Walk$ by expanding $\sigma(yz)$ into two triangles $\triangle(yzy)$ and $\triangle(yxy)$.
\item
Otherwise, the triangulation~$\mathcal{T}'$ must contain the path $\sigma(wz)$ between the neighbors of~$x$ on $\Walk'$, either as an edge of $\Walk$ or as a diagonal path.  We can transform $\mathcal{T}'$ into a $(2\Radius+2)$-triangulation of $\Walk$ by replacing the triangle $\triangle(wxz)$ with three triangles $\triangle(wxz)$, $\triangle(wxx)$, and $\triangle(xxy)$.
\end{itemize}
In either case, each new diagonal path is either empty or a duplicate of some path in $\mathcal{T}'$, and thus has length at most $2\Radius+2$.

\begin{figure}[ht]
\centering
\includegraphics[scale=0.4]{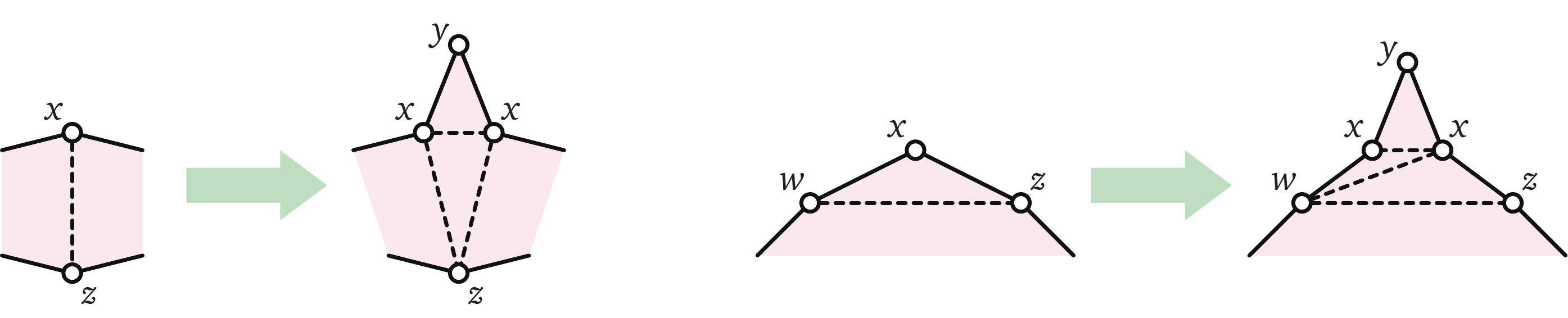}\\[1ex]
\includegraphics[scale=0.4]{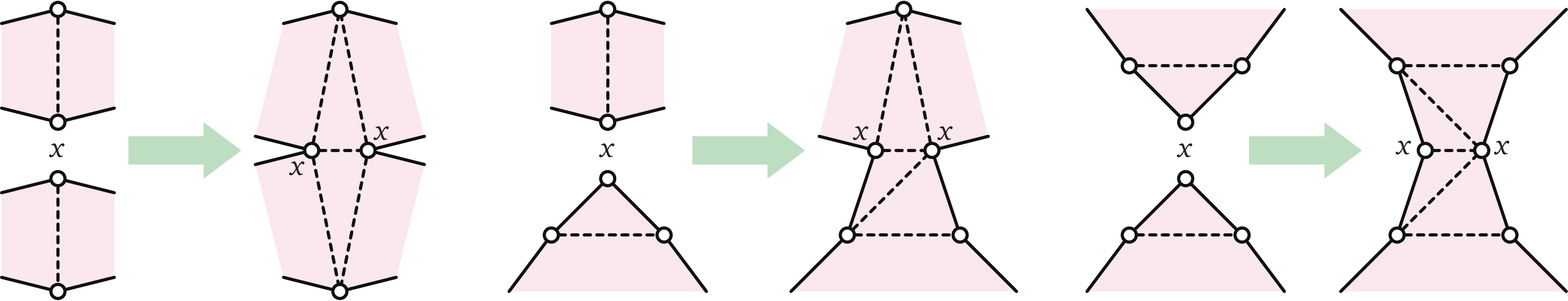}
\caption{Top:~Extending a $k$-triangulation across a spur.  Bottom:~Merging two $k$-triangulations across a repeated vertex.}
\label{F:extend}
\end{figure}

Now suppose $\Walk$ does not contain a spur, but does contain a repeated vertex $x$.  We can partition $\Walk$ into two smaller closed walks $\Walk_1$ and $\Walk_2$ with a common vertex $x$.  The induction hypothesis implies that $\Walk_1$ and~$\Walk_2$ have $(2\Radius+2)$-triangulations $\mathcal{T}_1$ and $\mathcal{T}_2$.  We can merge $\mathcal{T}_1$ and $\mathcal{T}_2$ into a $(2\Radius+2)$-triangulation $\mathcal{T}$ of $\Walk$ by duplicating $x$ and inserting new diagonal paths.  There are three subcases to consider, depending on whether $x$ is the endpoint of a diagonal path in both, only one, or neither of $\mathcal{T}_1$ and $\mathcal{T}_2$; see the bottom row of Figure \ref{F:extend}.  Again, each new diagonal path is either empty or a duplicate of some path in $\mathcal{T}_1$ or $\mathcal{T}_2$, and thus has length at most $2\Radius+2$.

In all cases, we obtain a $(2\Radius+2)$-triangulation of the original walk $\Walk$.
\end{proof}

\subsection{Context-Free Grammar}

Now finally we reach the key component of our algorithm: the construction of a small context-free grammar $\Grammar$ such that the shortest contractible closed walk in $G$ is also the shortest walk whose label is in the language generated by $\Grammar$.  Our construction is essentially the same as that of Muller and Schupp \cite[Theorem 1]{ms-gtecl-83}, but with more careful analysis.

\begin{lemma}
\label{L:CFG}
For any positive integer $L$, there is a context-free grammar $\Grammar$ in Chomsky normal form with $O(g^2L^2)$ non-terminals, such that (1) Every string generated by $\Grammar$ describes a contractible closed walk in~$Q$, and (2) every contractible closed walk in~$Q$ of length at most $L$ is generated by $\Grammar$.
\end{lemma}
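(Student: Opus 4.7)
The plan is to follow the Muller--Schupp construction \cite{ms-gtecl-83}, specialized to our bounded-length setting so that the grammar stays finite. Set $\Radius = \ceil{\log_\lambda (2L/g)}$ as in Lemma~\ref{L:triangulation}, and fix lifts $\Lift{a}, \Lift{z}\in\Tiling$ of the two vertices of $Q$. I would introduce one nonterminal $A_{u,v}$ for every pair of vertices $u, v \in \Tiling$ with $d(u,v) \le 2\Radius+2$, under the identification $A_{u,v} = A_{gu,gv}$ for every deck transformation $g$; this identification is legitimate because deck transformations commute with the covering map $\Tiling \to Q$, so walks related by $g$ have identical labels. Choosing the unique representative of each equivalence class whose first coordinate is either $\Lift{a}$ or $\Lift{z}$ shows that the number of nonterminals is at most $2 N(2\Radius+2)$. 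Lemma~\ref{L:exp-growth} together with the bound $\lambda < 4g-2$ then gives $N(2\Radius+2) = O(\lambda^{2\Radius+2}) = O(\lambda^4 L^2 / g^2) = O(g^2 L^2)$, exactly the claimed bound.

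The productions mirror Lemma~\ref{L:triangulation}. For each directed edge $e$ of $\Tiling$ from $u$ to $v$, include the terminal production $A_{u,v} \to \seq{e}$, where $\seq{e}$ is the $Q$-label of the edge covered by $e$. For each triple $(u, w, v)$ of vertices in $\Tiling$ whose three pairwise distances are all at most $2\Radius + 2$, include the binary production $A_{u,v} \to A_{u,w} A_{w,v}$. Take the start symbol to be $A_{\Lift{a}, \Lift{a}}$; an analogous construction based at $\Lift{z}$ (merged via a CNF-friendly common start symbol that inlines the first split) covers closed walks based at $z$-type vertices without changing the nonterminal count asymptotically. Each right-hand side consists of either a single terminal or two nonterminals, so the grammar is already in Chomsky normal form. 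Soundness (claim (1)) is a routine induction on derivation depth: every string derived from $A_{u,v}$ is the $Q$-label of some walk in $\Tiling$ from $u$ to $v$, so every string from the start symbol is the label of a closed walk in $\Tiling$, which projects to a contractible closed walk in $Q$.

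For completeness (claim (2)), I would lift the given contractible closed walk of length $\le L$ to a closed walk $\Lift{\Walk}$ in $\Tiling$ based at $\Lift{a}$, and invoke Lemma~\ref{L:triangulation} to obtain a $(2\Radius+2)$-triangulation $\mathcal{T}$ of $\Lift{\Walk}$. Then I would show by strong induction on the number of triangles that the label of every sub-walk of $\Lift{\Walk}$ cut off by $\mathcal{T}$'s diagonals is derivable from the corresponding nonterminal: the triangle adjacent to the current sub-polygon's distinguished edge picks out an intermediate vertex $w$ at distance at most $2\Radius+2$ from each endpoint, so the matching binary production applies and the two strictly smaller sub-walks are generated by induction. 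The main obstacle is unifying the base case (the top-level closed walk, whose root edge of $\mathcal{T}$ collapses to the single point $\Lift{a}$) with the recursive case (open sub-walks whose distinguished boundary edge actually maps to a diagonal path of length up to $2\Radius+2$). I would handle both uniformly by extending the notion of $k$-triangulation to allow the distinguished boundary edge of a sub-polygon to map to any walk of length at most $k$ in $\Tiling$; the bound $2\Radius+2$ on that walk is precisely what guarantees that its endpoints index a legitimate nonterminal, which is what makes the inductive step close.
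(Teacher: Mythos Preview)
Your proposal is correct and follows essentially the same approach as the paper: nonterminals are short homotopy classes (equivalently, deck-orbit representatives of vertex pairs at distance $\le 2\Radius+2$), productions encode triangle decompositions, and completeness comes from the $(2\Radius+2)$-triangulation of Lemma~\ref{L:triangulation}. The paper states completeness more tersely---``the dual tree of this triangulation is a parse tree''---which is exactly your induction unwound; your added care about a $\Lift{z}$-based start symbol and about the root-edge base case are minor elaborations that the paper leaves implicit.
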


\begin{proof}
As in the previous proofs, let $\Radius = \ceil{\log_\lambda(2L/g)}$.

The alphabet~$\Sigma$ of $\Grammar$ consists of all $8g$ darts in the system of quads $Q$.

Call a homotopy class of walks in $Q$ \emph{short} if it contains a walk of length at most $2\Radius+2$.  The non-terminals of $\Grammar$ correspond to short homotopy classes of walks in $Q$.  More formally, fix an arbitrary lift $\Lift{a}$ of $a$ and an arbitrary lift $\Lift{z}$ of $z$.  Each short homotopy class corresponds to a unique ordered pair $(\Lift{s}, \Lift{t})$ of vertices in $\Tiling$, where $\Lift{s} \in \set{\Lift{a}, \Lift{z}}$ and the shortest-path distance from $\Lift{s}$ to $\Lift{t}$ in~$\Tiling$ is at most $2\Radius+2$.  Thus, the number of short homotopy classes, and therefore the number of non-terminals in $\Grammar$, is at most
\[
	2\, N(2\Radius+2)
	~=~ 
	O(\lambda^{2\Radius + 2})
	~\le~
	O(\lambda^{2\log_\lambda(2L/g) + 4})
	~=~
	O((2L/g)^2\, \lambda^4)
	~=~
	O(g^2 L^2)
\]
by Lemma \ref{L:exp-growth} and the definition of $\Radius$.  (Moreover, Lemma \ref{L:exp-growth} implies that the constant hidden in the final $O(\,)$ bound is at most $8$.)

The starting non-terminal of $\Grammar$ is the homotopy class of contractible walks in~$G$ that start and end at~$a$, represented by the vertex pair $(\Lift{a}, \Lift{a})$.

Non-terminal productions in $\Grammar$ correspond to pairs of short homotopy classes whose concatenation is also a short homotopy class.  Specifically, if $\alpha$ and $\beta$ are walks in short homotopy classes $A$ and $B$, and their concatenation $\alpha\cdot\beta$ is in short homotopy class $C$, then $\Grammar$ contains the production $C\to AB$.  There are trivially at most $\abs{N}^2 = O(g^4 L^4)$  non-terminal productions.\footnote{We believe the number of non-terminal  productions is actually $\Theta(g^3 L^3)$; however, this refinement would only improve our final running time to $O(g^5 m^8)$, which doesn't seem worth the trouble.}

There are $8g$ terminal productions $A \to d$ in $\Grammar$, one for (the homotopy class of) each dart $d$ in $Q$.  We deliberately exclude $\e$-productions from $\Grammar$ to avoid generating the empty string.

Every string generated by the nonterminal $(\Lift{s}, \Lift{t})$ is the projection of a non-empty walk from $\Lift{s}$ to $\Lift{t}$ in~$\Tiling$; all such walks are in the same (short) homotopy class.  In particular, every string generated by the starting non-terminal $(\Lift{a}, \Lift{a})$ is a non-empty contractible walk in $Q$ from $a$ to~$a$.

On the other hand, let $\Walk$ be any non-empty closed walk of length at most $L$ in $\Tiling$.  Lemma \ref{L:triangulation} implies that $\gamma$ has a $(2\Radius+2)$-triangulation $\mathcal{T}$.  The dual tree of this triangulation is a parse tree for $\Walk$.  Specifically, the (weak) dual of the reference triangulation $T$ is a binary tree $T^*$.  Direct the dual of the root edge of $T$ into the polygon, and  direct every other edge of $T^*$ away from the root.  For each edge of $T$, label the \emph{head} of its directed dual edge with the homotopy class of its image in $\Tiling$.  With this labeling, $T^*$ is a parse tree for $\seq{\Walk}$ with respect to $\Grammar$.
\end{proof}

\begin{theorem}
\label{Th:short-con-g}
Let $G$ be a directed graph with $m$ non-negatively weighted edges, embedded on the orientable surface of genus $g$ with no boundary.  We can compute the shortest contractible closed walk in $G$ in $O(g^6 m^9)$ time.
\end{theorem}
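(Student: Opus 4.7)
The plan is to combine the edge-labeling from the start of Section~\ref{S:short-con-hyperbolic} with the grammar from Lemma~\ref{L:CFG} and reduce to a single instance of the CFG shortest-path problem solved by Barrett et al.~\cite{bjm-flcpp-00}. First I would compute a tree-cotree decomposition of $G$, construct the system of quads $Q$ from it, and record for every edge $e$ of $G$ its walk $\seq{e}$ in $Q$, so that a closed walk $\Walk$ in $G$ is contractible in $S$ if and only if $\seq{\Walk}$ is contractible in $Q$.

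By Corollary~\ref{C:ccw-short}, the shortest contractible closed walk in $G$ uses at most $m$ edges, so its image in $Q$ has length at most $L := 2m$. Invoking Lemma~\ref{L:CFG} with this value of $L$ produces a grammar $\Grammar$ in Chomsky normal form with $N = O(g^2 m^2)$ non-terminals and, since $\Grammar$ is in Chomsky normal form, $P = O(N^2) = O(g^4 m^4)$ productions; every string generated by $\Grammar$ is a contractible closed walk in $Q$ based at the basepoint $a$, and every such walk of length at most $2m$ is so generated.

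To feed this into Barrett et al., I would build an auxiliary graph $G'$ as follows: label every tree edge of $G$ with $\e$ (keeping its weight), and subdivide every non-tree edge $e$ with one new vertex, labeling the two resulting directed edges with the two consecutive darts that comprise $\seq{e}$ and placing the whole weight $w(e)$ on the first of those edges (the second getting weight $0$). This leaves $G'$ with $n' = O(m)$ vertices and $O(m)$ edges, and, crucially, any walk in $G'$ whose label lies in the language of $\Grammar$ must start and end at an \emph{original} vertex of $G$, since a walk starting at a subdivision vertex would project to a walk in $Q$ starting at $z$ rather than at $a$. Running the all-pairs variant of Barrett et al.'s algorithm~\cite{bjm-flcpp-00} on $(G', \Grammar)$ and taking the minimum entry over pairs $(v,v)$ with $v$ an original vertex yields the shortest contractible closed walk in $G$ in $O(N P (n')^3) = O(g^2 m^2 \cdot g^4 m^4 \cdot m^3) = O(g^6 m^9)$ time.

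The main intellectual hurdle—building a small grammar that generates all short contractible walks and only contractible walks—has already been handled by Lemma~\ref{L:CFG}. What remains is compositional: the hop bound $L = 2m$ is exactly what keeps the grammar size at $O(g^2 m^2)$, the edge-subdivision trick is what lets us push length-two edge labels through a framework that expects single-character labels without inflating the vertex count, and the observation that $\Grammar$ generates only walks based at $a$ in $Q$ is what lets us harvest \emph{closed} walks from an all-pairs subroutine. I expect the one arguable subtlety to be verifying this last point carefully, so that no spurious non-closed walk in $G'$ whose label happens to lie in $\Grammar$ can beat the genuine closed ones; once that is in hand the stated running time follows immediately from the Barrett et al.\ bound.
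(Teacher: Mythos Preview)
Your proposal is correct and follows essentially the same route as the paper: label edges via the system of quads, bound the relevant walk length by $2m$ via Corollary~\ref{C:ccw-short}, build the grammar of Lemma~\ref{L:CFG} with $L=2m$, subdivide non-tree edges so that each edge carries at most a single dart of $Q$, and feed the result to Barrett \etal's CFG shortest-path algorithm to obtain the $O(NP\,n'^3)=O(g^6m^9)$ bound. Your write-up is in fact slightly more explicit than the paper's about how to split weights under subdivision and about why only pairs $(v,v)$ at original vertices need be examined.
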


\begin{proof}
We construct the system of quads $Q$ and label the edges of $G$ in $O(m)$ time.  If necessary, subdivide edges of $G$ so that each edge is labeled with either the empty walk or a single dart in $Q$.  After subdivision, $G$ has at most $2m$ edges, and therefore (because $G$ is symmetric) at most $2m$ vertices.

Then we build a context-free grammar $\Grammar$ that generates only contractible walks in $Q$ and generates all non-empty contractible walks in $Q$ of length at most $2m$, as described in Lemma~\ref{L:CFG}.  This grammar has $N = O(g^2m^2)$ nonterminals and $P = O(g^4 m^4)$ productions.  Corollary \ref{C:ccw-short} implies that the shortest contractible walk in $G$ has length at most $2m$, and therefore has a label of length at most $2m$.  It follows that $\Grammar$ generates the label of the shortest contractible walk in $G$.

Finally, we compute the shortest closed walk in $G$ whose label is generated by~$\Grammar$, using the CFG-shortest-path algorithm of Barrett~\etal~\cite{bjm-flcpp-00}, in $O(NP(n')^3) = O(g^6 m^9)$ time.
\end{proof}

\paragraph{Acknowledgements.} The first author would like thank Tillmann Miltzow for asking an annoying question that led to this work, and to apologize for still being unable to answer it.

\newpage	
\bibliographystyle{newuser}
\bibliography{topology,compgeom,optimization,algorithms,jeffe,misc}

\def\burl#1{$\langle$\url{#1}$\rangle$}
\begin{thebibliography}{10}

\bibitem{a-tikl-28}
James~W. Alexander.
\newblock Topological invariants of knots and links.
\newblock \emph{Trans. Amer. Math. Soc.} 30(2):275--306, 1928.

\bibitem{atz-itspd-03}
Lars Arge, Laura Toma, and Norbert Zeh.
\newblock {I/O}-efficient topological sorting of planar {DAGs}.
\newblock \emph{Proc. 15th Ann. ACM Aymp. Parallel Algorithms Arch.}, 85--93,
  2003.

\bibitem{bjm-flcpp-00}
Chris Barrett, Riko Jacob, and Madhav Marathe.
\newblock Formal-language-constrained path problems.
\newblock \emph{SIAM J. Comput.} 30(3):809--837, 2000.

\bibitem{bt-lspdw-09}
Phillip~G. Bradford and David~A. Thomas.
\newblock Labeled shortest paths in digraphs with negative and positive edge
  weights.
\newblock \emph{RAIRO-Theor. Inf. Appl.} 43(3):567--583, 2009.

\bibitem{b-cccda-94}
Jonathan~W. Brandt.
\newblock Convergence and continuity criteria for discrete approximations of
  the continuous planar skeletons.
\newblock \emph{CVGIP: Image Understanding} 59(1):116--124, 1994.

\bibitem{c-fscss-10}
Sergio Cabello.
\newblock Finding shortest contractible and shortest separating cycles in
  embedded graphs.
\newblock \emph{ACM Trans. Algorithms} 6(2):24:1--24:18, 2010.

\bibitem{multishort}
Sergio Cabello, Erin~W. Chambers, and Jeff Erickson.
\newblock Multiple-source shortest paths in embedded graphs.
\newblock \emph{SIAM J. Comput.} 42(4):1542--1571, 2013.
\newblock arXiv:\href{http://arxiv.org/abs/1202.0314}{1202.0314}.

\bibitem{ccl-fsncd-10}
Sergio Cabello, Éric {Colin de Verdière}, and Francis Lazarus.
\newblock Finding shortest non-trivial cycles in directed graphs on surfaces.
\newblock \emph{Proc. 26th Ann. Symp. Comput. Geom.}, 156--165, 2010.

\bibitem{ccl-fctpe-11}
Sergio Cabello, Éric {Colin de Verdière}, and Francis Lazarus.
\newblock Finding cycles with topological properties in embedded graphs.
\newblock \emph{SIAM J. Discrete Math.} 25:1600--1614, 2011.

\bibitem{ccl-fsncd-16}
Sergio Cabello, Éric {Colin de Verdière}, and Francis Lazarus.
\newblock Finding shortest non-trivial cycles in directed graphs on surfaces.
\newblock \emph{J. Comput. Geom.} 7(1):123--148, 2016.

\bibitem{tight}
Sergio Cabello, Matt DeVos, Jeff Erickson, and Bojan Mohar.
\newblock Finding one tight cycle.
\newblock \emph{ACM Trans. Algorithms} 6(4):article 61, 2010.

\bibitem{cm-fsnsn-07}
Sergio Cabello and Bojan Mohar.
\newblock Finding shortest non-separating and non-contractible cycles for
  topologically embedded graphs.
\newblock \emph{Discrete Comput. Geom.} 37:213--235, 2007.

\bibitem{splitting}
Erin~W. Chambers, Éric {Colin de Verdière}, Jeff Erickson, Francis Lazarus,
  and Kim Whittlesey.
\newblock Splitting (complicated) surfaces is hard.
\newblock \emph{Comput. Geom. Theory Appl.} 41(1--2):94--110, 2008.

\bibitem{surfcut}
Erin~W. Chambers, Jeff Erickson, and Amir Nayyeri.
\newblock Minimum cuts and shortest homologous cycles.
\newblock \emph{Proc. 25th Ann. Symp. Comput. Geom.}, 377--385, 2009.

\bibitem{untangle}
Hsien-Chih Chang, Jeff Erickson, Arnaud de~Mesmay, David Letscher, Saul
  Schleimer, Eric Sedgwick, Dylan Thurston, and Stephan Tillmann.
\newblock Tightening curves on surfaces via local moves.
\newblock \emph{Proc. 29th Ann. ACM-SIAM Symp. Discrete Algorithms}, 121--135,
  2017.

\bibitem{weak}
Hsien-Chih Chang, Jeff Erickson, and Chao Xu.
\newblock Detecting weakly simple polygons.
\newblock \emph{Proc. 26th ACM-SIAM Symp. Discrete Algorithms}, 1655--1670,
  2015.

\bibitem{cdehv-dcatd-08}
Victor Chepoi, Feodor Dragan, Bertrand Estellon, Michel Habib, and Yann Vaxès.
\newblock Diameters, centers, and approximating trees of {$\delta$}-hyperbolic
  geodesic spaces and graphs.
\newblock \emph{Proc. 24th Ann. Symp. Comput. Geom.}, 59--68, 2008.

\bibitem{cm-spadc-93}
Edith Cohen and Nimrod Megiddo.
\newblock Strongly polynomial-time and {NC} algorithms for detecting cycles in
  periodic graphs.
\newblock \emph{J. Assoc. Comput. Mach.} 40(4):791--830, 1993.

\bibitem{octagons}
Éric {Colin de Verdière} and Jeff Erickson.
\newblock Tightening non-simple paths and cycles on surfaces.
\newblock \emph{SIAM J. Comput.} 39(8):3784--3813, 2010.

\bibitem{d-uudg-11}
Max Dehn.
\newblock Über unendliche diskontinuierliche {Gruppen}.
\newblock \emph{Math. Ann.} 71(1):116--144, 1911.

\bibitem{d-tkzf-12}
Max Dehn.
\newblock Transformation der {Kurven} auf zweiseitigen {Flächen}.
\newblock \emph{Math. Ann.} 72(3):413--421, 1912.

\bibitem{d-pgtt-87}
Max Dehn.
\newblock \emph{Papers on Group Theory and Topology}.
\newblock Springer, 1987.
\newblock Translated by John Stillwell.

\bibitem{dl-cginc-17}
Vincent Despré and Francis Lazarus.
\newblock Computing the geometric intersection number of curves.
\newblock \emph{Proc. 33rd Int. Symp. Comput. Geom.}, 35:1--35:15, 2017.
  Leibniz Int. Proc. Informatics~77.
\newblock arXiv:\href{http://arxiv.org/abs/1511.09327}{1511.09327}.

\bibitem{dg-tcs-99}
Tamal~K. Dey and Sumanta Guha.
\newblock Transforming curves on surfaces.
\newblock \emph{J. Comput. System Sci.} 58:297--325, 1999.

\bibitem{eh-cti-10}
Herbert Edelsbrunner and John~L. Harer.
\newblock \emph{Computational Topology: An Introduction}.
\newblock Amer. Math. Soc., 2010.

\bibitem{e-dgteg-03}
David Eppstein.
\newblock Dynamic generators of topologically embedded graphs.
\newblock \emph{Proc. 14th Ann. ACM-SIAM Symp. Discrete Algorithms}, 599--608,
  2003.
\newblock arXiv:\href{http://arxiv.org/abs/cs.DS/0207082}{cs.DS/0207082}.

\bibitem{e-c2mi-66}
David B.~A. Epstein.
\newblock Curves on 2-manifolds and isotopies.
\newblock \emph{Acta Mathematica} 115:83--107, 1966.

\bibitem{wobble}
Jeff Erickson.
\newblock Shortest non-trivial cycles in directed surface graphs.
\newblock \emph{Proc. 27th Ann. Symp. Comput. Geom.}, 236--243, 2011.

\bibitem{holiest}
Jeff Erickson, Kyle Fox, and Luvsandondov Lkhamsuren.
\newblock Holiest minimum-cost paths and flows in surface graphs.
\newblock \emph{Proc. 50th Ann. ACM Symp. Theory Comput.}, 1319--1332, 2018.
\newblock arXiv:\href{http://arxiv.org/abs/1804.01045}{1804.01045}.

\bibitem{schema}
Jeff Erickson and Sariel Har-Peled.
\newblock Optimally cutting a surface into a disk.
\newblock \emph{Discrete Comput. Geom.} 31:37--59, 2004.

\bibitem{homcover}
Jeff Erickson and Amir Nayyeri.
\newblock Minimum cuts and shortest non-separating cycles via homology covers.
\newblock \emph{Proc. 22nd Ann. ACM-SIAM Symp. Discrete Algorithms},
  1166--1176, 2011.

\bibitem{en-mcsnc-11}
Jeff Erickson and Amir Nayyeri.
\newblock Minimum cuts and shortest non-separating cycles via homology covers.
\newblock \emph{Proc. 22nd Ann. ACM-SIAM Symp. Discrete Algorithms},
  1166--1176, 2011.

\bibitem{ccwx}
Jeff Erickson and Yipu Wang.
\newblock Topologically trivial closed walks in directed surface graphs.
\newblock \emph{Proc. 33rd Int. Symp. Comput. Geom.}, 34:1--34:17, 2019.
  Leibniz International Proceedings in Informatics 129.

\bibitem{ew-tcsr-13}
Jeff Erickson and Kim Whittlesey.
\newblock Transforming curves on surfaces redux.
\newblock \emph{Proc. 24th Ann. ACM-SIAM Symp. Discrete Algorithms},
  1646--1655, 2013.

\bibitem{dehn}
Jeff Erickson and Kim Whittlesey.
\newblock Transforming curves on surfaces redux.
\newblock \emph{Proc. 24th Ann. ACM-SIAM Symp. Discrete Algorithms},
  1646--1655, 2013.

\bibitem{fp-gffge-87}
William~J. Floyd and Steven~J. Plotnick.
\newblock Growth functions on {Fuchsian} groups and the {Euler} characteristic.
\newblock \emph{Invent. Math.} 88(1):1--29, 1987.
\newblock \burl{http://eudml.org/doc/143441}.

\bibitem{f-sntcd-13}
Kyle Fox.
\newblock Shortest non-trivial cycles in directed and undirected surface
  graphs.
\newblock \emph{Proc. 24th Ann. ACM-SIAM Symp. Discrete Algorithms}, 352--364,
  2013.
\newblock arXiv:\href{http://arxiv.org/abs/1111.6990}{1111.6990}.

\bibitem{gs-sctag-90}
Steve~M. Gersten and Hamish~B. Short.
\newblock Small cancellation theory and automatic groups.
\newblock \emph{Invent. Math.} 102:305--334, 1990.

\bibitem{g-gsh-10}
Peter Giblin.
\newblock \emph{Graphs, Surfaces and Homology}, 3rd edition.
\newblock Cambridge Univ. Press, 2010.

\bibitem{gh-prges-97}
Rostislav Grigorichuk and Pierre de~la Harpe.
\newblock On problems related to growth, entropy and spectrum in group theory.
\newblock \emph{J. Dynam. Control Systems} 3(1):51--89, 1997.

\bibitem{g-hg-87}
Mikhail Gromov.
\newblock Hyperbolic groups.
\newblock \emph{Essays in Group Theory}, 75--265, 1987. Math. Sci. Res. Inst.
  Pub.~8, Springer-Verlag.

\bibitem{hs-ics-85}
Joel Hass and Peter Scott.
\newblock Intersections of curves on surfaces.
\newblock \emph{Israel J. Math.} 51:90--120, 1985.

\bibitem{h-at-02}
Allen Hatcher.
\newblock \emph{Algebraic Topology}.
\newblock Cambridge Univ. Press, 2002.
\newblock \burl{http://www.math.cornell.edu/~hatcher/AT/ATpage.html}.

\bibitem{is-scpzs-90}
Kazuo Iwano and Kenneth Steiglitz.
\newblock A semiring on convex polygons and zero-sum cycle problems.
\newblock \emph{SIAM J. Comput.} 19(5):883--901, 1990.

\bibitem{k-lpapd1-93}
Ming-Yang Kao.
\newblock Linear-processor {NC} algorithms for planar directed graphs {I}:
  {Strongly} connected components.
\newblock \emph{SIAM J. Comput.} 22(3):431--459, 1993.

\bibitem{kk-totcb-93}
Ming-Yang Kao and Philip~N. Klein.
\newblock Toward overcoming the transitive-closure bottleneck: {Efficient}
  parallel algorithms for planar digraphs.
\newblock \emph{J. Comput. Syst. Sci.} 47(3):459--500, 1993.

\bibitem{ks-lrgop-89}
Ming-Yang Kao and Gregory~E. Shannon.
\newblock Local reorientation, global order, and planar topology.
\newblock \emph{Proc. 21st Ann. ACM Symp. Theory Comput.}, 286--296, 1989.

\bibitem{ks-lpapd2-93}
Ming-Yang Kao and Gregory~E. Shannon.
\newblock Linear-processor {NC} algorithms for planar directed graphs {II}:
  {Directed} spanning trees.
\newblock \emph{SIAM J. Comput.} 22(3):460--481, 1993.

\bibitem{ks-dcdgp-88}
S.~Rao Kosaraju and Gregory~F. Sullivan.
\newblock Detecting cycles in dynamic graphs in polynomial time (preliminary
  version).
\newblock \emph{Proc. 20th ACM Symp. Theory. Comput.}, 398--406, 1988.

\bibitem{k-csnco-06}
Martin Kutz.
\newblock Computing shortest non-trivial cycles on orientable surfaces of
  bounded genus in almost linear time.
\newblock \emph{Proc. 22nd Ann. Symp. Comput. Geom.}, 430--438, 2006.

\bibitem{lr-hts-12}
Francis Lazarus and Julien Rivaud.
\newblock On the homotopy test on surfaces.
\newblock \emph{Proc. 53rd Ann. IEEE Symp. Foundations Comput. Sci.}, 440--449,
  2012.
\newblock arXiv:\href{http://arxiv.org/abs/1110.4573}{1110.4573}.

\bibitem{ls-cgt-01}
Roger~C. Lyndon and Paul~E. Schupp.
\newblock \emph{Combinatorial Group Theory}.
\newblock Classics in Mathematics. Springer-Verlag, 2001.

\bibitem{m-ncfg-68}
John Milnor.
\newblock A note on curvature and the fundamental group.
\newblock \emph{J. Diff. Geom.} 2(1):1--7, 1968.

\bibitem{mt-gs-01}
Bojan Mohar and Carsten Thomassen.
\newblock \emph{Graphs on Surfaces}.
\newblock Johns Hopkins Univ. Press, 2001.

\bibitem{m-grbht-97}
Judith~Flagg Moran.
\newblock The growth rate and balance of homogeneous tilings in the hyperbolic
  plane.
\newblock \emph{Discrete Math.} 173(1--3):151--186, 1997.

\bibitem{mnnw-mdpgo-18}
Shay Mozes, Kirill Nikolaev, Yahav Nussbaum, and Oren Weimann.
\newblock Minimum cut of directed planar graphs in {$O(n\log\log n)$} time.
\newblock \emph{Proc. 29th Ann. ACM-SIAM Symp. Discrete Algorithms}, 477--494,
  2018.
\newblock arXiv:\href{http://arxiv.org/abs/1512.02068}{1512.02068}.

\bibitem{ms-gtecl-83}
David~E. Muller and Paul~E. Schupp.
\newblock Groups, the theory of ends, and context-free languages.
\newblock \emph{J. Comput. System Sci.} 26(3):295--310, 1983.

\bibitem{m-ubiep-1865}
August~F. Möbius.
\newblock Über der {Bestimmung} des {Inhaltes} eines {Polyëders}.
\newblock \emph{Ber. Sächs. Akad. Wiss. Leipzig, Math.-Phys. Kl.} 17:31--68,
  1865.
\newblock \textit{Gesammelte Werke} 2:473--512, Liepzig, 1886.

\bibitem{n-csgs-01}
Max Neumann-Coto.
\newblock A characterization of shortest geodesics on surfaces.
\newblock \emph{Algebr. Geom. Topol.} 1:349--368, 2001.
\newblock arXiv:\href{http://arxiv.org/abs/math.GT/0106200}{math.GT/0106200}.

\bibitem{p-nhcpp-79}
Ján Plesník.
\newblock The {NP}-completeness of the {Hamiltonian} cycle problem in planar
  digraphs with degree bound two.
\newblock \emph{Inf. Proc. Letters} 8(4):199--201, 1979.

\bibitem{ry-fcppg-94}
Vijaya Ramachandran and Honghua Yang.
\newblock Finding the closed partition of a planar graph.
\newblock \emph{Algorithmica} 11(5):443--468, 1994.

\bibitem{s-cbcgg-76}
L.~B. Smikun.
\newblock Connection between context-free groups and groups with decidable
  problems of automata equivalence.
\newblock \emph{Cybernetics} 12(5):687--691, 1976.
\newblock Translated from \emph{Kibernetika (Kiev)} (5):33--37, 1976.

\bibitem{t-egnsn-90}
Carsten Thomassen.
\newblock Embeddings of graphs with no short noncontractible cycles.
\newblock \emph{J. Comb. Theory Ser. B} 48(2):155--177, 1990.

\bibitem{y-gmdt-90}
Mihalis Yannakakis.
\newblock Graph-theoretic methods in database theory.
\newblock \emph{Proc. 9th ACM SIGACT-SIGMOD-SIGART Symp. Principles Database
  Syst.}, 230--242, 1990.

\end{thebibliography}

\end{document}